\documentclass[letterpaper,11pt]{article}

\usepackage{microtype}
\usepackage[letterpaper,margin=0.97in]{geometry} \usepackage[numbers,sort&compress]{natbib} \usepackage[table]{xcolor}
\usepackage{amsmath} \usepackage{amssymb} \usepackage{enumitem}
\usepackage{textgreek}
\usepackage{graphicx}
\usepackage{framed}
\usepackage{booktabs}
\usepackage{tikz-cd}
\usepackage[framemethod=tikz]{mdframed}
\usetikzlibrary{arrows.meta}
\usepackage{amsthm}
\usepackage{mathtools}
\usepackage{xspace}
\usepackage{multirow}
\usepackage{cellspace}
\usepackage{array}
\usepackage{tabularx}
\usepackage{algorithm}
\usepackage{algpseudocode}
\usepackage{longtable}
\usepackage{thm-restate}
\makeatletter
\IfFormatAtLeastTF{2026-06-01}{\renewcommand\thmt@autorefsetup{\@xa\def\csname\thmt@envname autorefname\@xa\endcsname\@xa{\thmt@thmname}}}{}
\makeatother
\usepackage{hyperref}
\usepackage{orcidlink}
\usepackage{doi}
\usepackage[textsize=tiny]{todonotes}
\usepackage[capitalize, nameinlink]{cleveref}

\makeatletter
\renewcommand{\theHALG@line}{\thealgorithm.\arabic{ALG@line}}
\makeatother

\Crefname{remark}{Remark}{Remarks}
\Crefname{observation}{Observation}{Observations}
\Crefname{claim}{Claim}{Claims}

\theoremstyle{plain}
\newtheorem{theorem}{Theorem}[section]
\newtheorem{lemma}[theorem]{Lemma}

\newtheorem{corollary}[theorem]{Corollary}
\newtheorem{claim}[theorem]{Claim}

\theoremstyle{definition}
\newtheorem{definition}[theorem]{Definition}

\theoremstyle{plain}

\newcounter{open}

\theoremstyle{remark}



\newcommand{\namedref}[2]{\hyperref[#2]{#1~\ref*{#2}}}

\newcommand{\eps}{\varepsilon}
\DeclareMathOperator{\poly}{poly}
\newcommand{\LOCAL}{\ensuremath{\mathsf{LOCAL}}\xspace}
\newcommand{\CONGEST}{\ensuremath{\mathsf{CONGEST}}\xspace}
\newcommand{\set}[1]{\left\{#1\right\}}
\newcommand{\calS}{\mathcal{S}}

\newcommand{\calC}{\ensuremath{\mathcal{C}}}
\newcommand{\calA}{\ensuremath{\mathcal{A}}}

\definecolor{darkgreen}{rgb}{0,0.5,0}
\definecolor{darkred}{rgb}{0.4,0,0}
\hypersetup{
	colorlinks=true,
	linkcolor=darkred,
	citecolor=darkgreen,
	filecolor=black,
	urlcolor=[rgb]{0,0.1,0.5},
	pdftitle={Greedy-Like Defective Coloring: Distributed Algorithms and Applications},
	pdfauthor={Marc Fuchs, Fabian Kuhn}
}

\title{Greedy-Like Defective Coloring:\\ Distributed Algorithms and Applications}

\date{}

\author{
   Marc Fuchs \orcidlink{0000-0003-2272-4483} \\
   \small{University of Freiburg} \\
   \small{marc.fuchs@cs.uni-freiburg.de}
   \and
   Fabian Kuhn \orcidlink{0000-0002-1025-5037} \\
   \small{University of Freiburg} \\
   \small{kuhn@cs.uni-freiburg.de}
}

\begin{document}
\maketitle
\begin{abstract}
    A $d$-defective $c$-coloring of a graph $G=(V,E)$ is a coloring of the nodes $V$ with $c$ colors such that every node has at most $d$ neighbors of the same color. Distributed algorithms for computing different variants of defective coloring are at the core of most deterministic state-of-the-art distributed coloring algorithms, and they are also an important tool in many other distributed graph algorithms. In several cases, the overall complexity could be improved if some version of defective coloring could be solved more efficiently.

    Presently, the primary challenge in understanding the distributed complexity of defective colorings is to understand which defective coloring problems can be solved in time $O(\log^{\ast} n)$ and for which such problems one needs $\Omega(\log n)$ rounds (in $n$-node graphs if the maximum degree $\Delta$ is treated as a constant). Already Barenboim and Elkin [STOC '09] showed that for any integer $p\geq 1$, a simple greedy-like algorithm that makes two passes over the nodes can color with $p^2$ colors and defect $\lfloor \Delta/p\rfloor$ in $O(\Delta+\log^{\ast} n)$ rounds. Up to now, this algorithm provides the best tradeoff between defect and number of colors for any known algorithm with complexity $O(\log^{\ast} n)$ in bounded-degree graphs.

    In this paper, we further investigate the power of this greedy-like two-pass algorithm for computing defective colorings. We first show that the algorithm can be generalized to the \emph{list defective coloring} problem as defined by Fuchs and Kuhn in [DISC '23]. As a consequence of this, we in particular get an alternative algorithm for computing a (proper) $(\Delta+1)$-coloring of a graph in $\tilde{O}(\sqrt{\Delta}) + O(\log^{\ast} n)$ rounds in the CONGEST model. In the second part of the paper, we study a generalized version of the greedy-like two-pass algorithm for computing standard defective colorings. For many fixed numbers of colors $c$, we improve the best known defect by a constant factor within this round-complexity regime. We also show that, for any fixed $c\geq 1$, the generalized two-pass algorithm with a fixed common candidate family and a fixed global deterministic tie-breaking rule in each phase (based only on local neighbor counts) cannot guarantee a $c$-coloring with defect less than $(1-o(1))\cdot\Delta/\sqrt{c}$ as $\Delta\to\infty$. Here, global means that every vertex uses the same rule in each phase.
\end{abstract}

\newpage

\tableofcontents
\newpage

\section{Introduction and related work}
\label{sec:intro}

Distributed coloring is arguably the most studied problem in the area of distributed graph algorithms. The task is to color the nodes (or edges) of some $n$-node graph $G=(V,E)$, which also defines the network topology. The nodes of $G$ have unique $O(\log n)$-bit identifiers, and they can exchange messages over the edges of $G$ in synchronous rounds. If the messages can be of arbitrary size, this setting is known as the \LOCAL model~\cite{Linial1992,Peleg2000} and if the message size is restricted to $O(\log n)$ bits, it is known as the \CONGEST model~\cite{Peleg2000}. In its standard form, the coloring problem asks for a \emph{proper} coloring, where no two adjacent nodes are assigned the same color. The number of allowed colors is usually some function of the maximum degree $\Delta$. Of particular interest in the distributed setting is the problem of coloring with $\Delta+1$ colors, which is the number of colors used by a sequential greedy algorithm. The round complexity of distributed $(\Delta+1)$-coloring is typically either studied as a function of $n$ or (primarily) as a function of $\Delta$. As a function of $n$, the best known complexities in the \LOCAL model are $\tilde{O}(\log^{5/3} n)$ for deterministic algorithms~\cite{GhaffariGrunauFOCS24} and $\tilde{O}(\log^{5/3}\log n)$ for randomized algorithms~\cite{GhaffariGrunauFOCS24,ChangLP18}.\footnote{Throughout the paper, we use the notation $\tilde{O}(\cdot)$ to hide polylogarithmic factors in the argument, i.e., $\tilde{O}(x) = x\cdot\poly\log x$.} The best known complexities in the \CONGEST model are $O(\log^2\Delta\cdot\log n)$ for deterministic algorithms~\cite{GhaffariKuhn21} and $O(\log^3\log n)$ for randomized algorithms~\cite{GhaffariKuhn21,HalldorssonKMT21}. The best known lower bound is $\Omega(\log^* n)$, which holds in the \LOCAL model and even for randomized algorithms and if $\Delta=2$~\cite{Linial1992,Naor1991}. When considering the problem's complexity as a function of $\Delta$, a $\log^* n$ dependency therefore becomes unavoidable. In this regime, the best known upper bounds are $O(\sqrt{\Delta\log\Delta}+\log^* n)$ in the \LOCAL~\cite{fraigniaud16local,BarenboimEG18,MausT20} and $O(\sqrt{\Delta}\cdot\log^2\Delta\cdot\log^6\log\Delta + \log^* n)$ in the \CONGEST model~\cite{FK23}. Furthermore, for problems with a deterministic complexity at most $f(\Delta) \cdot \log^*n$ (for some function $f(\cdot)$), it is known that randomization does not help when $\Delta$ is constant~\cite{chang16exponential}.

\paragraph*{Defective colorings.} \emph{Defective} colorings are natural generalizations of proper colorings, which have become instrumental tools in many modern distributed coloring algorithms and also more generally in distributed graph algorithms. For integers $d\geq 0$ and $c\geq 1$, a $d$-defective $c$-coloring of the nodes of a graph $G=(V,E)$ is a coloring of $V$ with at most $c$ colors such that every node $v\in V$ has at most $d$ neighbors of the same color. In the context of distributed coloring algorithms, defective colorings can in particular be used to design recursive distributed coloring algorithms. A $d$-defective $c$-coloring partitions a graph into $c$ parts of maximum degree $d$. As the complexity of many of the fastest distributed coloring algorithms quite strongly depends on the maximum degree, each of the $c$ parts can be colored more efficiently (in parallel or in some cases also sequentially). In \cite{lovasz66}, Lov\'asz showed that every graph of maximum degree $\Delta$ has a defective $c$-coloring (for any $c\geq 1$) with defect $d=\lfloor \Delta/c\rfloor$.\footnote{Note that this is tight if the graph is a complete graph with $\Delta+1$ nodes.} If defective colorings with these guarantees could be computed efficiently in the distributed setting, even the most naive implementation of the above recursive coloring strategy would directly lead to efficient algorithms to color with $O(\Delta)$ colors. 

\paragraph*{Distributed defective coloring algorithms.} 
Defective colorings were introduced to distributed computing by Barenboim and Elkin~\cite{BarenboimE09} and by Kuhn~\cite{Kuhn2009}. Both used defective colorings to improve the complexity of $(\Delta+1)$-coloring from $O(\Delta\cdot\log\Delta + \log^* n)$ (cf.\ \cite{SzegedyV93,KuhnW06}) to $O(\Delta+\log^* n)$. These works show that, for maximum degree $\Delta\geq 1$ and any integer $0\leq d\leq\Delta$, one can compute a $d$-defective coloring with $O\big(\big(\frac{\Delta}{d+1}\big)^2\big)$ colors in time $O(f(\Delta) + \log^* n)$. The algorithm of \cite{Kuhn2009} is an adaptation of the proper $O(\Delta^2)$-coloring algorithm of Linial~\cite{Linial1992} and has no $\Delta$-dependency at all, i.e., it achieves $f(\Delta)=O(1)$. The algorithm of \cite{BarenboimE09} is slightly slower, it however uses a smaller number of colors. The algorithm of \cite{BarenboimE09} is based on a simple greedy-like two-phase algorithm, which we discuss in more detail below. In the more than 15 years since \cite{BarenboimE09,Kuhn2009} were published, there has been no improvement on the achievable defective colorings if the $n$-dependency of the round complexity is restricted to be $O(\log^* n)$ (except for the special case of defective 3-colorings~\cite{BalliuHLOS19}). The only known distributed defective coloring algorithms that achieve $d$-defective colorings with $O(\Delta/d)$ colors for general $d$ require deterministic time $\poly\log n$ and randomized time $\poly\log\log n$. Bounds of this kind can be achieved by phrasing defective colorings as an instance of the Lov\'asz Local Lemma (LLL) and by using generic distributed LLL algorithms to compute the colorings~\cite{MoserT10,ChungPS14,FischerG17,GhaffariHK18}. It is known that when using generic LLL algorithms, the deterministic complexity must be at least $\Omega(\log n)$ and the randomized complexity must be at least $\Omega(\log\log n)$~\cite{Brandt2016}. Very recently, a first efficient distributed algorithm that achieves the existentially optimal trade-off between defect and number of colors was presented in \cite{potentialproblems}. The paper gives a deterministic $\poly(\Delta\cdot\log n)$-round algorithm to compute a $d$-defective $c$-coloring whenever $c\cdot(d+1)>\Delta$.

Defective colorings provide a decomposition of a graph into several graphs of lower degree on which a given graph problem (e.g., some proper coloring problem) can then be solved more efficiently. Defective colorings are thus naturally most interesting if we focus on the complexity as a function of $\Delta$ and if we thus restrict the $n$-dependency to $O(\log^* n)$. Understanding which distributed defective coloring problems can be computed in time $f(\Delta)\cdot\log^* n$ for some function $f(\cdot)$ and for which problems, we need at least $\Omega(\log_\Delta n)$ rounds deterministically (and at least $\Omega(\log_\Delta\log n)$ rounds with randomization) is therefore currently the most important open problem in the context of distributed defective coloring and possibly even in the context of distributed coloring more generally. It is known that $d$-defective $2$-coloring requires $\Omega(\log_\Delta n)$ deterministic rounds even if $d=\Delta-2$~\cite{BalliuHLOS19} and that for $d$-defective $3$-coloring, one needs $\Omega(\log_\Delta n)$ rounds whenever $d\leq (\Delta-3)/2$. Apart from those hardness results, it is only known that $d$-defective $c$-coloring needs $\Omega(\log_\Delta n)$ rounds, whenever $c\cdot(d+1)\leq\Delta$~\cite{balliu2021hideandseek}. In \cite{BarenboimE09}, it is shown that for any positive integer $p$, one can compute a $\lfloor\Delta/p\rfloor$-defective coloring with $p^2$ colors in time $O(\Delta+\log^* n)$. Despite intensive research on distributed coloring problems, for general graphs, we do not know any $O(f(\Delta)+\log^{\ast} n)$-round (or $O(f(\Delta)\cdot\log^* n)$-round) algorithm with a better trade-off between defect and number of colors.

\paragraph*{Arbdefective Colorings.} Because computing defective colorings with a good defect to number of colors ratio appears to be a hard problem, researchers have considered a relaxed variant known as \emph{arbdefective coloring}. In a coloring with arbdefect $d$, each color class forms a graph of arboricity at most $d$. Arbdefective colorings were introduced by Barenboim and Elkin in \cite{BarenboimE11journal} and they have found many applications in the context of distributed coloring algorithms and more generally in the context of distributed graph algorithms (e.g., \cite{BarenboimE11journal,barenboim16sublinear,fraigniaud16local,BarenboimEG18,Kuhn20,GhaffariKuhn21,FaourGGKR25,GGHIR23,GhaffariGrunauFOCS24}). In this work, we use a slightly different definition of those kinds of colorings: a $d$-arbdefective $c$-coloring is a node coloring with $c$ colors
together with an edge orientation such that each node has at most
$d$ outneighbors of the same color.\footnote{To see why those two variants are similar, assume each color class has arboricity at most $d$, i.e., one can partition the edges of each color class into at most $d$ (spanning) forests, and orient each edge towards its root. This ensures that within a color class each node has at most $d$ parent pointers and hence an outdegree of at most $d$.} It is known that a $d$-arbdefective $c$-coloring can be computed in time $f(\Delta)\cdot\log^* n$ if and only if $c\cdot(d+1)>\Delta$~\cite{balliu2021hideandseek}. In this case, the best known algorithm requires $\sqrt{\frac{\Delta}{d+1}}\cdot\poly\log\Delta + O(\log^* n)$~\cite{FK23}. \newpage

\paragraph*{Graphs of Bounded Neighborhood Independence.} The only family of graphs for which we know significantly better defective coloring algorithms are graphs of bounded neighborhood independence, i.e., graphs where every node can only have constant-sized sets of pairwise non-adjacent neighbors. It has been shown in \cite{BarenboimE11} that in this case, a $d$-arbdefective coloring with $c$ colors directly also provides an $O(d)$-defective coloring with $c$ colors. Together with some other advances (in particular on a generalized list version of defective colorings, see below), this fact led to particularly efficient $(\Delta+1)$-coloring algorithms in graphs of bounded neighborhood independence~\cite{Kuhn20,BalliuKO20,BalliuBKO22,FuchsK25}. In \cite{FuchsK25}, it is shown that in such graphs, a $(\Delta+1)$-coloring can be computed in $\big(\log\Delta\big)^{O\big(\frac{\log\log\Delta}{\log\log\log\Delta}\big)} + O(\log^* n)$ \LOCAL rounds, i.e., in time quasi-polylogarithmic in $\Delta$. An important subset of the family of graphs of bounded neighborhood independence are line graphs of graphs and of bounded-rank hypergraphs. For the special case of $(2\Delta-1)$-edge coloring, there even is an algorithm that solves the problem in $O(\log^{12}\Delta +\log^* n)$ rounds~\cite{BalliuBKO22}.

\paragraph*{List Defective Colorings.} The state-of-the-art distributed coloring algorithms all rely on the explicit use of list colorings: In a list coloring instance, every node $v$ obtains a list $L_v$ of colors as input and as output, each node $v$ has to choose some color from $L_v$ such that the graph is properly colored. List colorings in particular allow to color a graph in several phases, where each phase extends a given partial proper coloring of the graph. In order to effectively use defective colorings in the context of list colorings, we should be able to use the coloring to divide a given list coloring instance into several (ideally independent) list coloring instances of smaller degree. If we divide the nodes into subsets of approximately equal smaller degree, we also need to divide the colors into the same number of parts in such a way that all lists are divided approximately evenly. Without global communication, it is clearly not possible to partition the colors such that the induced partition of each of the lists is close to a uniform partition.\footnote{Even with global communication, this is only possible if the lists are sufficiently large (as a function of $n$).} One can however generalize the idea in the following way. Assume that all the lists $L_v$ consist of colors from some space of size $C$ (say from the set $\set{1,\dots,C}$). One can then partition the global color space into $p>1$ parts of size $\approx C/p$. This induces a partition of the lists that is not uniform. As a consequence, each node has to select one of the $p$ remaining color spaces in such a way that the number of conflicting neighbors decreases at a rate that is comparable to the decrease of the list size. This \emph{color space reduction} idea has been introduced in \cite{Kuhn20} and it has since then been used in \cite{BalliuKO20,BalliuBKO22,FK23,GhaffariKuhn21,FuchsK25}. In \cite{FK23}, it is shown that the core task that needs to be solved can be phrased as a natural list extension of defective coloring. Formally, in \emph{list defective coloring}, each node $v$ obtains a color list $L_v$ together with a defect function $d_v:L_v \to \mathbb{N}_0$ that assigns non-negative integer values to the colors in $L_v$. The coloring has to assign a color $x_v\in L_v$ to each node $v$ such that the number of neighboring nodes $u$ that get assigned color $x_u=x_v$ is at most $d_v(x_v)$.

\paragraph*{The Two-Phase Algorithm for Computing Defective Colorings.}
A particularly simple and in our opinion very natural (distributed) algorithm to compute defective colorings is a \emph{two-phase greedy-like} algorithm that was first described by Barenboim and Elkin in \cite{BarenboimE09}. Assume that we are given a proper coloring of the nodes $V$ of a graph $G=(V,E)$ with $q$ colors. We make two passes over the nodes. In both passes, nodes of the same color are processed simultaneously. In the first pass, we process the nodes in the order given by increasing colors and in the second pass, we process the nodes in the opposite order (i.e., by decreasing colors). The algorithm has as integer parameter $p\geq 1$. In both passes, each node $v\in V$ picks a color from $\set{0,\dots,p-1}$. The color $x_{v,1}$ of the first pass is chosen as a color that is chosen as the color $x_{u,1}$ by the least number of neighbors of smaller input color (i.e., by the least number of previously processed neighbors in pass $1$). Similarly, the color $x_{v,2}$ of the second pass is chosen as a color that is chosen as the color $x_{u,2}$ by the least number of neighbors of larger input color (i.e., by the least number of previously processed neighbors in pass $2$). The final color $x_v$ of $v$ is computed as $x_v=p\cdot x_{v,1} + x_{v,2}\in \set{0,\dots,p^2-1}$. It is not hard to see that each node can have at most $\lfloor \Delta/p\rfloor$ neighbors that choose the same color. This two-pass algorithm has later in particular also been used in \cite{BalliuHLOS19}, where it is in particular shown that one can compute a $d$-defective $3$-coloring in time $O(\Delta+\log^* n)$ as long as $d\geq (2\Delta-4)/3$.

The main goal of the present paper is to understand the power of the two-pass greedy approach to defective coloring problems. For this purpose, we generalize the algorithm in the following way:
\begin{itemize}
    \item In the first pass (where nodes are still processed by increasing input color), every node $v$ chooses a nonempty set $S_v$ of candidate colors from a family $\calS$. It selects a set that minimizes an averaging-based upper bound on the defect that the second pass can guarantee. This bound uses that candidate sets were already chosen by neighbors with smaller input colors. 
    \item In the second pass (where nodes are processed by decreasing input colors), each node $v$ chooses a color $x_v\in S_v$ that minimizes the sum of the number of neighbors with larger input colors that have already chosen $x_v$ and the number of neighbors with smaller input colors whose candidate sets contain $x_v$. This sum bounds the final defect of $v$. For list defective coloring, the candidate sets are nonempty subsets of $L_v$, and both passes include the color-dependent  defects in its calculations.
\end{itemize}

In the following, we will refer to this algorithm as the \textit{Two-Sweep Algorithm}. A formally precise description of the algorithm is given in \Cref{sec:genericTwoSweep,alg:GenericSweep}. We next give an overview of the technical contributions of the paper.

\section{Our contributions}
\label{sec:contributions}

The goal of this paper is to understand the power of the \textit{Two-Sweep Greedy} approach for computing distributed defective colorings. We pursue this goal in two complementary directions.

First, we show that this approach extends beyond ordinary defective coloring and can solve a generalization of list defective coloring known as \emph{oriented list defective coloring}~\cite{FK23full}.  We then show that this more general form of list defective coloring, applied recursively, yields a new and simple deterministic $(\Delta+1)$-coloring algorithm in the \CONGEST model with round complexity $\tilde{O}(\sqrt{\Delta}) + O(\log^* n)$.

Second, we investigate the intrinsic limitations of the same two-sweep paradigm. More concretely, we ask for the smallest possible defect achievable when the number of output colors $C$ is fixed. We show that suitable choices of candidate palettes improve the known constants for many values of $C$, while also proving that the Two-Sweep Algorithm cannot beat the $1/\sqrt{C}$ relative-defect barrier when it uses a fixed common candidate family and fixed deterministic global tie-breaking rule based only on local neighbor counts (\Cref{thm:2sweeplower}).

\subsection{List defective coloring and applications}
\label{sec:contributionLists}

We extend our Two-Sweep Algorithm to solve \emph{oriented list defective coloring (OLDC)} instances. In OLDC (formally defined below), the edges are oriented (possibly in both directions) and the defect of a node is measured only with respect to its outgoing edges. As we will see, we need this more general version of list defective coloring when using it as a building block for our new $(\Delta+1)$-coloring algorithm formalized in \Cref{thm:CONGESTcoloring}.

\begin{definition}\label{def:OLDC}
    Let $G=(V,E)$ be a directed graph and $\mathcal{C}$ be a color space (i.e., a set of available colors). For every node $v\in V$, we are given a nonempty list $\varnothing\neq L_v\subseteq \mathcal{C}$ and a defect function $d_v:L_v\to \mathbb{N}_{0}$. For the given lists and defect functions, a list defective coloring of $G$ is a coloring of the nodes $V$ such that every node $v$ is assigned a color $x \in L_v$ such that $v$ has at most $d_v(x)$ outneighbors of color $x$.
\end{definition}

In the above definition, $\mathbb{N}_{0}$ denotes the set of natural numbers including $0$. If $G$ is an undirected graph and we replace every edge $\set{u,v}$ of $G$ by the two directed edges $(u,v)$ and $(v,u)$, then the oriented list defective coloring gives a standard list defective coloring of $G$. When solving OLDC problems in a distributed way, we assume that the communication graph $G$ itself is an undirected graph (i.e., communication can always happen in both directions) and the edge directions are given as an edge orientation of $G$.

Note that the Two-Sweep Algorithm directly applies to the oriented (list) defective coloring problem. When applying the Two-Sweep Algorithm to an OLDC problem, in both sweeps, each node only considers assignments of color sets (in the first sweep) or colors (in the second sweep) to its outneighbors.

Our generic Two-Sweep Algorithm for OLDC has an integer parameter $p \geq 1$ that controls the size of the sets $S_v$. The algorithm solves a given OLDC instance if at each node $v$, the sum of the allowed defects for the colors $x\in L_v$ is sufficiently large. More concretely, the algorithm requires the lists of each node $v$ to satisfy the following inequality.
\begin{align}\label{eq:SlackInBaseCase}
    \sum_{x \in L_v} (d_v(x) + 1) > \max\left\{p, \frac{|L_v|}{p}\right\} \cdot \beta_v,
\end{align}
where $\beta_v$ denotes the outdegree of node $v$. The following theorem gives our main oriented list defective coloring result. In order to allow a trade-off between the round complexity and the quality of the coloring, the theorem also has a parameter $\eps\in[0,p]$.

\setcounter{theorem}{0}
\begin{restatable}{theorem}{restatefirst}\label{thm:2sweepalgolists}
  Let $G=(V,E)$ be a directed graph that is equipped with a proper $q$-vertex coloring. For every node $v\in V$, assume that $\beta_v$ is the outdegree of $v$. Further, assume that we are given an oriented list defective coloring instance for $G$ with color lists $L_v$ and defect functions $d_v:L_v\to \mathbb{N}_0$ for all $v\in V$. Let $p\geq 1$ be an integer parameter, let $0\leq\eps\leq p$, and assume that
  \[
    \forall v \in V\,:\, \sum_{x\in L_v}(d_v(x)+1) > (1+\eps)\cdot\max\set{p,\frac{|L_v|}{p}}\cdot \beta_v.
  \]
  Then, the Two-Sweep Algorithm solves the given oriented list defective coloring instance deterministically in $O\big(\min\set{q, (p/\eps)^2 + \log^*q}\big)$ \LOCAL rounds.\footnote{We slightly abuse notation and allow $\eps=0$, in which case we assume that $\min\set{q, (p/0)^2+\log^* q}=q$.} The direct implementation exchanges initial colors, palettes of at most $p$ output colors, and final output colors. The fast implementation additionally exchanges intermediate colors during \Cref{lemm:defColorBlackBox}. With the branch selection in \Cref{alg:Sweep2}, all messages have $O(\log q+p\log|\mathcal C|)$ bits. When $\eps>0$, its fast implementation is given in \Cref{alg:Sweep2}; when $\eps=0$, it uses the generic implementation in \Cref{alg:GenericSweep}.
\end{restatable}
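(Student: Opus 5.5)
The plan is to first analyze the generic Two-Sweep Algorithm with $\eps=0$ and then obtain the fast version by replacing the $q$-round sequential sweeps with sweeps over a suitable defective coloring.

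\textbf{Correctness of the generic algorithm.} In the first sweep, nodes are processed by increasing input color, and each node $v$ picks a candidate set $S_v\subseteq L_v$ with $|S_v|\le p$. For a color $x$, let $a_v(x)$ be the number of already processed outneighbors $u$ (those with smaller input color) whose set satisfies $x\in S_u$. Node $v$ chooses $S_v$ so that the second sweep is guaranteed to find a color $x\in S_v$ with
\[
  a_v(x)+b_v(x)\le d_v(x),
\]
where $b_v(x)$ is the number of outneighbors with larger input color that finally choose $x$. Because $v$ is processed last in the second sweep among its smaller-colored neighbors but first among its larger-colored ones, the final defect of $v$ at $x$ is at most $a_v(x)+b_v(x)$. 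Let $\beta_v^{<}$ and $\beta_v^{>}$ be the outdegrees of $v$ towards smaller and larger input colors, respectively. Each smaller outneighbor contributes to $a_v(\cdot)$ on at most $p$ colors, so $\sum_x a_v(x)\le p\,\beta_v^{<}$. In the second sweep, $v$ picks $x\in S_v$ minimizing $(a_v(x)+b_v(x))/(d_v(x)+1)$. Each larger outneighbor adds one to a single $b_v(x)$, so we need
\[
  \sum_{x\in S_v}(d_v(x)+1) > \sum_{x\in S_v}a_v(x)+\beta_v^{>}.
\]
Averaging then yields an $x$ with $a_v(x)+b_v(x)\le d_v(x)$.

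\textbf{Existence of a good $S_v$.} The first sweep must choose $S_v$ such that
\[
  \sum_{x\in S_v}\bigl(d_v(x)+1-a_v(x)\bigr) > \beta_v^{>}.
\]
To show such a set exists, I average over all $p$-subsets of $L_v$ (or take all of $L_v$ if $|L_v|\le p$). A uniformly random $p$-subset has
\[
  \mathbb{E}\Bigl[\sum_{x\in S}(d_v(x)+1)\Bigr]=\frac{p}{|L_v|}\,D,\qquad D:=\sum_{x\in L_v}(d_v(x)+1),
\]
and expected $a$-mass at most $\frac{p}{|L_v|}\,p\,\beta_v^{<}$. So it suffices that $\frac{p}{|L_v|}(D-p\beta_v^{<})>\beta_v^{>}$. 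This follows from
\[
  D > p\,\beta_v^{<}+\frac{|L_v|}{p}\,\beta_v^{>}
\]
together with $\max\{p,|L_v|/p\}\,\beta_v\ge p\beta_v^{<}+(|L_v|/p)\beta_v^{>}$. If $|L_v|<p$, taking $S_v=L_v$ works directly since $D>p\beta_v\ge \sum_x a_v(x)+\beta_v^{>}$. A better choice is the greedy one: pick the $p$ colors maximizing $d_v(x)+1-a_v(x)$, which dominates the average. Processing the sweeps color class by color class gives $O(q)$ rounds. Messages consist of the input color, $S_v$ ($p\log|\mathcal C|$ bits), and the final color.

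\textbf{Fast version with $\eps>0$.} Here I use \Cref{lemm:defColorBlackBox} (presumably a Linial/Kuhn-type defective coloring in $O(1)$ or $O(\log^* q)$ rounds) to compute an oriented defective coloring with $O((p/\eps)^2)$ colors. In it, each node $v$ has at most $\approx \frac{\eps}{p}\beta_v$ (suitably scaled) outneighbors of its own color. The two sweeps are then run over these $O((p/\eps)^2)$ color classes instead of the $q$ input colors. Same-class outneighbors are handled pessimistically: they are counted as contributing to $a_v$ on up to $p$ colors in sweep one and to $b_v$ in sweep two, which costs an extra $O(\eps\beta_v/p)\cdot p=O(\eps)\beta_v$ load. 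The slack factor $(1+\eps)$ absorbs exactly this load. The round count is therefore $O((p/\eps)^2+\log^* q)$, and taking the minimum with the direct $O(q)$ bound gives the stated complexity.

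\textbf{Main obstacle.} The hardest step is this last accounting. I must check that same-class conflicts inflate both the first-sweep mass and the second-sweep count by at most an $\eps$-fraction of $\max\{p,|L_v|/p\}\beta_v$, and calibrate the defect parameter of \Cref{lemm:defColorBlackBox} so that the constants match. The branch selection in \Cref{alg:Sweep2} presumably resolves exactly this.
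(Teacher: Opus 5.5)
Your $\eps=0$ argument is the paper's (\Cref{lemm:VanillaColoring}). Taking the $p$ colors with smallest $k_v(x)-d_v(x)$ and comparing with the $p/|L_v|$-fraction of the total is exactly your random-subset average. Your ratio rule in Phase~II is harmless: any rule that picks a color with $a_v(x)+b_v(x)\le d_v(x)$ whenever one exists works, including the paper's rule of minimizing $k_v(x)+r_v(x)-d_v(x)$. One slip: in the descending second sweep, $v$ acts \emph{after} its larger-colored outneighbors, which is why $b_v(x)$ is known, and \emph{before} its smaller-colored ones, not the other way round.

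For $\eps>0$ your route is genuinely different. The paper never analyzes sweeps over a non-proper coloring. It deletes the edges that are monochromatic under the defective coloring, lowers all defects uniformly by $t_v=\lfloor\eps\beta_v/p\rfloor$, drops colors whose defect becomes negative, and checks in one line that
\[
  \sum_{x\in L_v'}(d_v'(x)+1)\;\ge\;\sum_{x\in L_v}(d_v(x)+1)-|L_v|t_v\;>\;(1+\eps)m_v\beta_v-\eps\tfrac{|L_v|}{p}\beta_v\;\ge\;m_v\beta_v ,
\]
with $m_v=\max\{p,|L_v|/p\}$. Thus the $\eps=0$ lemma applies verbatim on the properly colored subgraph, and integrality gives $h_v\le t_v$ for the deleted edges.

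Your direct accounting can also be completed, but the exact constants are the content you defer. ``$O(\eps)\beta_v$'' is not good enough, because the slack is exactly $(1+\eps)$. The final colors of same-class outneighbors are unknown when $v$ decides, so they cannot be treated like later neighbors in $b_v$. They must be charged once, as $+h_v$ on every color of $S_v$. This gives the per-palette requirement $\sum_{x\in S_v}(d_v(x)+1-a_v(x))>\beta_v^{>}+p\,h_v$. Averaging converts $p\,h_v$ into $|L_v|h_v\le\eps\frac{|L_v|}{p}\beta_v\le\eps m_v\beta_v$, provided $\alpha=\eps/p$ exactly. Together with $p\beta_v^{<}+\frac{|L_v|}{p}\beta_v^{>}\le m_v\beta_v$, this fits the hypothesis; charging them a second full per-color load would in general exceed it. The paper's reduction buys modularity, with no re-analysis of the sweeps. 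Yours avoids pruning the lists, at the cost of redoing the averaging with the penalty term.

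Finally, the branch selection in \Cref{alg:Sweep2} is not what resolves this accounting. It picks the direct or the fast implementation, which gives the $\min$ in the round bound, and it is what makes the message bound true. The fast branch sends intermediate colors and class labels of $O(\log q+\log(p/\eps))$ bits. It runs only when $q>(p/\eps)^2+\log^*q$, so $\log(p/\eps)=O(\log q)$. Your proposal does not address message sizes in the fast branch, where $\eps$ could otherwise be arbitrarily small.
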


When $\eps=0$, the algorithm of \Cref{thm:2sweepalgolists} reduces exactly to the Two-Sweep Algorithm under the condition in \Cref{eq:SlackInBaseCase}. For $\eps>0$, the algorithm operates in two stages. First, it utilizes the methods of \cite{KawarabayashiS18,Kuhn2009} with parameter $\alpha=\eps/p$ to compute a defective coloring with $O(p^2/\eps^2)$ colors, ensuring that each node $v$ has at most $(\eps/p)\beta_v$ outneighbors of its own color. Subsequently, it applies the Two-Sweep Algorithm to the subgraph containing only bichromatic edges.

\paragraph*{List Coloring with Bounded Outdegree} It has been known since the late 1980s that an edge-oriented graph with maximum outdegree $\beta$ can be colored with $O(\beta^2)$ colors in $O(\log^* n)$ rounds~\cite{Linial1987}. When considering $\beta$ as a constant, we extend this result to solving list coloring instances with lists $L_v$ of size $|L_v|=\Theta(\beta^2)$. In $O(\log^* n)$ rounds, one can first use the algorithm of \cite{Linial1987} to compute an $O(\beta^2)$-coloring that we will then use as the initial $q$-coloring of our algorithm. By choosing $p=\beta+1$ and setting all defect values to $0$, we then get an $O(\beta^2 + \log^* n)$-round list coloring algorithm to properly color with lists $L_v$ of size $|L_v| \geq \beta^2+\beta+1$. The only previous algorithm to achieve something similar is the algorithm of \cite{MausT20}. As sketched in \cite{MausT20}, their work can be adapted to compute list colorings with lists of size $|L_v|=\Theta(\beta^2\log\beta)$ in $O(\log^* n + \log^* C)$ rounds.

\paragraph*{Recursive Oriented List Defective Coloring}

In Theorem 3 of \cite{FK23}, it is described how an oriented list defective coloring (OLDC) algorithm can be applied recursively to solve a given OLDC instance in several steps. This allows to drastically reduce the maximum list size and thus in the case of \Cref{thm:2sweepalgolists}, the maximum message size and also the round complexity. This \emph{recursive color space reduction} however comes at the cost of requiring somewhat larger lists. The following theorem follows by combining \Cref{thm:2sweepalgolists} with Theorem 3 of \cite{FK23} and it shows that if we have lists with slack $O(\sqrt{C})$, a given OLDC instance can be solved in time polylogarithmic in $C$. The resulting recursive algorithm and its analysis are given in \Cref{sec:Two-SweepInCONGEST}.

\begin{restatable}{theorem}{restatesecond}\label{thm:OLDC2}
  Let $G=(V,E)$ be a graph that is equipped with a proper $q$-vertex coloring and with an edge orientation. For every node $v\in V$, assume that $\beta_v$ is the outdegree of $v$. Further, assume that we are given an oriented list defective coloring instance for $G$ with color lists $L_v \subseteq \{1, \ldots, C\}$ and defect functions $d_v:L_v\to \mathbb{N}_0$ for all $v\in V$. Assume that
  \[
    \forall v \in V\,:\, \sum_{x\in L_v}(d_v(x)+1) \geq 3 \cdot \sqrt{C}\cdot\beta_v.
  \]
  Then, our recursive Two-Sweep Algorithm deterministically solves the given oriented list defective coloring instance in $O\big(\log^3 C + \log C\cdot\log^*q\big)$ rounds. The algorithm requires the nodes to exchange messages of at most $O(\log q + \log C)$ bits.
\end{restatable}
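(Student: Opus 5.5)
The plan is to instantiate the recursive color space reduction framework of Theorem~3 of \cite{FK23}, using \Cref{thm:2sweepalgolists} as the base OLDC solver at every level. The idea is to repeatedly split the current color space of size $C'$ into $p$ parts of roughly equal size. Each node then picks one part, and the part acts as a \emph{super-color}. We solve an auxiliary OLDC instance whose colors are the $p$ parts. The defect that node $v$ is allowed for part $i$ is chosen proportional to the slack $\sum_{x\in L_v\cap \mathcal{C}_i}(d_v(x)+1)$ that $v$ retains inside that part, scaled down by a factor of about $\beta_v$ over the total slack. After $v$ has picked part $i$, its outdegree toward neighbors in the same part is at most the assigned defect, and we recurse on the instance restricted to part $i$.

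I would choose $p=2$ (or a constant), so that the recursion depth is $O(\log C)$. The auxiliary lists then have constant size, which keeps messages at $O(\log q+\log C)$ bits: the palettes in \Cref{thm:2sweepalgolists} have at most $p$ colors from a space of size $p$, and we add the initial color. The bulk of the proof is slack bookkeeping. Let $C_k\approx C/p^k$ be the color space size at level $k$, and write the invariant as
\[
\sum_{x\in L_v^{(k)}}(d_v(x)+1)\ \ge\ \gamma_k\sqrt{C_k}\,\beta_v^{(k)} .
\]
For the auxiliary instance at level $k$, the condition of \Cref{thm:2sweepalgolists} requires a slack factor of $(1+\eps_k)\max\{p,|L|/p\}=(1+\eps_k)p$. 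I would set the auxiliary defects $d'_v(i)+1\approx \frac{\text{slack}_i}{\text{slack}}\cdot(1+\eps_k)p\,\beta_v$, rounding down. Then a node choosing part $i$ keeps slack ratio roughly $\text{slack}/((1+\eps_k)p\beta_v)$. Compared with $\sqrt{C_{k+1}}=\sqrt{C_k/p}$, this loses a factor $\sqrt p(1+\eps_k)$ per level, plus additive rounding.

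The main obstacle is exactly this per-level loss. A naive $\sqrt p$ factor per level compounds over $\log_p C$ levels to $\sqrt{C}$. That is precisely the budget, so the constant $3$ must absorb $\prod_k(1+\eps_k)$ together with the rounding errors. The natural choice is to make $\eps_k$ summable, e.g. $\eps_k=\Theta(1/(\log C-k)^{2})$ or geometrically shrinking toward the top. Then $\prod(1+\eps_k)=O(1)$ and the rounding losses can also be charged to the slack. I would also handle the imbalance arising when $C$ is not a power of $p$ by taking parts of sizes $\lceil C'/p\rceil$ and $\lfloor C'/p\rfloor$, and checking that this is absorbed as well; this is where the specific constant $3$ gets verified.

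For the runtime, level $k$ costs $O((p/\eps_k)^2+\log^* q)$ rounds by \Cref{thm:2sweepalgolists}. With $\eps_k=\Theta(1/\log C)$ in the worst levels, each level costs $O(\log^2 C+\log^*q)$. Over $O(\log C)$ levels this gives $O(\log^3 C+\log C\cdot\log^* q)$. All parts at a given level are solved in parallel, since edges between nodes in different parts no longer matter. At the bottom level $C_k=O(1)$, the invariant $\sum(d+1)>\beta_v$ allows a final OLDC step, or the trivial single-color case directly satisfies $d_v(x)\ge\beta_v$. If the summability choice of $\eps_k$ conflicts with the $\log^2 C$ per-level cost, I would instead take constant $\eps$ at the first levels and $\Theta(1/\log C)$ only where needed, re-tuning constants.
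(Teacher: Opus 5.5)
\textbf{There is a genuine gap in the slack accounting, and it breaks the proof for your parameter choice.}

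You run the Two-Sweep Algorithm on the auxiliary instance with palette parameter equal to the number of parts $p$. The auxiliary instance therefore needs slack $(1+\eps_k)p\,\beta_v$, so each level divides the ratio $\sum_x(d_v(x)+1)/\beta_v$ by $(1+\eps_k)p$, while the color space shrinks only by a factor $p$. In your own normalization this is $\gamma_{k+1}=\gamma_k/(\sqrt p\,(1+\eps_k))$.

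The compounded factor $\sqrt p^{\,\log_p C}=\sqrt C$ is therefore not ``precisely the budget''. The budget for $\gamma$ is the constant $3$. After $\log_p C$ levels you are left with $\gamma\approx 3/(\sqrt C\prod_k(1+\eps_k))$ at a color space of constant size. Equivalently, the scheme needs initial slack at least $((1+\eps)p)^{\log_p C}\geq C$ times $\beta_v$, not $3\sqrt C\,\beta_v$.

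For $p=2$, no choice of the Two-Sweep parameter $p'$ repairs this. On two-element lists, $\max\{p',2/p'\}\geq 2$ for every integer $p'\geq 1$. So with the guarantee of \Cref{thm:2sweepalgolists}, every halving of the color space costs a full factor $2$. Slack $2\beta_v$ on two colors is in fact the trivial regime, where some color already has $d_v(x)\geq\beta_v$.

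\textbf{The missing idea is to decouple the palette size from the number of parts.} On auxiliary lists of size $\lambda$, run the Two-Sweep Algorithm with palettes of size $\sqrt\lambda$. Then $\max\{\sqrt\lambda,\lambda/\sqrt\lambda\}=\sqrt\lambda$, and each level costs only $(1+\eps)\sqrt\lambda$ while shrinking the color space by $\lambda$. This needs $\lambda$ to be a perfect square, because any per-level factor strictly above $\sqrt\lambda$ compounds to $C^{1/2+\Omega(1)}$.

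The paper takes $\lambda=4$ and $p=\lceil\sqrt\Lambda\rceil=2$ in \Cref{thm:2sweepalgolists}, with the uniform choice $\eps=1/(3\lceil\log_4 C\rceil)$. It plugs this base algorithm, with $\kappa(4)=2(1+\eps)$, into the strict-slack color-space reduction of \Cref{lemm:colorSpaceReductionBox}. The resulting requirement is $(2(1+\eps))^{\lceil\log_4 C\rceil}\leq 2(1+\eps)^{\lceil\log_4 C\rceil}\sqrt C<2e^{1/3}\sqrt C<3\sqrt C$, and each level costs $O(\log^2 C+\log^* q)$ rounds. A uniform $\eps=\Theta(1/\log C)$ is the right schedule. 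A summable schedule such as $\eps_k=\Theta(1/(\log C-k)^2)$ would make the per-level costs $(p/\eps_k)^2$ sum to $\Theta(\log^5 C)$.

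Finally, your bottom-level remark that slack $>\beta_v$ permits an OLDC step is false once lists have more than one color. A regular tournament on $2\beta+1$ vertices with zero defects and common lists of size $\beta+1$ is a counterexample. That slack suffices only when the recursion has reached single colors, or if you keep the base algorithm's slack requirement at the last level, as the paper's reduction does.
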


It might be instructive to compare the conditions of \Cref{thm:2sweepalgolists,thm:OLDC2} with the constraint $c(d+1)>\Delta$, which is necessary for a $d$-defective $c$-coloring to exist for every graph of maximum degree $\Delta$ (cf.~\cite{lovasz66}). When choosing the parameters optimally and replacing $\beta_v$ with the maximum degree $\Delta$ (i.e., assuming an undirected graph), \Cref{thm:2sweepalgolists,thm:OLDC2} show that a $d$-defective $c$-coloring can be computed efficiently by a simple greedy-like algorithm as long as $\sqrt{c}(d+1)>\alpha \Delta$ for some constant $\alpha>1$ and this can also be generalized to the list defective coloring setting.

\paragraph*{Proper \boldmath$(\Delta+1)$-Coloring in the \CONGEST Model}
As a direct application, we can plug \Cref{thm:OLDC2} into the framework of \cite{fraigniaud16local,Kuhn20,FK23} to obtain a new efficient $(\Delta+1)$-coloring (and more generally a new efficient $(\mathit{degree}+1)$-list coloring algorithm) for the \CONGEST model.

\begin{restatable}{theorem}{restatethird}\label{thm:CONGESTcoloring}
  Let $G=(V,E)$ be an $n$-node graph of maximum degree $\Delta$ and assume that all nodes $v\in V$ have color lists $L_v$ of size at least $|L_v|\geq \deg(v)+1$, consisting of colors from a color space of size $O(\Delta)$. Then the given list coloring instance can be solved in $O(\sqrt{\Delta}\cdot\log^4\Delta + \log^* n)$ deterministic rounds in the \CONGEST model.
\end{restatable}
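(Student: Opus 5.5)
We plug \Cref{thm:OLDC2} into the recursive framework of \cite{fraigniaud16local,Kuhn20,FK23}, which reduces $(\mathit{degree}+1)$-list coloring to a sequence of OLDC instances. First, in $O(\log^* n)$ \CONGEST rounds, we compute a proper $O(\Delta^2)$-coloring with Linial's algorithm~\cite{Linial1992}. This coloring serves as the initial $q$-coloring with $q=O(\Delta^2)$, so every later invocation of \Cref{thm:OLDC2} costs only $O(\log^3 C+\log C\cdot\log^*\Delta)$ rounds and uses $O(\log\Delta)$-bit messages, which is admissible in \CONGEST.

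Next, we follow the standard outer loop of \cite{FK23}. Compute an arbdefective coloring, i.e., a partition of $V$ into $k=O(\sqrt{\Delta}\cdot\poly\log\Delta)$ classes $V_1,\dots,V_k$, together with an orientation under which each node has few out-neighbors in its own class. Then process the classes sequentially. When class $V_i$ is processed, each node removes from its list the colors already fixed by its neighbors in earlier classes, so it keeps $|L_v|\ge \deg_i(v)+1$, where $\deg_i(v)$ counts neighbors in $V_i\cup\dots\cup V_k$. The task inside a class is a $(\mathit{degree}+1)$-list coloring with respect to the oriented edges. We solve it with the color space reduction of \cite{Kuhn20,FK23}: recursively split the color space $\{1,\dots,C\}$, and at each level let each node pick a sub-space with a defect budget proportional to its remaining list share. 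This choice is exactly an OLDC instance. We need $\sum_x(d_v(x)+1)\ge 3\sqrt{C}\beta_v$. This slack is not available for $(\mathit{deg}+1)$ lists directly, so we use the arbdefective partition to shrink the relevant outdegree $\beta_v$ by a factor of about $\sqrt{\Delta}$ relative to the list size. That factor supplies the required $\sqrt{C}$ slack, since $C=O(\Delta)$.

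The round count then breaks down as follows: $k=\tilde O(\sqrt\Delta)$ sequential classes, each costing $O(\log^3\Delta)$ rounds per OLDC call, with an additional $O(\log\Delta)$ factor from the depth of the color-space recursion. This totals $O(\sqrt\Delta\log^4\Delta+\log^*n)$, provided the $\log^*\Delta$ terms are absorbed and the arbdefective coloring itself is computed within the same budget. For that computation we use the $\sqrt{\Delta/(d+1)}\poly\log\Delta$ algorithm of \cite{FK23}, or alternatively \Cref{thm:OLDC2} applied with uniform lists, which is an instance of the same problem.

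The main obstacle is the slack accounting. We must verify that in every recursive level the OLDC instance handed to \Cref{thm:OLDC2} indeed satisfies the $3\sqrt{C}\beta_v$ condition, where $C$ is now the current sub-space size. We must also verify that the defects $d_v(x)$ chosen at one level leave lists of size greater than the out-degree at the next level, ending in a proper list coloring at the bottom. I would first try to copy the potential argument from Theorem 3 of \cite{FK23}, replacing their OLDC subroutine's slack requirement by $3\sqrt{C}$. I would then check that the extra $\log\Delta$ losses from rounding only accumulate into the stated $\log^4\Delta$ factor.
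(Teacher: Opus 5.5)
\textbf{There is a genuine gap.} Your high-level idea, feeding \Cref{thm:OLDC2} into the $(\mathit{degree}+1)$-list-coloring framework of \cite{FK23}, is the paper's idea. But instead of using that framework's guarantee, you try to rebuild it, and your rebuilt version fails exactly at the step you defer as the ``main obstacle''.

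\emph{The slack condition is not guaranteed.} After $V_1,\dots,V_{i-1}$ are colored, $v$ keeps only $|L_v|\geq\deg_i(v)+1$ colors. An up-front arbdefective partition, however, bounds $\beta_v$ only in terms of $\deg(v)$ (or $\Delta$). Nothing prevents $v$ from having all but one neighbor in earlier classes, with that last neighbor an out-neighbor inside $V_i$. Then $|L_v|$ can be $2$, while \Cref{thm:OLDC2} needs $3\sqrt C\,\beta_v=3\sqrt C>2$. So the partition does not shrink $\beta_v$ relative to the \emph{current} list, and the hypothesis of \Cref{thm:OLDC2} can fail. You need an additional slack-generation stage, e.g., first making lists exceed $2\deg(v)$, so that $|L_v|>\deg(v)\geq\kappa\beta_v$ survives the sequential processing.

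\emph{The round accounting is misattributed.} The $O(\log\Delta)$ factor comes from this outer reduction, not from an extra level of color-space recursion. The color-space recursion is already inside $T=O(\log^3C+\log C\log^*q)$. In addition, $\tilde O(\sqrt\Delta)$ classes at $O(\log^4\Delta)$ rounds each would exceed the claimed bound anyway.

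\emph{Computing the partition is not covered.} In OLDC the orientation is part of the input, whereas in an arbdefective coloring it is part of the output. A direct application of \Cref{thm:OLDC2} with uniform lists on the bidirected graph therefore only yields a defective coloring. That needs $\Omega((\Delta/(d+1))^2)$ colors, i.e., $\Omega(\Delta)$ classes for $d\approx\sqrt\Delta$. The arbdefective algorithm of \cite{FK23} is built on their own OLDC routine, whose \CONGEST implementation is only available for $\Delta\leq\poly\log n$.

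\textbf{How the paper closes this.} It invokes Theorem~4 of \cite{FK23} as a black box. For $\nu=0$, that theorem turns any OLDC algorithm with slack requirement $\kappa$, round complexity $T$, and message size $B$ into a $(\mathit{degree}+1)$-list coloring algorithm. The result runs in $O(\kappa\log\Delta\cdot T+\log^*n)$ rounds with $O(B+\log n)$-bit messages, and it handles the slack and the arbdefective decomposition internally. With all defects zero, $\sum_x(d_v(x)+1)=|L_v|\geq\deg(v)+1$. Take $\kappa=3\sqrt C=O(\sqrt\Delta)$, and use the framework's initial $O(\Delta^2)$-coloring as the $q$-coloring. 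This gives $T=O(\log^3\Delta)$ and $B=O(\log\Delta)$, hence $O(\sqrt\Delta\log^4\Delta+\log^*n)$ rounds with $O(\log n)$-bit messages.
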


\paragraph*{Comparison to the Results of \cite{FK23,MausT20}} 
The only previous distributed algorithms to compute oriented list defective colorings and to compute a proper $(\Delta+1)$-coloring in $\tilde{O}(\sqrt{\Delta}) + O(\log^* n)$ \CONGEST rounds appeared in \cite{FK23}, which was based on techniques developed in \cite{MausT20}. It therefore makes sense to compare \Cref{thm:2sweepalgolists,thm:OLDC2,thm:CONGESTcoloring} with the results of \cite{FK23,MausT20}. 

We start by discussing our distributed OLDC algorithms. First, our Two-Sweep Algorithm is essentially a greedy algorithm, and in our opinion, it is conceptually significantly simpler than the algorithm of \cite{FK23,MausT20}. In the algorithm of \cite{FK23,MausT20}, each node $v$ has to compute an appropriate set of size at most $2^{2^{|L_v|}}$, consisting of sets of subsets of its list $L_v$. This collection of sets is computed in a non-trivial way as a function of the node's initial color and of $L_v$. To simplify the comparison of the properties of our algorithms with the algorithm of \cite{FK23}, we consider a scenario where all the defects are equal to $0 < d < \beta$. The algorithm of \cite{FK23} then needs lists of size $\Omega\big(\big(\frac{\beta}{d+1}\big)^2\cdot (\log\beta+\log\log C)\big)$ where $\beta$ is the maximum outdegree of the graph, while our Two-Sweep Algorithm (\Cref{thm:2sweepalgolists}) admits the following parameter choice. Put $r:=\beta/(d+1)$ and fix $\eps\in[0,1]$. Choosing $p=\lfloor(1+\eps)r\rfloor+1$ and lists of size at least $p^2$ gives $p>(1+\eps)r$ and $|L_v|\geq p^2>(1+\eps)pr$, satisfying the strict hypothesis of \Cref{thm:2sweepalgolists}. Thus, the sufficient list size is $O(r^2)=O\big(\big(\frac{\beta}{d+1}\big)^2\big)$. If we further assume that the size of the color space $C$ is proportional to the maximum list size, then also the recursive Two-Sweep Algorithm (\Cref{thm:OLDC2})  uses lists of size $O(r^2)$.
The algorithm of \cite{FK23} only requires $O(\log \beta)$ rounds, while for $\eps=\Theta(1)$, the time complexity of \Cref{thm:2sweepalgolists} is $O\big(p^2 + \log^* q\big)$ and the time complexity of \Cref{thm:OLDC2} is $O\big(\log^3 C + \log C\cdot\log^* q\big)$. The time complexity of \Cref{thm:2sweepalgolists} is comparable as long as $p$, the number of colors per list of the first Sweep phase is small. The time complexity of \Cref{thm:OLDC2} is comparable as long as $C$ is at most polynomial in $\beta$. While the round complexity of our new algorithm is slightly larger than the round complexity of the algorithm of \cite{FK23}, the computational complexity of the internal computation at each node in our new algorithm is significantly smaller than in \cite{FK23}. Although the internal computational complexity of a distributed algorithm is usually not studied, it might still be relevant in practical applications. The complexity of the computation that each node $v$ needs to do in a single round of the algorithm of \Cref{thm:2sweepalgolists} is at most linear in the degree of $v$, the length of its list $|L_v|$, and the amount of data that $v$ receives from its neighbors. Even if we assume that lists are of size $\Delta$, this is at most $\tilde{O}(\Delta^2)$. The computational complexity of the algorithm of \cite{FK23} is discussed in the appendix of \cite{FK23full} (the full version of \cite{FK23}). For the parameter choices of the underlying construction of \cite{MausT20, FK23}, an implementation that explicitly enumerates the candidate family requires superpolynomial local computation in $n$ when the color lists have size $\Omega(\log n)$. The method described in the appendix of \cite{FK23full} reduces the local computational cost at the price of additional $\log\Delta$ factors in the round complexity.

We next also compare the resulting $\tilde{O}(\sqrt{\Delta})+O(\log^* n)$-round \CONGEST algorithm for the $(\Delta+1)$-coloring problem with the only previous such algorithm from \cite{FK23}. Our new algorithm is by almost a $\Theta(\log^2\Delta)$-factor slower than the algorithm of \cite{FK23}. However, by replacing the oriented list defective coloring algorithm in \cite{FK23} with the algorithm of \Cref{thm:OLDC2}, as discussed above, our algorithm becomes simpler and computationally significantly more efficient than the algorithm of \cite{FK23}. Moreover, while the appendix of the full version \cite{FK23full} (of \cite{FK23}) outlines a method to reduce the internal node complexity to $o(n)$, with their suggested parameter choice, the resulting round bound incurs additional $\log \Delta$ factors\footnote{Choosing $\eps=1/6$, as suggested in Appendix~C of \cite{FK23full}, increases the degree-dependent term in their original round bound by a factor of $\Omega(\log^2\Delta)$.} and therefore exceeds our bound stated in \Cref{thm:CONGESTcoloring}. Furthermore, the OLDC-based component of the algorithm of \cite{FK23} only works in the \CONGEST model as long as $\Delta\leq \poly\log n$. For larger $\Delta$, they use the algorithm of \cite{GhaffariKuhn21}, which has a round complexity of $O(\log^2\Delta\cdot \log n)$. Our algorithm from \Cref{thm:CONGESTcoloring} is a $\tilde{O}(\sqrt{\Delta})+O(\log^* n)$-round coloring algorithm in \CONGEST for the whole range of degrees.

Finally, we would like to point out one additional reason why our OLDC algorithm and the resulting proper $(\Delta+1)$-coloring algorithm might be useful: low local computational cost is also important when transforming distributed algorithms into parallel ones in models such as the standard PRAM model. We believe that the Two-Sweep Algorithm is simple enough to serve as a building block for an efficient PRAM algorithm. In particular, in the setting of our $(\Delta+1)$-coloring application with $\Delta\leq\poly\log n$, the local counting for a single initial-color class can be implemented in $O(\log\Delta)$ parallel time using $\poly(\Delta\log n)$ processors per node, hence $\tilde{O}(n)$ processors in total. This observation concerns the local computation; obtaining a bound on the depth of the full algorithm also requires handling the sequential dependencies within the sweeps and the outer reductions. Nevertheless, we hope that a suitable parallel implementation can compute a $(\Delta+1)$-coloring in $O(\poly\log\Delta+\log^*n)$ parallel time and $\tilde{O}(n+m)$ work in this degree regime, where $m=|E|$. Developing such an implementation and a complete depth and work analysis is beyond the scope of this paper and left for future research. This would also require parallel implementations of the other distributed building blocks that we use; see \cite{ElkinK26CoRR} for related recent work on parallel coloring. We do not currently see an equally direct approach based on the previous distributed $\tilde{O}(\sqrt\Delta)+O(\log^*n)$-round $(\Delta+1)$-coloring algorithms.

\paragraph*{Coloring of Graphs of Bounded Neighborhood Independence}

An application of the general OLDC result \Cref{thm:OLDC2}, through the list-arbdefective reduction described in \Cref{sec:BoundedNeighborhood}, is that we can efficiently compute $(\Delta+1)$-colorings in graphs of bounded neighborhood independence in the \CONGEST model.\footnote{The neighborhood independence $\theta$ of a graph is the maximum number of pairwise non-adjacent neighbors of any node of the graph. As a prominent example, line graphs of hypergraphs of rank $r$ have neighborhood independence at most $r$.} The authors of \cite{FuchsK25} present a recursive \LOCAL algorithm that uses the algorithm of \cite{FK23} as a subroutine for the base cases. Note that this does not work in the \CONGEST model. If however, we replace the algorithm of \cite{FK23} with our Two-Sweep approach, we directly obtain a \CONGEST model algorithm with the same round complexity up to some $\poly \log \Delta$ factors. This result is stated in \Cref{thm:boundedNeighborhoodMain}. The details are deferred to \Cref{sec:BoundedNeighborhood}.

\begin{theorem}\label{thm:boundedNeighborhoodMain}
  There is a deterministic \CONGEST algorithm with round complexity
  \begin{align*}
    \min \left\{
      (\theta\log\Delta)^{O\left(\frac{\log\log\Delta}
      {\max\{1,\log\log\log\Delta-\log\log\theta\}}\right)},
      O\left(\theta^2\Delta^{1/4}\log^8\Delta\right)
    \right\} + O(\log^*n)
  \end{align*}
  to solve $(\Delta+1)$-coloring in any $n$-node graph of maximum degree $\Delta$ and neighborhood independence at most $\theta$.
\end{theorem}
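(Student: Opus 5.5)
The plan is to take the recursive \LOCAL algorithm of \cite{FuchsK25} for graphs of bounded neighborhood independence as a black box and to replace only its base-case subroutine. Everywhere that \cite{FK23} is used to solve an oriented list defective (or list arbdefective) coloring instance, we call the recursive Two-Sweep Algorithm of \Cref{thm:OLDC2} instead. The argument then has three steps.

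\textbf{Step 1: matching the base case to \Cref{thm:OLDC2}.} We check that each subproblem in \cite{FuchsK25} can be written as an OLDC instance satisfying $\sum_{x\in L_v}(d_v(x)+1)\geq 3\sqrt{C}\,\beta_v$. Here $C$ is the size of the color space of that subproblem, and we may assume $C\le O(\Delta)$, or polynomial in $\Delta$, after the color space reductions.

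The base case in \cite{FuchsK25} uses list arbdefective colorings. Each node gets an edge orientation, and the defect is measured only on outneighbors. In a graph of neighborhood independence $\theta$, an arbdefective coloring with outdegree defect $d$ gives an $O(\theta d)$-defective coloring, as shown in \cite{BarenboimE11}. This is how the recursion keeps degrees decreasing. The slack required by \cite{FK23} is roughly $(\beta/(d+1))^2\log$-type list sizes. \Cref{thm:OLDC2} instead needs slack $O(\sqrt{C})$ relative to the outdegree.

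I expect this mismatch to be the main obstacle. The slack parameters in the \cite{FuchsK25} recursion, that is, how much list size is traded for defect at each level, must be re-tuned so that each base-case call satisfies the hypothesis of \Cref{thm:OLDC2}. My first attempt would be to blow up the list-to-degree ratio at each level by an extra factor of $\poly(\theta\log\Delta)$, or $\sqrt{C}$ in the sparse branch. I would then check that this only changes constants in the exponent of the quasi-polylogarithmic bound, giving $(\theta\log\Delta)^{O(\cdot)}$.

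\textbf{Step 2: the \CONGEST model.} Every message in the modified algorithm must have $O(\log n)$ bits. \Cref{thm:OLDC2} guarantees messages of $O(\log q+\log C)$ bits. The initial coloring $q$ can be taken as $\poly(\Delta)$ or $\poly(n)$ via Linial in $O(\log^* n)$ rounds. The remaining parts of \cite{FuchsK25} are color space reductions and the arbdefective-to-defective translation, and these only exchange colors and counts. So every part of the algorithm is \CONGEST-compatible, and the additive $O(\log^* n)$ term is paid only once, for the initial coloring.

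\textbf{Step 3: the two branches of the round bound.} The first term of the minimum comes from running the \cite{FuchsK25} recursion with depth governed by $\log\log\Delta/(\log\log\log\Delta-\log\log\theta)$. Each base call costs $O(\log^3 C+\log C\log^* q)=\poly\log\Delta$ rounds. These $\poly\log$ factors are absorbed into the $(\theta\log\Delta)^{O(\cdot)}$ expression.

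The second term, $O(\theta^2\Delta^{1/4}\log^8\Delta)$, comes from a shallow variant. We do one arbdefective step, which splits the graph into parts of defect about $\sqrt{\Delta}$, using lists with slack $\sqrt{C}$ with $C=O(\Delta)$. This is converted to defective parts of degree $O(\theta\sqrt{\Delta})$. We then run the $\tilde O(\sqrt{\Delta'})$ algorithm of \Cref{thm:CONGESTcoloring}, or its list-arbdefective analogue, on each part with $\Delta'=O(\theta^2\sqrt{\Delta})$. This yields $\tilde O(\theta\Delta^{1/4})$ per level, and the stated $\theta^2$ and $\log^8$ factors arise from the list-size blowup and the polylog factors of \Cref{thm:OLDC2}.

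Since both branches are valid, we run the one with the smaller bound. Taking the minimum plus $O(\log^* n)$ gives the theorem.
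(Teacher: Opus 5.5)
Your top-level plan is the paper's plan: keep the recursion of \cite{FuchsK25}, replace its \cite{FK23} base case by the Two-Sweep result, and check message sizes. The gap is the bridge between the two, and the mechanism you suggest in its place does not give the stated bounds.

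The recursion (slack generation, color-space reduction, and the list arbdefective-to-defective step of \Cref{lemma:ArbToDef}) bottoms out in list \emph{arbdefective} instances with constant slack. In these instances the orientation is part of the output. They are not OLDC instances with slack $3\sqrt C$. Slack also cannot be ``blown up'' at each level: lists are inputs, and \Cref{lemma:SlackGeneration} only buys slack $S$ at a multiplicative cost $O(S^2)$.

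The paper closes this gap with Theorem~4 of \cite{FK23}. With $\kappa=3\sqrt C$, it turns \Cref{thm:OLDC2} (runtime $T=O(\log^3C+\log C\log^*q)$) into a solver for $P_A(1,C)$ with runtime $O(\sqrt C\log\Delta\cdot T+\log^*q)$; this is \Cref{coll:sweepAdaption}. Your per-call cost $O(\log^3C+\log C\log^*q)$ omits the factor $\sqrt C\log\Delta$. Together with your assumption that base calls happen at $C=\poly(\Delta)$, the quasi-polylogarithmic term does not follow. The recursion of \cite{FuchsK25full} must stop at $C\le\max\{2,\log\Delta\}$, where \Cref{coll:sweepAdaption} costs $O(\log^4\Delta)$. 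For this first branch alone, generating slack $3\sqrt C$ via \Cref{lemma:SlackGeneration} and running \Cref{thm:OLDC2} on the bidirected graph would also suffice. That works only because $C\le\log\Delta$ keeps the $O(C)$ overhead polylogarithmic.

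In the second branch, the linear dependence on $\kappa$ in Theorem~4 of \cite{FK23} is essential, and your sketch does not establish the bound. The base algorithm's cost is governed by the size of the instance's color space, not by the degree of a part. So splitting $G$ into parts of degree $O(\theta\sqrt\Delta)$ and invoking a ``$\tilde O(\sqrt{\Delta'})$'' algorithm is unjustified unless each part also receives a color space of size about $\sqrt\Delta$. Your $\theta$-dependence is also inconsistent ($O(\theta\sqrt\Delta)$ versus $\Delta'=O(\theta^2\sqrt\Delta)$), and the factors $\theta^2$ and $\log^8\Delta$ are asserted rather than derived.

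The paper instead performs one color-space reduction with $p=\lceil\sqrt C\rceil$, i.e., $\eps=1/2$ in \Cref{lemm:mainNeighborhood}:
\begin{itemize}
  \item It generates slack $64\theta$ at cost $O(\theta^2)$; this is where $\theta^2$ comes from.
  \item It selects a color-space part by a list defective coloring with slack $32\theta$ over $\sqrt C$ parts. \Cref{lemma:ArbToDef} reduces this to $P_A(2,\sqrt C)$ at cost $O(\log^3\Delta)$.
  \item It then solves the residual $P_A(2,\sqrt C)$ instances.
\end{itemize}
Both kinds of $P_A(2,\sqrt C)$ instances are handled by \Cref{coll:sweepAdaption} in $O(C^{1/4}\log^4\Delta)$ rounds, which gives $O(\theta^2C^{1/4}\log^7\Delta)$. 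Going from slack $1$ to slack $2$ adds the final $\log\Delta$ factor. Using only the quadratic slack generation here, reaching slack $3C^{1/4}$ on color space $\sqrt C$ would cost a factor $O(\sqrt C)=O(\sqrt\Delta)$, which is too slow for this branch.
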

\newpage

\subsection{Optimizing the defect as a function of the number of colors}
\label{sec:contribsPart2}

As discussed e.g.\ in \cite{FK23,FuchsK25}, we believe that in order to make significant progress on the $\Delta$-dependency of the complexity of (deterministic) distributed $(\Delta+1)$-coloring, we might need to understand which defective coloring problems can be solved in time $f(\Delta)\cdot\log^* n$ and for which such problems, there is an $\Omega(\log n)$ lower bound if $\Delta=O(1)$. Further, the question seems most relevant for the case, where the number of colors is relatively small as in this case, it seems most likely to also get a reasonably small $\Delta$-dependency for the defective coloring solution. For example, in the related arbdefective coloring problem, for which this question is understood exactly~\cite{balliu2021hideandseek}, all existing algorithms with close to optimal number of colors $C$ have a round complexity that is at least linear in $\sqrt{C}$~\cite{BarenboimEG18,FK23}.

In the regime $C=\Theta(\Delta/(d+1))$, the known algorithms cited here have round bounds with a $\sqrt C$ dependence, up to polylogarithmic factors in $\Delta$, plus $O(\log^*n)$~\cite{BarenboimEG18,FK23}; this describes their guarantees rather than a lower bound for the problem.

To make progress toward this question, we establish almost tight bounds on the best defective colorings that can be achieved by the Two-Sweep Algorithm in the model of \Cref{thm:2sweeplower} if the number of colors $C$ is fixed. In this part of the paper, we focus on the case where the maximum degree $\Delta$ is large and we concentrate on the value of the achievable defect relative to the degree of the node. We say that an algorithm achieves \emph{relative defect} $\rho\in[0,1]$ if the defect $d$ is
\begin{align}
    \label{eq:defRelativeDefect}
    d \leq \rho\cdot \deg(v) + o(\deg(v)).
\end{align}
Here the additive term is bounded by a single function of $\deg(v)$,
independent of the input graph and the node, whose ratio to $\deg(v)$
tends to zero as the degree tends to infinity.
Let $\rho^*(C)$ be the infimum of all $\rho\in[0,1]$ for which there exists an $f(\Delta)\cdot\log^* n$-round deterministic distributed algorithm to compute a $C$-coloring with relative defect $\rho$. There are existing lower bounds for defective colorings with $2$ and $3$ colors. In \cite{BalliuHLOS19} and \cite{Balliu0KOS25}, it is shown that $(\Delta-2)$-defective $2$-colorings and $(\Delta-3)/2$-defective $3$-coloring requires $\Omega(\log_\Delta n)$ rounds deterministically. We therefore have $\rho^*(2)=1$ and $\rho^*(3)\geq 1/2$. The best known upper bounds on $\rho^*(C)$ are $\rho^*(3)\leq 2/3$ by using the algorithm of \cite{BalliuHLOS19} and $\rho^*(C)\leq 1/\lfloor\sqrt{C}\rfloor$ by using the base two-sweep algorithm of Barenboim and Elkin~\cite{BarenboimE09} with parameter $p=\lfloor\sqrt{C}\rfloor$ and $p^2\leq C$ colors. While we do not provide any lower bounds for general defective coloring algorithms, we show that for large $C$, the generalized Two-Sweep Algorithm under the assumptions of \Cref{thm:2sweeplower} cannot be significantly better than the algorithm of \cite{BarenboimE09}.

\begin{restatable}{theorem}{restateSweepLowerBound}\label{thm:2sweeplower}
  When using $C\geq 1$ colors, the Two-Sweep Algorithm cannot guarantee a
  relative defect better than $1/\sqrt C$, even on finite trees. This holds for every fixed
  nonempty candidate family $\calS$ and every fixed
  global deterministic tie-breaking rule in each phase, based only on local
  neighbor counts.
\end{restatable}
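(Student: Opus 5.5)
The plan is an adversarial, bottom-up construction of a finite tree, together with an initial proper coloring, that exploits determinism. The rule in each phase is fixed, global, and depends only on local neighbor counts. Hence two nodes whose \emph{relevant views} agree make the same choice. The relevant view in the first sweep is the multiset of candidate sets of previously processed neighbors. In the second sweep it is additionally the multiset of colors already fixed by later neighbors, together with the node's own set $S_v$. Since the algorithm is deterministic and the tree is finite, the adversary can simulate it and choose each next piece of the tree knowing exactly what the already-built pieces will do. I will build gadgets whose roots behave identically, and attach many copies of them to a \emph{target} node $v$ of degree $\Delta$. The aim is to force $v$ into defect at least $(1-o(1))\Delta/\sqrt C$.

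\textbf{Step 1 (first sweep, earlier neighbors).} Give $v$ a set of $a$ neighbors with smaller initial color, each the root of a gadget subtree. By varying gadget shapes I get several realizable ``types'' of such neighbors, distinguished by which set $T\in\calS$ they select in sweep~1. I will argue that the adversary can realize a collection of types whose sets jointly cover every color. The argument is as follows. Start from a type that picks some $T_1$. Then build a node whose earlier neighbors all picked $T_1$ in huge multiplicity. Since the rule minimizes a defect bound, such a node is pushed to a set avoiding or reducing overlap with $T_1$. Iterating, each color $x$ ends up covered by some realizable type. Weighting the $a$ earlier neighbors appropriately over these types then guarantees, for every candidate set $S$ that $v$ might pick of size $s$, that every color $x\in S$ is contained in the sets of at least about $a\cdot t/C$ earlier neighbors. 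Here $t$ is the size of the forced sets. In the second sweep these earlier neighbors count against every color of $S_v$, so this gives the lower bound $\min_{x\in S_v}(\text{earlier count})\gtrsim a\,t/C$.

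\textbf{Step 2 (second sweep, later neighbors).} Give $v$ a set of $b=\Delta-a$ neighbors with larger initial color. These choose their sets after $v$ but fix their final colors before $v$ in sweep~2. Having simulated $v$'s choice of $S_v$ (with $|S_v|=s$), the adversary attaches, for each $x\in S_v$, a group of $b/s$ identical later gadgets. Their own earlier neighbors force them toward color $x$: they get huge earlier neighborhoods whose sets avoid $x$, so $x$ is their unique minimizer in sweep~2. Then every color of $S_v$ already has about $b/s$ later neighbors, and $v$'s defect is at least $b/s$ plus the earlier contribution.

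\textbf{Step 3 (balancing).} Combining the two steps, the defect of $v$ is at least $\max\{b/s,\ a\,s'/C\}$ up to $o(\Delta)$ terms. Here $s'$ relates the sizes of the neighbors' forced sets to $s$ via the sweep-1 minimization: a node choosing small sets exposes itself to the later-neighbor attack, and one choosing large sets exposes itself to the earlier-neighbor attack. Optimizing the split $a+b=\Delta$ against the best response $s$ yields the AM--GM bound $\Delta/\sqrt C$ in the limit. Finally, I will check that all gadgets stay acyclic (so the construction is a finite tree) and that their own degrees are at most $\Delta$.

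The main obstacle is Step~1 together with the forcing in Step~2. The tie-breaking rule is arbitrary and need not be symmetric under permutations of the colors, so I cannot simply ``relabel'' gadgets to force a chosen set or color. I would first try a pigeonhole/iteration argument. Among the finitely many sets in $\calS$, consider which ones are realizable as sweep-1 choices under extreme count vectors (very many earlier neighbors all using one set). Show that the minimization forces enough diversity that every color is covered by a realizable type, and that each single color can be made the unique sweep-2 minimizer of some gadget. This handles the ties by making the counts strictly unbalanced, so that tie-breaking never matters at the critical nodes.
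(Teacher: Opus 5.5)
Your high-level architecture matches the paper's. You exploit that the rules are deterministic functions of local counts, attach disjoint gadget subtrees to a central node $v$ as earlier and later neighbors, and fix $v$'s palette from the counts before building the later gadgets. There is, however, a genuine gap in how you count the defect. What you lower-bound in Step~1 is $k_v(x)$, the number of earlier neighbors whose \emph{palettes} contain $x$. That is an upper bound on the conflicts at $v$, not a lower bound. In Phase~II the earlier neighbors decide \emph{after} $v$, and each of them sees $x_v$ as one more later neighbor of that color, which pushes it away from $x_v$. Nothing in your gadgets makes any of them actually finish in $x_v$.

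This is not a technicality. Take $\calS=2^{\mathcal C}\setminus\{\varnothing\}$. An induction along the Phase~I order shows that every node with a later neighbor announces $\mathcal C$. Hence $v$ sees equal $k_v(x)$ for all colors and picks a color with at most $\Delta/C$ later conflicts. Your accounting credits $v$ with all $k_v(x_v)$ earlier neighbors, but everything beyond $\Delta/C$ must come from earlier neighbors that really copy $v$'s color.

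The paper supplies this with copying witnesses. It realizes counts at which a root announces $A$ and still chooses $x$ even though $r_x\geq 1$. It then replaces one later witness of color $x$ by the external vertex, so the root's counts, and hence its output $x$, are unchanged exactly when the external vertex outputs $x$. This only works for the effective colors $E(A)\subseteq A\cap\mathcal X$, which may be a proper subset of $A$ or even empty. So the earlier contribution to $x$ is $\gamma_x=\sum_{A\in\mathcal R:\,x\in E(A)}\alpha_A$ rather than $\alpha_x$. The obstruction must therefore be proved for the derived family $\{E(S):S\in\mathcal R\}$, whose structure you cannot control.

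For the same reason Step~3 does not suffice. AM--GM balancing, with uniform coverage by $t$-sets and $b/s$ later neighbors per color, is a heuristic for families like $\binom{\mathcal C}{p}$. It gives no argument for an arbitrary family with palettes of mixed sizes and irregular overlaps. Also, with non-uniform earlier counts, a uniform later allocation lets $v$ escape to its least-covered color. Two ingredients are needed:
\begin{itemize}
\item[(i)] The exact criterion: a palette $S$ can be defeated by some later allocation iff $\sum_{x\in S}\max\{0,\rho-\alpha_x\}<1-\alpha$, using water-filling masses $\delta_x>\max\{0,\rho-\alpha_x\}$.
\item[(ii)] A family-independent proof that, whenever $C\rho^2<1$, there are earlier masses violating this criterion for \emph{every} palette. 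The paper gets this by minimizing $\Phi(\vec\alpha)=\sum_x\max\{0,\rho-\alpha_x\}^2-\lambda(1-\alpha)^2$ with $\lambda=(1+C\rho^2)/2$ and perturbing a single $\alpha_S$.
\end{itemize}

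Finally, your plan to show that realizable types cover every color, and that every color can be forced as a unique Phase~II minimizer, is unproven and unnecessary. The paper defines $\mathcal R$ and $\mathcal X$ abstractly as whatever witnesses can realize, and proves closure: the palette or color selected from counts supported on them stays inside them. Then $v$'s final color automatically lies in $E(S)$, and only those colors need to be attacked.
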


The proof combines a numerical obstruction with a graph construction. For a
target relative defect $\rho<1/\sqrt C$, a constrained potential minimization
in \Cref{sec:lowerBound} produces earlier-neighbor masses for which every
candidate palette fails the numerical feasibility condition: later-neighbor
colors can be allocated so that the worst-case defect bound exceeds $\rho$
for every color in the selected palette.

To obtain a large actual defect, we must also ensure that the required
palette and final colors arise in an execution. In
\Cref{sec:TechnicalDetailsLowerBound-Appendix}, we identify palettes that
earlier neighbors can announce and effective colors which we can force nodes to choose. We then show that, for every fixed $\rho<1/\sqrt C$, there are finite trees of arbitrarily large maximum degree $\Delta$ containing a node with defect at least
$(\rho+\eps)\Delta$, for some fixed $\eps>0$. For fixed $C$, $\rho$, candidate family, and rules, these trees have $O(\Delta)$ vertices and admit an initial proper coloring whose number of colors is independent of $\Delta$.
This contradicts even the additive $o(\deg(v))$ allowance in
\Cref{eq:defRelativeDefect}.

While already the basic two-sweep algorithm of \cite{BarenboimE09} achieves a relative defect $1/\lfloor\sqrt{C}\rfloor$ that is close to $1/\sqrt{C}$, especially for small values of $C$, there is still room for improvement. For example, for $C\in\{5,6,7,8\}$, the algorithm of \cite{BarenboimE09} only guarantees a relative defect of $1/2$, which can already be achieved with only $4$ colors. A small improvement can already be achieved by our Two-Sweep Algorithm for list defective coloring (cf.\ \Cref{sec:contributionLists}). If each node $v$ has all colors $\set{1,\dots,C}$ in its list $L_v$ and we use integer parameter $p\geq 1$ (i.e., $\calS$ contains all color sets of size $p$), \Cref{thm:2sweepalgolists} guarantees that we can compute a $d$-defective $C$-coloring if $C(d+1)>\max\set{p,C/p}\cdot\Delta$. The guaranteed relative defect is therefore at most $\max\set{p,C/p}/C=\max\set{p/C,1/p}$.

We show that this can be further improved by using color sets of two different sizes in the first pass of the Two-Sweep Algorithm. More precisely, we consider the following \emph{Two-Bucket Two-Sweep Algorithm} in which the colors are partitioned into two buckets and we use different sized sets in the two buckets.
\begin{itemize}
    \item We define $s_1:=\lfloor \sqrt{C}\rfloor$ and $s_2:=\lceil\sqrt{C}\rceil$.
    \item We partition the $C$ colors $\set{1,\dots,C}$ into two parts $\set{1,\dots,C_1}$ and $\set{C_1+1,\dots,C}$, where $C_1$ is chosen such that $C_1\geq s_1$ and $C_2:=C-C_1\geq s_2$. 
    \item The set $\calS$ consists of all $s_1$-element subsets of $\set{1,\dots,C_1}$ and all $s_2$-element subsets of $\set{C_1+1,\dots,C}$.
\end{itemize}

In the following, we also refer to the algorithm that only uses sets of a single size $p$ as the Single-Bucket Two-Sweep Algorithm (i.e., the algorithm of \Cref{thm:2sweepalgolists} when all nodes have $[C]$ as their color list). For the Two-Bucket Two-Sweep Algorithm, a relaxed linear program provides an upper bound on every execution and can be solved explicitly.

\begin{restatable}{theorem}{restateBucketExact}\label{thm:2bucketexact}
      Let $C_1+C_2=C$, $C_1\geq\lfloor\sqrt C\rfloor$, and $C_2\geq\lceil\sqrt C\rceil$. The Two-Bucket Two-Sweep Algorithm achieves a relative defect $\rho_{2B}$ satisfying
    \[
    \rho_{2B} \leq \max\set{
    \frac{\lfloor\sqrt{C}\rfloor\cdot\lceil\sqrt{C}\rceil}{\lfloor\sqrt{C}\rfloor\cdot C_2 + \lceil\sqrt{C}\rceil\cdot C_1},
    \frac{\lceil\sqrt{C}\rceil^2}{C_2+\lfloor\sqrt{C}\rfloor\cdot\lceil\sqrt{C}\rceil^2}
    }.
    \]
\end{restatable}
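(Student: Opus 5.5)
The plan is to reduce the Two-Bucket Two-Sweep Algorithm to a two-level averaging argument and then bound the worst case with a small linear program. Throughout, write $s_1=\lfloor\sqrt C\rfloor$ and $s_2=\lceil\sqrt C\rceil$, and fix a node $v$ of degree $\deg(v)$.

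\emph{Step 1 (second-sweep bound).} Let $E_v$ be the neighbors of $v$ with smaller input color and $\ell_v$ the number with larger input color. For a color $x$, let $a_x$ be the number of $u\in E_v$ with $x\in S_u$. Any neighbor that finally has color $x$ is either counted in $a_x$ or is a later neighbor that chose $x$ before $v$ in the second sweep. Hence, if $v$ picks $x\in S_v$ minimizing $a_x$ plus the number of later neighbors already on $x$, averaging over $S_v$ gives
\[
\mathrm{defect}(v)\le \frac{1}{|S_v|}\Bigl(\sum_{x\in S_v}a_x+\ell_v\Bigr).
\]
This holds because the later neighbors contribute at most $\ell_v$ in total over all colors of $S_v$. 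The first sweep chooses $S_v\in\calS$ minimizing exactly this averaging bound, so the defect is at most $\min_{S\in\calS}\frac{1}{|S|}\bigl(\sum_{x\in S}a_x+\ell_v\bigr)$.

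\emph{Step 2 (first-sweep averaging inside buckets).} Let $e_1,e_2$ be the numbers of earlier neighbors whose sets lie in bucket $1$ or bucket $2$, respectively. Then $\sum_{x\le C_1}a_x=s_1e_1$ and $\sum_{x>C_1}a_x=s_2e_2$. The $s_i$ smallest values among the $C_i$ counts of bucket $i$ have average at most the bucket average. Therefore the best set of bucket $i$ gives the bound $s_i e_i/C_i+\ell_v/s_i$, and
\[
\mathrm{defect}(v)\le \min\Bigl\{\frac{s_1e_1}{C_1}+\frac{\ell_v}{s_1},\ \frac{s_2e_2}{C_2}+\frac{\ell_v}{s_2}\Bigr\},\qquad e_1+e_2+\ell_v\le \deg(v).
\]
Here the hypotheses $C_1\ge s_1$ and $C_2\ge s_2$ ensure that both kinds of sets exist.

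\emph{Step 3 (the relaxed LP).} Normalize by $\deg(v)$ and relax $e_1,e_2,\ell$ to nonnegative reals with $e_1+e_2+\ell\le1$. The task is to maximize $t$ subject to $t\le s_1e_1/C_1+\ell/s_1$ and $t\le s_2e_2/C_2+\ell/s_2$. The optimum is attained at a vertex, which I will enumerate.
\begin{itemize}
\item With $\ell=0$, both constraints are tight, so $e_1=tC_1/s_1$ and $e_2=tC_2/s_2$. The budget constraint then gives $t=s_1s_2/(s_2C_1+s_1C_2)$, which is the first term.
\item With $e_1=0$ and both constraints tight, $\ell=s_1t$ and $e_2=tC_2(s_2-s_1)/s_2^2$. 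This gives $t=s_2^2/(s_1s_2^2+C_2(s_2-s_1))$. If $s_2=s_1+1$, this equals the second term. If $s_1=s_2$, it equals $1/\sqrt C$, which coincides with the first term (then $s_1s_2/(s_2C_1+s_1C_2)=1/\sqrt C$), so the maximum is unchanged.
\item The remaining vertices have some single constraint slack, for example $\ell=1$ with $t=1/s_2$, or $e_2=0$. Since $1/s_1$ attached to $\ell$ is the larger coefficient, I must check that these are dominated by the two vertices above. I would verify this by comparing objective values directly, or via LP duality: exhibit multipliers $\lambda_1+\lambda_2=1$ with $\lambda_1 s_1/C_1\le\mu$, $\lambda_2 s_2/C_2\le\mu$, and $\lambda_1/s_1+\lambda_2/s_2\le\mu$, where $\mu$ is the claimed maximum.
\end{itemize}

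\emph{Step 4 (integrality and the main obstacle).} Integrality and rounding only cost $O(1)$, which is absorbed into the $o(\deg(v))$ term in \Cref{eq:defRelativeDefect}. The main obstacle I expect is the duality and vertex check in Step 3. In particular, I must confirm that the vertex with $e_1=0$ is never beaten by the vertex with $e_2=0$ using $\ell$ and $e_1$. The constraint $t\le \ell/s_2$ is the weaker one at $e_2=0$, so that vertex should be dominated, but this comparison has to be carried out carefully in the cases $s_1=s_2$ and $s_2=s_1+1$.
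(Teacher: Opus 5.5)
Your proof is correct and arrives at exactly the paper's two-bucket LP in \Cref{eq:LPbukets}, with $e_i=a_i$ and $\ell=1-a_1-a_2$, but by a more direct path.

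\textbf{How you reach the LP.} The paper first bounds the defect by the general relaxed LP \Cref{eq:generalQualityLP} over all palette masses. It then invokes the symmetrization \Cref{lemma:EqualAlphas} to obtain an optimum that is constant within each bucket. You instead apply the underlying averaging fact directly at the node: the best $s_i$-subset of bucket $i$ is no worse than the bucket average $s_ie_i/C_i+\ell/s_i$. So no LP symmetrization is needed.

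\textbf{How you solve it.} The paper uses an exchange argument (\Cref{lemma:optimalLPSolution}) to show that some optimum has $f_1=f_2$ with $a_1=0$ or $a_1+a_2=1$, and it treats square $C$ separately. Your vertex enumeration handles both cases uniformly. The paper's detour yields a relaxation for arbitrary candidate families and a symmetry lemma for any number of buckets. Your route is shorter for this particular theorem.

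\textbf{Closing the check you left open.} Write $f_1,f_2$ for your two right-hand sides $s_1e_1/C_1+\ell/s_1$ and $s_2e_2/C_2+\ell/s_2$.
On the face $e_2=0$ you have $f_2=\ell/s_2\le f_1$, so the objective there is at most $1/s_2$, attained at $\ell=1$.
That point is an endpoint of the edge $e_1=0$, $e_2=1-\ell$. Along this edge, $f_1=\ell/s_1$ increases while $f_2$ has slope $1/s_2-s_2/C_2<0$, because $C_2\le C-s_1<s_2^2$. Hence the crossing value on that edge is at least $1/s_2$; in the nonsquare case this crossing value is the second term of the theorem.
The remaining simplex vertices $(1,0,0)$ and $(0,1,0)$ give $0$. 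So the LP value is exactly the maximum of your first two vertex values.

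\textbf{Step~4.} This step is unnecessary. The normalized integer counts at a node are feasible LP points, so the bound holds exactly, with no $o(\deg(v))$ loss.
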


For many values of $C\geq 4$, choosing between the single-bucket algorithm and \Cref{thm:2bucketexact} improves the previously best known relative defect $1/\lfloor\sqrt C\rfloor$ of \cite{BarenboimE09}, with substantial improvements for several small values of $C$. For squares and $C=s^2+1$, the resulting bound matches this baseline; for $C=3$, it recovers the known relative defect $2/3$ of \cite{BalliuHLOS19}. \Cref{tab:short_defect_data} compares the previous bounds with the best upper bounds obtained by choosing between the single-bucket designs and all feasible integer splits in \Cref{thm:2bucketexact}. In particular, for $C=6,7,8$, our bounds are $3/7$, $9/23$, and $3/8$, respectively. Relative to the previous bound $1/2$, these reduce the guaranteed defect by approximately $14.3\%$, $21.7\%$, and $25\%$, respectively. The following Theorem simplifies this upper bound.

\begin{restatable}{theorem}{restatetwobucketasymptotic}
\label{thm:2bucketasymptotic}
    By choosing between the Single-Bucket and the Two-Bucket Two-Sweep Algorithm, we can achieve a $C$-coloring with a relative defect $\rho$ satisfying, as $C\to\infty$,
    \[
    \rho \leq \begin{cases}
    1/\sqrt{C} & \text{if $C=s^2$ for $s\in \mathbb{N}$,}\\
    1/\sqrt{C}+1/(2C^{3/2})+O(1/C^{5/2}) & \text{if $C=s^2\pm 1$ for $s\in \mathbb{N}$}\\
    1/\sqrt{C} + 1/(8C^{3/2}) + O(1/C^2) & \text{otherwise}.
    \end{cases}
    \]
\end{restatable}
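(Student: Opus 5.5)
The plan is a direct case analysis. In each case we plug a concrete parameter choice into one of two bounds: the single-bucket bound $\max\{p/C,1/p\}$ (from \Cref{thm:2sweepalgolists} with all lists equal to $[C]$), or \Cref{thm:2bucketexact}. We then Taylor-expand in $1/C$. Throughout write $s:=\lfloor\sqrt C\rfloor$.

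\emph{Square and near-square cases.} If $C=s^2$, take the single-bucket algorithm with $p=s$; this gives $\max\{s/C,1/s\}=1/s=1/\sqrt C$.

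If $C=s^2+1$, take $p=s$. The bound is $1/s=(C-1)^{-1/2}=C^{-1/2}(1-1/C)^{-1/2}=1/\sqrt C+1/(2C^{3/2})+O(C^{-5/2})$.

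If $C=s'^2-1$ with $s'=\lceil\sqrt C\rceil$, take $p=s'$. Then $\max\{s'/C,1/s'\}=s'/C=\sqrt{C+1}/C=C^{-1/2}(1+1/C)^{1/2}$, which expands to the same asymptotics.

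\emph{Generic case.} Here $C=s^2+t$ with $2\le t\le 2s-2$, so $\lceil\sqrt C\rceil=s+1$. Write $\sqrt C=s+\delta$ with $\delta\in(0,1)$. We use \Cref{thm:2bucketexact} with $a=s$ and $b=s+1$. Using $aC_2+bC_1=aC+C_1=(a+1)C-C_2$, the two terms become
\[
f_1(C_2)=\frac{ab}{(a+1)C-C_2}\quad\text{(increasing in $C_2$)},\qquad f_2(C_2)=\frac{b^2}{C_2+ab^2}\quad\text{(decreasing in $C_2$)}.
\]
So the maximum is minimized near the real crossing point $C_2^\ast$ where $f_1=f_2$. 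We then take $C_2$ to be the nearest integer to $C_2^\ast$. Both derivatives are $O(1/s^4)=O(C^{-2})$, so this rounding costs only $O(C^{-2})$. That is exactly the error allowed in the third case.

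Solving $f_1=f_2$ is linear in $C_2$:
\[
C_2^\ast=\frac{(a+1)b^2C-a^2b^3}{b^2+ab}=\frac{b\big((a+1)C-a^2b\big)}{a+b}.
\]
Substituting $C=(s+\delta)^2$ should give $C_2^\ast\approx(s+1)^2\delta$. The common value should then be $\frac{1}{\sqrt C}\bigl(1+\Theta(\delta(1-\delta)/C)\bigr)$. A preliminary computation shows that at $C_2=(s+1)^2\delta$ the defect of $f_1$ from $1/\sqrt C$ is governed by $(s+1)\delta(1-\delta)$. Carrying the expansion through to relative order $1/C$, the excess should come out as $\frac{\delta(1-\delta)}{2}\cdot C^{-3/2}+O(C^{-2})$. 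Since $\delta(1-\delta)\le 1/4$, this yields the claimed $1/(8C^{3/2})$.

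\emph{Feasibility and main obstacle.} We must check the side conditions $C_1\ge s$ and $C_2\ge s+1$. Since $t\ge2$, we have $\delta\gtrsim 1/s$, so $C_2^\ast\approx s^2\delta\gtrsim s$. Similarly, $C_1=C-C_2^\ast\approx(1-\delta)s^2\gtrsim s$ because $t\le 2s-2$. For the few boundary values of $t$ where the rounded $C_2$ violates a constraint, the first and second cases would be handled by clamping $C_2$ to the boundary. Clamping at the boundary costs only $O(1/s^4)$ more. Alternatively, one can fall back to single-bucket, for which $\delta$ is within $O(1/s)$ of $0$ or $1$ and the bound of the second case applies with error absorbed.

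I expect the main obstacle to be the bookkeeping in this expansion. One must get the exact constant $\tfrac12\delta(1-\delta)$ rather than just the order $C^{-3/2}$, and make the error terms uniform in $\delta$. The first thing I would try is to expand $f_2(C_2^\ast)$ directly in the variables $s$ and $\delta$, and only at the end convert powers of $s$ into powers of $\sqrt C$ via $s=\sqrt C-\delta$.
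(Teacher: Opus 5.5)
Your plan follows the paper's proof. It uses the same case split, the same single-bucket choices $p=s$ and $p=s+1$ for the square and near-square cases, and in the generic case the same crossing point $C_2^*=(s+1)^2(C-s^2)/(2s+1)$, followed by integer rounding at a cost of $O(C^{-2})$.

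The one real difference is how you bound the crossing value $z^*_{2B}=(2s+1)/(C+s^2+s)$. The paper writes it as $2u/(u^2+C-\tfrac14)$ with $u=s+\tfrac12$ and applies AM--GM to get $z^*_{2B}\le 1/\sqrt{C-1/4}$, which avoids all bookkeeping. Your expansion is also correct, and it can be made exact in one line. Indeed, $z^*_{2B}-1/\sqrt C=\delta(1-\delta)/\bigl(\sqrt C\,(u^2+C-\tfrac14)\bigr)$ and $u^2+C-\tfrac14=2C+O(\sqrt C)$. This gives exactly your excess $\delta(1-\delta)/(2C^{3/2})+O(C^{-2})$ and shows where the $1/8$ comes from.

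There are two corrections.

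First, since $(s+1)^2-1=s^2+2s$, the generic range is $2\le t\le 2s-1$. As written, you omit $C=(s+1)^2-2$ (e.g.\ $C=7,14$), although your argument covers it unchanged.

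Second, replace your feasibility discussion with the exact formulas. For $t\ge 2$, $C_2^*=(s+1)^2t/(2s+1)>s+1$. For $t\le 2s-1$, $C_1^*=s^2(2s+1-t)/(2s+1)\ge 2s^2/(2s+1)>s-\tfrac12$. So nearest-integer rounding is always feasible and no clamping is needed.

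This matters because your single-bucket fallback would not work. For $t=2$ it gives $1/s=(C-2)^{-1/2}=1/\sqrt C+1/C^{3/2}+O(C^{-5/2})$. Even the second-case excess $1/(2C^{3/2})$ exceeds the target $1/(8C^{3/2})$ by order $C^{-3/2}$, which $O(C^{-2})$ does not absorb. Likewise, the claim that clamping costs only $O(s^{-4})$ would require the distance from $C_2^*$ to the feasible interval to be $O(1)$. Your $\gtrsim$ estimates do not establish this; the exact formulas above do (the distance is zero).
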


\begin{table}[ht]
\centering
\small \setlength{\tabcolsep}{4pt} \renewcommand{\arraystretch}{1.4}
\begin{tabular}{|c|c|c|c|c|c|c|}
\hline
\textbf{C} & \textbf{previous bound} & \textbf{our bound} & $\approx$ & \textbf{$C_1$} & \textbf{$C_2$} & \textbf{single bucket} \\ \hline
\textbf{3} & $2/3$ & $2/3$ & 0.667 & 0 & 3 & * \\ \hline
\textbf{4} & $1/2$ & $1/2$ & 0.500 & 4 & 0 & * \\ \hline
\textbf{5} & $1/2$ & $1/2$ & 0.500 & 5 & 0 & * \\ \hline
\textbf{6} & $1/2$ & $3/7$ & 0.429 & 2 & 4 &  \\ \hline
\textbf{7} & $1/2$ & $9/23$ & 0.391 & 2 & 5 &  \\ \hline
\textbf{8} & $1/2$ & $3/8$ & 0.375 & 0 & 8 & * \\ \hline
\textbf{9} & $1/3$ & $1/3$ & 0.333 & 9 & 0 & * \\ \hline
\textbf{10} & $1/3$ & $1/3$ & 0.333 & 10 & 0 & * \\ \hline
\textbf{11} & $1/3$ & $4/13$ & 0.308 & 6 & 5 &  \\ \hline
\textbf{12} & $1/3$ & $12/41$ & 0.293 & 5 & 7 &  \\ \hline
\textbf{13} & $1/3$ & $16/57$ & 0.281 & 4 & 9 &  \\ \hline
\textbf{14} & $1/3$ & $16/59$ & 0.271 & 3 & 11 &  \\ \hline
\textbf{15} & $1/3$ & $4/15$ & 0.267 & 0 & 15 & * \\ \hline
\textbf{16} & $1/4$ & $1/4$ & 0.250 & 16 & 0 & * \\ \hline
\end{tabular}
\caption{Upper bounds on the relative defect for a given number of colors $C$. The previous bound is $2/3$ for $C=3$~\cite{BalliuHLOS19} and $1/\lfloor\sqrt C\rfloor$ for $C\geq4$~\cite{BarenboimE09}. Our bound is the best relaxed-LP bound within the analyzed Single-Bucket and Two-Bucket Two-Sweep constructions. The values $C_1$ and $C_2$ are the bucket sizes. An asterisk denotes a single-bucket construction. An extension of this table is given in \Cref{sec:full-table}.}
\label{tab:short_defect_data}
\end{table}

The results of \Cref{thm:2bucketexact,thm:2bucketasymptotic} also hold for the more general oriented defective coloring problem where the input graph is directed and the defect only limits the number of equally colored outneighbors. The relative defect then refers to the fraction of outneighbors that have the same color. When formally stating and analyzing the Two-Sweep Algorithm in the following, we directly use the oriented version as we need this when applying our list defective algorithm to get a fast proper $(\Delta+1)$-coloring algorithm in \CONGEST (cf.\ \Cref{thm:2sweepalgolists,thm:CONGESTcoloring}). Also note that in the first case of \Cref{thm:2bucketasymptotic}, i.e., where $C$ is a square number, the relative defect is optimal among Two-Sweep Algorithms satisfying the assumptions of \Cref{thm:2sweeplower}.
\newpage
\section{The Two-Sweep Algorithm}
\label{sec:genericTwoSweep}

Assume that the given input graph $G$ is provided with an initial proper $q$-coloring and an edge orientation. Further, each node $v$ is equipped with a color list $L_v\subseteq\calC$ (where $\calC$ is the global color space) and with a defect function $d_v$ that assigns an allowed defect $d_v(x)$ to each color $x \in L_v$. That is, if $v$ gets colored with color $x$, then at most $d_v(x)$ outneighbors of $v$ are allowed to also be colored with color $x$. Each node also comes with a nonempty candidate family $\varnothing\neq\mathcal{S}_v \subseteq 2^{L_v}\setminus\{\varnothing\}$ of palettes. In this section, the candidate family will be treated as part of the input; in our later implementations, we will present different approaches to construct $\mathcal{S}_v$.

\begin{figure}[ht]
  \centering
  \tikzset{every picture/.style={line width=0.75pt}} 
\begin{tikzpicture}[x=0.85pt,y=0.85pt, yscale=-1, xscale=1, scale=0.75]

\draw   (136,95.5) .. controls (136,52.15) and (164.09,17) .. (198.75,17) .. controls (233.41,17) and (261.5,52.15) .. (261.5,95.5) .. controls (261.5,138.85) and (233.41,174) .. (198.75,174) .. controls (164.09,174) and (136,138.85) .. (136,95.5) -- cycle ;

\draw   (406,98.5) .. controls (406,55.15) and (434.09,20) .. (468.75,20) .. controls (503.41,20) and (531.5,55.15) .. (531.5,98.5) .. controls (531.5,141.85) and (503.41,177) .. (468.75,177) .. controls (434.09,177) and (406,141.85) .. (406,98.5) -- cycle ;

\draw [shorten >= 4pt, shorten <= 4pt, ->, >=Stealth, line width=0.80pt]   (332.5,93) -- (203.25,55.25) ;

\draw [shorten >= 4pt, shorten <= 4pt, ->, >=Stealth, line width=0.80pt]   (332.5,93) -- (155.25,77.25) ;

\draw [shorten >= 4pt, shorten <= 4pt, ->, >=Stealth, line width=0.80pt]   (332.5,93) -- (167.5,99) ;

\draw [shorten >= 4pt, shorten <= 4pt, ->, >=Stealth, line width=0.80pt]   (332.5,93) -- (171.25,123.25) ;

\draw [shorten >= 4pt, shorten <= 4pt, ->, >=Stealth, line width=0.80pt]   (332.5,93) -- (215.25,137.25) ;

\draw [shorten >= 4pt, shorten <= 4pt, ->, >=Stealth, line width=0.80pt]   (332.5,93) -- (473.25,58.25) ;

\draw [shorten >= 4pt, shorten <= 4pt, ->, >=Stealth, line width=0.80pt]   (332.5,93) -- (425.25,80.25) ;

\draw [shorten >= 4pt, shorten <= 4pt, ->, >=Stealth, line width=0.80pt]   (332.5,93) -- (468.75,93) ;

\draw [shorten >= 4pt, shorten <= 4pt, ->, >=Stealth, line width=0.80pt]   (332.5,93) -- (508.25,118.25) ;

\draw [shorten >= 4pt, shorten <= 4pt, ->, >=Stealth, line width=0.80pt]   (332.5,93) -- (441.25,126.25) ;

\draw [shorten >= 4pt, shorten <= 4pt, ->, >=Stealth, line width=0.80pt]   (332.5,93) -- (471.25,152.25) ;

\begin{scope}
    \draw  [fill={rgb, 255:red, 74; green, 144; blue, 226 }  ,fill opacity=1 ] (150,77.25) .. controls (150,74.35) and (152.35,72) .. (155.25,72) .. controls (158.15,72) and (160.5,74.35) .. (160.5,77.25) .. controls (160.5,80.15) and (158.15,82.5) .. (155.25,82.5) .. controls (152.35,82.5) and (150,80.15) .. (150,77.25) -- cycle ;

    \draw  [color={rgb, 255:red, 0; green, 0; blue, 0 }  ,draw opacity=1 ][fill={rgb, 255:red, 74; green, 144; blue, 226 }  ,fill opacity=1 ] (220.4,136.21) .. controls (220.97,139.05) and (219.13,141.82) .. (216.29,142.4) .. controls (213.45,142.97) and (210.68,141.13) .. (210.1,138.29) .. controls (209.53,135.45) and (211.37,132.68) .. (214.21,132.1) .. controls (217.05,131.53) and (219.82,133.37) .. (220.4,136.21) -- cycle ;

    \draw  [fill={rgb, 255:red, 74; green, 144; blue, 226 }  ,fill opacity=1 ] (198,55.25) .. controls (198,52.35) and (200.35,50) .. (203.25,50) .. controls (206.15,50) and (208.5,52.35) .. (208.5,55.25) .. controls (208.5,58.15) and (206.15,60.5) .. (203.25,60.5) .. controls (200.35,60.5) and (198,58.15) .. (198,55.25) -- cycle ;

    \draw  [fill={rgb, 255:red, 74; green, 144; blue, 226 }  ,fill opacity=1 ] (166,123.25) .. controls (166,120.35) and (168.35,118) .. (171.25,118) .. controls (174.15,118) and (176.5,120.35) .. (176.5,123.25) .. controls (176.5,126.15) and (174.15,128.5) .. (171.25,128.5) .. controls (168.35,128.5) and (166,126.15) .. (166,123.25) -- cycle ;

    \draw  [fill={rgb, 255:red, 227; green, 12; blue, 38 }  ,fill opacity=1 ] (325,93) .. controls (325,97.14) and (328.36,100.5) .. (332.5,100.5) .. controls (336.64,100.5) and (340,97.14) .. (340,93) .. controls (340,88.86) and (336.64,85.5) .. (332.5,85.5) .. controls (328.36,85.5) and (325,88.86) .. (325,93) -- cycle ;

    \draw  [fill={rgb, 255:red, 74; green, 144; blue, 226 }  ,fill opacity=1 ] (163.12,95.68) .. controls (161.28,98.1) and (161.76,101.55) .. (164.18,103.38) .. controls (166.6,105.22) and (170.05,104.74) .. (171.88,102.32) .. controls (173.72,99.9) and (173.24,96.45) .. (170.82,94.62) .. controls (168.4,92.78) and (164.95,93.26) .. (163.12,95.68) -- cycle ;

    \draw  [fill={rgb, 255:red, 126; green, 211; blue, 33 }  ,fill opacity=1 ] (420,80.25) .. controls (420,77.35) and (422.35,75) .. (425.25,75) .. controls (428.15,75) and (430.5,77.35) .. (430.5,80.25) .. controls (430.5,83.15) and (428.15,85.5) .. (425.25,85.5) .. controls (422.35,85.5) and (420,83.15) .. (420,80.25) -- cycle ;

    \draw  [fill={rgb, 255:red, 126; green, 211; blue, 33 }  ,fill opacity=1 ] (476.4,151.21) .. controls (476.97,154.05) and (475.13,156.82) .. (472.29,157.4) .. controls (469.45,157.97) and (466.68,156.13) .. (466.1,153.29) .. controls (465.53,150.45) and (467.37,147.68) .. (470.21,147.1) .. controls (473.05,146.53) and (475.82,148.37) .. (476.4,151.21) -- cycle ;

    \draw  [fill={rgb, 255:red, 126; green, 211; blue, 33 }  ,fill opacity=1 ] (513.4,117.21) .. controls (513.97,120.05) and (512.13,122.82) .. (509.29,123.4) .. controls (506.45,123.97) and (503.68,122.13) .. (503.1,119.29) .. controls (502.53,116.45) and (504.37,113.68) .. (507.21,113.1) .. controls (510.05,112.53) and (512.82,114.37) .. (513.4,117.21) -- cycle ;

    \draw  [fill={rgb, 255:red, 126; green, 211; blue, 33 }  ,fill opacity=1 ] (468,58.25) .. controls (468,55.35) and (470.35,53) .. (473.25,53) .. controls (476.15,53) and (478.5,55.35) .. (478.5,58.25) .. controls (478.5,61.15) and (476.15,63.5) .. (473.25,63.5) .. controls (470.35,63.5) and (468,61.15) .. (468,58.25) -- cycle ;

    \draw  [fill={rgb, 255:red, 126; green, 211; blue, 33 }  ,fill opacity=1 ] (468,58.25) .. controls (468,55.35) and (470.35,53) .. (473.25,53) .. controls (476.15,53) and (478.5,55.35) .. (478.5,58.25) .. controls (478.5,61.15) and (476.15,63.5) .. (473.25,63.5) .. controls (470.35,63.5) and (468,61.15) .. (468,58.25) -- cycle ;

    \draw  [fill={rgb, 255:red, 126; green, 211; blue, 33 }  ,fill opacity=1 ] (436,126.25) .. controls (436,123.35) and (438.35,121) .. (441.25,121) .. controls (444.15,121) and (446.5,123.35) .. (446.5,126.25) .. controls (446.5,129.15) and (444.15,131.5) .. (441.25,131.5) .. controls (438.35,131.5) and (436,129.15) .. (436,126.25) -- cycle ;

    \draw  [fill={rgb, 255:red, 126; green, 211; blue, 33 }  ,fill opacity=1 ] (463.25,93) .. controls (463.25,96.04) and (465.71,98.5) .. (468.75,98.5) .. controls (471.79,98.5) and (474.25,96.04) .. (474.25,93) .. controls (474.25,89.96) and (471.79,87.5) .. (468.75,87.5) .. controls (465.71,87.5) and (463.25,89.96) .. (463.25,93) -- cycle ;
\end{scope}

\draw [line width=2.25]    (102,211.5) -- (585,211.5) ;
\draw [shift={(589,211.5)}, rotate = 180] [color={rgb, 255:red, 0; green, 0; blue, 0 }  ][line width=2.25]    (17.49,-5.26) .. controls (11.12,-2.23) and (5.29,-0.48) .. (0,0) .. controls (5.29,0.48) and (11.12,2.23) .. (17.49,5.26)   ;

\draw (182,23) node [anchor=north west][inner sep=0.75pt]   [align=left] {$N_<(v)$};
\draw (451,27) node [anchor=north west][inner sep=0.75pt]   [align=left] {$N_>(v)$};
\draw (231,215) node [anchor=north west][inner sep=0.75pt]   [align=left] {increasing initial coloring};
\draw (341,71.75) node   [align=left] {\begin{minipage}[lt]{21.76pt}\setlength\topsep{0pt}
$v$
\end{minipage}};

\end{tikzpicture}
  \caption{The figure illustrates a node $v$ (in red) together with its outneighbors in $N_<(v)$ (in blue) and $N_>(v)$ (in green). Note that when node $v$ has to pick a set $S_v \in \mathcal{S}_v$ in Phase $I$, all blue nodes $u$ have already picked their sets $S_u \in \mathcal{S}_u$ and have sent them to $v$. In Phase $II$, $v$ has to output a final color from $S_v$. At this point, the green nodes have already picked their final colors. Node $v$ therefore has to pick its color based on knowing the sets $S_u$ for nodes in $N_<(v)$ and the final colors of nodes in $N_>(v)$.}
  \label{fig:sweepidea}
\end{figure}

We note that throughout the algorithm, nodes do not consider their inneighbors to guide their decisions; in other words, nodes only look towards their outneighbors at all times. The algorithm proceeds in two phases that we call \textbf{Phase I} and \textbf{Phase II}. In both phases, we iterate (sweep) through the $q$ initial colors, in ascending order in Phase I and descending order in Phase II. Let $v$ be a node with color $c \in \{1, \ldots, q\}$ from the initial $q$-coloring. Then, $N_<(v)$ is defined as the set of outneighbors of $v$ that have an initial color $\in\{1, \ldots, c-1\}$ and similarly, $N_>(v)$ is the set of outneighbors that have an initial color $\in\{c+1, \ldots, q\}$. Note that by the definition of a proper coloring, there is no outneighbor that is also colored with color $c$ and thus $\beta_v = |N_<(v)| + |N_>(v)|$. These notations are illustrated in \Cref{fig:sweepidea}. The pseudocode of the algorithm is given in \Cref{alg:GenericSweep}.

\paragraph*{Phase I} In this phase, we iterate over the initial $q$-coloring of the nodes (in ascending order). In each step, all nodes $v$ sharing the same initial color will select the best palette $S_v \in \mathcal{S}_v$ --- with respect to their quality defined in \Cref{def:qualityGeneric} --- in parallel and send these sets to their neighbors. We call these nodes \emph{active}. For an active node, we define $k_v(x)$ for a color $x \in L_v$ as the number of outneighbors $u$ of $v$ in $N_<(v)$ such that $x \in S_u$. Note that all nodes in $N_<(v)$ have already selected their palettes when $v$ becomes active. The quality of a palette $S \in \mathcal{S}_v$ is formally defined as follows.
\begin{align}
    \label{def:qualityGeneric}
    Q_v(S) := \frac{1}{|S|} \left( |N_>(v)| + \sum_{x \in S \cap L_v} (k_v(x) - d_v(x))  \right)
\end{align}
Let us briefly explain the averaging argument behind this definition. For each color $x$, the count $k_v(x)$ bounds the number of outneighbors in $N_<(v)$ that can eventually choose $x$. Each outneighbor in $N_>(v)$ will choose only one final color, so these neighbors contribute at most $|N_>(v)|$ conflicts in total when summed over the colors in a candidate palette $S$. After subtracting the allowed defects $d_v(x)$ and dividing by $|S|$, we obtain an upper bound on the average of the Phase~II objective over $S$. The minimum of that objective is no larger than its average. Thus, $Q_v(S)$ bounds an achievable defect excess for $v$ if $v$ chooses a color minimizing the Phase~II objective from $S$.  \par

In this generic Two-Sweep Algorithm we consider $\mathcal{S}_v$ to be part of the input. 
However, in our implementations in \Cref{sec:OLDC2Sweep} and \Cref{sec:SweepAndBuckets}, we give explicit constructions for $\mathcal{S}_v$ such that the set with the best quality (i.e., the minimum value) is guaranteed to meet necessary requirements (e.g., \Cref{eq:requirementPhaseIGeneric}, as shown later).

\paragraph*{Phase II} Every node $v\in V$ already holds its nonempty palette $S_v \subseteq L_v$. We now revisit the $q$ colors in \textit{reverse} (i.e., descending) order. When it is $v$'s turn, it selects its final color from $S_v$ and announces this choice to its neighbors. Consequently: (1) All outneighbors in $N_>(v)$ have already fixed their colors. (2) All outneighbors in $N_<(v)$ are still undecided about their final color. Let $r_v(x)$ denote the number of outneighbors in $N_>(v)$ that decided on color $x \in S_v$. We call node $v$ \emph{active in Phase II} in the iteration where $v$ has to decide on such a color. When $v$ is active, it chooses a color $x \in S_v$ that minimizes:
\begin{align}\label{eq:minimizePhaseIIGeneric}
    k_v(x) + r_v(x) - d_v(x). 
\end{align}

\begin{algorithm}[H]
    \caption{TwoSweep($q, \mathcal{S}_v$)} \label{alg:GenericSweep}
    \begin{algorithmic}[1]
        \Require Proper $q$-coloring, constraint arcs, input lists $L_v$, defect functions $d_v$,
        and nonempty candidate families
        $\varnothing\neq\mathcal{S}_v\subseteq 2^{L_v}\setminus\{\varnothing\}$.
        \Ensure An OLDC solution if every selected $S_v$ satisfies
        \Cref{eq:requirementPhaseIGeneric}; runtime $O(q)$ in \LOCAL.
        \State Send the initial color to all neighbors.
        \State \textit{/** Phase I **/}
        \For{$c=1,\ldots,q$}
            \State All nodes $v$ with initial color $c$ act in parallel.
            \State For all $x \in L_v$, let $k_v(x)$ be the number of outneighbors $u$ of $v$ in $N_<(v)$ with $x \in S_u$. 
            \State Select a palette $S_v \in \mathcal{S}_v$ as follows:
            \Statex \hspace{\algorithmicindent} $S_v$ minimizes the quality $Q_v(S_v)$ among all sets in $\mathcal{S}_v$. \Comment{\Cref{def:qualityGeneric}}
            \State Send $S_v$ to all neighbors.
        \EndFor
        \State \textit{/** Phase II **/}
        \For{$c=q,\ldots,1$}
            \State All nodes $v$ with initial color $c$ act in parallel.
            \State For all $x \in S_v$, let $r_v(x)$ be the number of outneighbors in $N_>(v)$ that already decided on color $x$.
            \State Compute a color $x_v \in S_v$ that minimizes
            \begin{align*}
                k_v(x_v) + r_v(x_v) - d_v(x_v).
            \end{align*}
            \State Decide on color $x_v$.
            \State Send $x_v$ to all neighbors.
        \EndFor
    \end{algorithmic}
\end{algorithm}

\subsection{Analysis of the generic Two-Sweep Algorithm}
\label{sec:genericAlgo}

We next show that for each active node $v$ there exists a color satisfying this property, ensuring the algorithm correctly solves oriented list defective coloring instances. 

\begin{lemma}\label{lemm:corectnessOfGenericSweep}
    Consider a node $v$ in Phase II. If
    \begin{align}
        \label{eq:requirementPhaseIGeneric}
        Q_v(S_v) < 1.
    \end{align}
    holds, then the color $x_v \in S_v$ chosen by the algorithm satisfies
    \begin{align}
    \label{eq:ConditionPhase2Generic}
        k_v(x_v) + r_v(x_v) \leq d_v(x_v)
    \end{align}
    when $v$ becomes active in Phase $II$. Further, at most $d_v(x_v)$ outneighbors of $v$ will have decided on the same color after termination of Phase $II$.
\end{lemma}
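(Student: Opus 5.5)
The plan is an averaging argument in Phase II, followed by a counting argument over the outneighbors of $v$.

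\textbf{Existence of a good color.} When $v$ becomes active in Phase~II, each outneighbor in $N_>(v)$ has already fixed exactly one final color. Hence
\[
\sum_{x\in S_v} r_v(x)\le |N_>(v)|.
\]
Since $S_v\subseteq L_v$, we have $S_v\cap L_v=S_v$. Summing the Phase~II objective over $S_v$ therefore gives
\[
\sum_{x\in S_v}\bigl(k_v(x)+r_v(x)-d_v(x)\bigr)\le |N_>(v)|+\sum_{x\in S_v}\bigl(k_v(x)-d_v(x)\bigr)=|S_v|\cdot Q_v(S_v).
\]
The minimum over $x\in S_v$ is at most the average, so the chosen color $x_v$ satisfies
\[
k_v(x_v)+r_v(x_v)-d_v(x_v)\le Q_v(S_v)<1.
\]
The left-hand side is an integer, so it is at most $0$. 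This is exactly \Cref{eq:ConditionPhase2Generic}.

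\textbf{Final defect bound.} The proper initial coloring gives the partition of the outneighbors of $v$ into $N_<(v)$ and $N_>(v)$. I bound the outneighbors that end with color $x_v$ separately in each part.

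For $u\in N_>(v)$, $u$ decided before $v$ in Phase~II and never changes its color. So exactly $r_v(x_v)$ such neighbors end with color $x_v$.

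For $u\in N_<(v)$, $u$ decides later in Phase~II, but its final color lies in $S_u$. Since $S_u$ was fixed in Phase~I before $v$ was active, $u$ can end with color $x_v$ only if $x_v\in S_u$. By definition, at most $k_v(x_v)$ such neighbors satisfy $x_v\in S_u$.

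Adding the two parts, the number of outneighbors of $v$ with color $x_v$ is at most $k_v(x_v)+r_v(x_v)\le d_v(x_v)$.

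\textbf{Main subtlety.} The integrality step is the only subtle point: it turns the strict bound $Q_v(S_v)<1$ into a non-strict bound of $0$. The rest is bookkeeping about which quantities are frozen when $v$ acts.
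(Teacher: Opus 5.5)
Your proof is correct and follows essentially the same route as the paper: you bound the sum of the Phase~II objective over $S_v$ by $|S_v|\cdot Q_v(S_v)<|S_v|$ and use integrality to conclude that the minimizer has objective at most $0$. You then split the outneighbors into $N_>(v)$ (already fixed, contributing $r_v(x_v)$) and $N_<(v)$ (restricted to palettes containing $x_v$, at most $k_v(x_v)$).
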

\begin{proof}
    Observe that $\sum_{x \in S_v} r_v(x)$ is at most the number of outneighbors in $N_>(v)$. Combining this with \Cref{eq:requirementPhaseIGeneric}, we get
    \begin{align*}
        \sum_{x \in S_v} \left(r_v(x) + k_v(x) -d_v(x) \right)
        \leq |N_>(v)| + \sum_{x \in S_v} (k_v(x) - d_v(x))
        = |S_v| \cdot Q_v(S_v) < |S_v|.
    \end{align*}
    Hence, some color $y\in S_v$ satisfies
    $k_v(y)+r_v(y)-d_v(y)<1$. Let $x_v$ be the minimizer selected by the
    algorithm. Its objective is no larger than that of $y$ and is an integer;
    therefore, $k_v(x_v)+r_v(x_v)-d_v(x_v)\leq0$.

    At most $r_v(x_v)$ nodes in $N_>(v)$ have already chosen $x_v$, and no
    further node from this set becomes active. Among the nodes in $N_<(v)$,
    only the at most $k_v(x_v)$ nodes whose palettes contain $x_v$ can later
    choose it. Thus, at most $k_v(x_v)+r_v(x_v)\leq d_v(x_v)$ outneighbors of
    $v$ receive color $x_v$.
\end{proof}

The following lemma concludes the analysis of the algorithm. 

\begin{lemma}\label{lemm:VanillaGenericColoring}
    Consider a graph $G$ with an initial proper $q$-coloring and a given edge
    orientation. If every palette selected by \Cref{alg:GenericSweep}
    satisfies \Cref{eq:requirementPhaseIGeneric}, then the algorithm solves the
    oriented list defective coloring instance in $O(q)$ \LOCAL rounds. Its
    largest message has
    \[
        O\left(\max_{v\in V}|S_v|\cdot\log|\calC|+\log q\right)
    \]
    bits. Consequently, the same implementation works in \CONGEST whenever
    this quantity is $O(\log n)$.
\end{lemma}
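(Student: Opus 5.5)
The plan is to obtain correctness directly from \Cref{lemm:corectnessOfGenericSweep} and then account separately for rounds and message sizes. For correctness, consider an arbitrary node $v$. By assumption its selected palette $S_v$ satisfies \Cref{eq:requirementPhaseIGeneric}, so \Cref{lemm:corectnessOfGenericSweep} applies. The chosen color $x_v\in S_v\subseteq L_v$ then has at most $d_v(x_v)$ outneighbors that end with color $x_v$. Before invoking the lemma I will check two points about the execution. First, every node $v$ is active exactly once in each phase, and the nodes active in the same iteration form an independent set, because the initial coloring is proper. Second, the quantities $k_v(\cdot)$ and $r_v(\cdot)$ that $v$ uses are well defined when it acts. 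When $v$ is active in Phase~I, every $u\in N_<(v)$ has smaller initial color and has therefore already announced $S_u$. When $v$ is active in Phase~II, every $u\in N_>(v)$ has already announced $x_u$. Since $N_<(v)$ and $N_>(v)$ together cover all outneighbors of $v$, the counting in the lemma's proof covers every outneighbor. So every node receives a color from its list that respects its defect bound, and this is exactly \Cref{def:OLDC}.

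For the round complexity, there is one initial round in which nodes exchange initial colors. After that, each of the $q$ iterations of Phase~I consists of a local computation followed by a single round in which $S_v$ is sent. Each iteration of Phase~II likewise uses one round to send $x_v$. This gives $1+2q=O(q)$ \LOCAL rounds. The local computations (computing $k_v$, minimizing $Q_v$ over $\mathcal{S}_v$, computing $r_v$) use only information already received, so they cost no extra rounds.

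For message sizes, there are three message types. The initial color needs $O(\log q)$ bits. A palette $S_v$ can be encoded as a list of $|S_v|$ colors from $\calC$, which needs $O(|S_v|\log|\calC|)$ bits. A final color needs $O(\log|\calC|)$ bits. Taking the maximum over all messages gives the stated bound, since $|S_v|\geq 1$. The \CONGEST claim is then immediate: if this quantity is $O(\log n)$, every message fits in one \CONGEST message, and the round count stays the same.

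The only step that needs some care is the timing in the correctness part, namely confirming that nodes sharing an initial color never need each other's choices. This holds because the coloring is proper, so no outneighbor of $v$ has the same initial color as $v$; this is the fact $\beta_v=|N_<(v)|+|N_>(v)|$. The rest is bookkeeping.
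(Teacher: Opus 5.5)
Your proposal is correct and follows the paper's argument: correctness comes from \Cref{lemm:corectnessOfGenericSweep}, the round bound from one initial exchange plus one round per initial color in each phase, and the bit bound from the three message types (initial color, palette, final color). Your timing remarks make explicit what the paper leaves implicit. These are that properness of the initial coloring gives $\beta_v=|N_<(v)|+|N_>(v)|$, and that all required palettes and colors have already arrived when $v$ acts.
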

\begin{proof}The correctness follows from \Cref{lemm:corectnessOfGenericSweep}. An
    initial exchange lets every node learn its neighbors' initial colors. Each
    phase then uses one round per initial color: active nodes send their palette
    in Phase~I and their final color in Phase~II. A palette contains at most
    $\max_v|S_v|$ colors from $\calC$; the initial and final colors require
    $O(\log q)$ and $O(\log|\calC|)$ bits, respectively. This proves both
    complexity bounds.
\end{proof}
\newpage

\section{Solving OLDC instances using the Two-Sweep Algorithm}
\label{sec:OLDC2Sweep}

We will now give a more concrete implementation of \Cref{alg:GenericSweep}. We start by explaining how to generate the set $\mathcal{S}_v$. Note that later, in \Cref{sec:SweepAndBuckets}, we present another implementation that primarily differs in how $\mathcal{S}_v$ is constructed. For now, assume there is a globally known integer input parameter $p > 0$ that will serve as an upper bound on the set sizes we allow in $\mathcal{S}_v$. Recall that we have a color space $\mathcal{C}$ and each node comes with a list of valid colors (and defects) that we call $L_v \subseteq \mathcal{C}$. The set $\mathcal{S}_v$ is constructed as follows: If $|L_v| \leq p$, we define $\mathcal{S}_v := \{L_v\}$. Otherwise, if $|L_v| > p$, $\mathcal{S}_v$ consists of all possible subsets of $L_v$ of size exactly $p$, i.e., $\mathcal{S}_v := \binom{L_v}{p}$. Note that this construction leads to different candidate families $\mathcal{S}_v$ for different nodes if they have different initial color lists. Furthermore, to make sure the sum over the defects is large enough to compute the coloring, we require the invariant for each node $v$ given in \Cref{eq:SlackInBaseCase}.

\begin{lemma}\label{lemm:VanillaColoring}
    Consider a graph $G$ with an initial proper $q$-coloring and a given edge orientation. Given that for an integer parameter $p \geq 1$ it holds
    \begin{align}
        \sum_{x \in L_v} (d_v(x)+1) > \max\left\{p, \frac{|L_v|}{p}\right\} \cdot \beta_v
    \end{align}
    for all nodes $v \in V$, this implementation of \Cref{alg:GenericSweep}
    solves an oriented list defective coloring instance in $O(q)$ \LOCAL
    rounds with messages of $O(p\log|\calC|+\log q)$ bits.
\end{lemma}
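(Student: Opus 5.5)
By \Cref{lemm:VanillaGenericColoring}, it suffices to show that every palette selected in Phase~I satisfies $Q_v(S_v)<1$. The round and message bounds then follow directly: every $S\in\mathcal{S}_v$ has $|S|\le p$, so each palette message has $O(p\log|\calC|)$ bits, plus $O(\log q)$ bits for the initial color. Since $S_v$ minimizes $Q_v$ over $\mathcal{S}_v$, it is enough to exhibit one set in $\mathcal{S}_v$ of quality below $1$. We do this by averaging $Q_v$ over $\mathcal{S}_v$.

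\emph{Case $|L_v|\le p$.} Here $\mathcal{S}_v=\{L_v\}$ and we need
\[
|N_>(v)|+\sum_{x\in L_v}\bigl(k_v(x)-d_v(x)\bigr)<|L_v|,
\quad\text{i.e.}\quad
|N_>(v)|+\sum_{x\in L_v}k_v(x)<\sum_{x\in L_v}\bigl(d_v(x)+1\bigr).
\]
Each $u\in N_<(v)$ has a palette of at most $p$ colors, so $\sum_{x\in L_v}k_v(x)\le p\,|N_<(v)|$. Hence the left-hand side is at most $|N_>(v)|+p|N_<(v)|\le p\beta_v$, which is strictly less than $\sum_x(d_v(x)+1)$ because $\max\{p,|L_v|/p\}\ge p$.

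\emph{Case $|L_v|>p$.} Let $m=|L_v|$. Pick $S$ uniformly at random from $\binom{L_v}{p}$. Each color $x$ lies in $S$ with probability $p/m$, so
\[
\mathbb{E}\bigl[p\,Q_v(S)\bigr]=|N_>(v)|+\frac{p}{m}\sum_{x\in L_v}\bigl(k_v(x)-d_v(x)\bigr).
\]
All sets have size $p$, so some $S$ attains at most this average. We need $p\,Q_v(S)<p$, that is,
\[
|N_>(v)|+\frac{p}{m}\sum_x k_v(x)<\frac{p}{m}\sum_x\bigl(d_v(x)+1\bigr).
\]
We bound $\sum_x k_v(x)\le p\,|N_<(v)|$ as before. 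Multiplying through by $m/p$, it suffices to show
\[
\frac{m}{p}|N_>(v)|+p\,|N_<(v)|<\sum_x\bigl(d_v(x)+1\bigr).
\]
The left-hand side is at most $\max\{p,m/p\}\,\beta_v$, so the hypothesis gives exactly this.

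The only delicate point is the bound $\sum_x k_v(x)\le p|N_<(v)|$. It relies on every earlier neighbor's palette having size at most $p$, which holds in both constructions. It also relies on counting only colors $x\in L_v$, which can only shrink the sum. With this, $Q_v(S_v)<1$ holds at every node, and \Cref{lemm:VanillaGenericColoring} finishes the proof.
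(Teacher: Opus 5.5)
Your proof is correct and follows the paper's argument. In both, the bound $\sum_{x\in L_v}k_v(x)\le p\,|N_<(v)|$ settles the case $|L_v|\le p$. For $|L_v|>p$, your uniform averaging over $\binom{L_v}{p}$ is equivalent to the paper's observation that the minimizing palette consists of the $p$ colors with smallest $k_v(x)-d_v(x)$, whose sum is therefore at most a $p/|L_v|$ fraction of the total.
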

\begin{proof}Since every candidate palette is a nonempty subset of the corresponding
    list and has size at most $p$, it remains to verify
    \Cref{eq:requirementPhaseIGeneric}. Fix a node $v$. First suppose that
    $|L_v|\leq p$, and hence $S_v=L_v$. By construction,
    $\sum_{x\in L_v}k_v(x)\leq p|N_<(v)|$. Therefore,
    \begin{align*}
        |S_v| \cdot Q_v(S_v)
        &= |N_>(v)| + \sum_{x \in S_v}\bigl(k_v(x)-d_v(x)\bigr) \\
        &\leq \beta_v - |N_<(v)| + |N_<(v)| \cdot p
            - \sum_{x \in L_v} d_v(x) \\
        &= \beta_v + (p-1) \cdot |N_<(v)| - \sum_{x \in S_v} d_v(x) \\
        &\leq p \cdot \beta_v - \sum_{x \in S_v} d_v(x)\\
        &<\sum_{x \in L_v} (d_v(x)+1) - \sum_{x \in S_v} d_v(x) \\
        &= |S_v|.
    \end{align*}

    Now suppose that $\ell:=|L_v|>p$ and define
    $a_x:=k_v(x)-d_v(x)$. Since all candidate palettes have size $p$, the
    palette $S_v$ minimizing the quality consists of $p$ colors with the
    smallest values $a_x$. Consequently,
    \[
        \sum_{x\in S_v}a_x
        \leq \frac{p}{\ell}\sum_{x\in L_v}a_x.
    \]
    Moreover, $\sum_{x\in L_v}k_v(x)\leq p|N_<(v)|$. Set
    $m_v:=\max\{p,\ell/p\}$. Multiplying the resulting upper bound on
    $pQ_v(S_v)$ by $\ell/p$ gives
    \begin{align*}
        \frac{\ell}{p}\,pQ_v(S_v)
        &\leq \frac{\ell}{p}|N_>(v)|+p|N_<(v)|
            -\sum_{x\in L_v}d_v(x)\\
        &\leq m_v\beta_v-\sum_{x\in L_v}d_v(x)\\
        &<\sum_{x\in L_v}(d_v(x)+1)-\sum_{x\in L_v}d_v(x)
        =\ell.
    \end{align*}
    Hence, $pQ_v(S_v)<p=|S_v|$. In both cases,
    \Cref{eq:requirementPhaseIGeneric} holds, and the claim follows from
    \Cref{lemm:VanillaGenericColoring}.
\end{proof}

The preceding lemma already proves our main contribution \Cref{thm:2sweepalgolists} for the $\eps = 0$ case. Unfortunately, when $\eps=0$ we must sweep over all colors of a proper $q$-coloring. Relying on computing a proper coloring makes this approach too slow for OLDC instances. Even with highly efficient algorithms to compute an initial $\Theta(\Delta)$-coloring, the round complexity of \Cref{alg:GenericSweep} would still be of order $\Omega(\Delta)$ and thus far from our $\tilde{O}(\sqrt{\Delta})$ target outlined in \Cref{thm:CONGESTcoloring}. \par
To achieve the desired speed-up, we instead use a defective coloring. This strategy employs far fewer colors for the initial coloring at the acceptable cost of allowing monochromatic arcs. For a given parameter $\alpha>0$, a coloring with $O(1/\alpha^2)$ colors in which every node has at most an $\alpha$-fraction of its outneighbors of the same color can be computed in $O(\log^* q)$ rounds from the initial $q$-coloring (cf.\ \cite{Kuhn2009,KawarabayashiS18} and \Cref{lemm:defColorBlackBox}). We will later show that the additional $(1+\eps)$-factor in the slack, combined with this defective coloring algorithm using $\alpha := \eps/p$, achieves a sufficient trade-off between the size of the initial coloring and the monochromatic arcs. \par
The detailed implementation of the $\eps > 0$ case is provided in \Cref{alg:Sweep2}.

\begin{algorithm}
    \caption{FastTwoSweep($q, p, \eps$)} \label{alg:Sweep2}
    \begin{algorithmic}[1]
        \Require Proper $q$-coloring of the nodes. Integer $p \geq 1$ and
        $0<\varepsilon\leq p$. For each node $v \in V$, \Cref{eq:SlackInSweep} holds.
        \Ensure OLDC in time $O(\min \{q, (p/\eps)^2 + \log^* q\})$
        \If{$q \leq \frac{p^2}{\varepsilon^2} + \log^* q$}
            \State If $|L_v| \leq p$, let $\mathcal{S}_v := \{L_v\}$, otherwise $\mathcal{S}_v := \binom{L_v}{p}$.
            \State Solve the OLDC instance using our TwoSweep$(q, \mathcal{S}_v)$ \Comment{\Cref{alg:GenericSweep}}
        \Else
            \State Compute a defective coloring $\Psi:V\to[q']$ using
            \Cref{lemm:defColorBlackBox} with parameter \\ $\alpha := \varepsilon/p$, where
            $q'=O(p^2/\varepsilon^2)$.
            \State Let $G'$ be obtained from $G$ by removing the edges monochromatic under $\Psi$.
            \State Define 
            \begin{align*}
                d_v'(x) := d_v(x) - \left\lfloor \beta_v \cdot \frac{\varepsilon}{p}\right\rfloor \text{ and }
                L_v' := \{ x \in L_v \mid d_v'(x) \geq 0\}
            \end{align*} 
            \State Define OLDC instance with defects $d_v'(x)$ and color lists $L_v'$ on $G'$. 
            \State If $|L_v'| \leq p$, let $\mathcal{S}_v := \{L_v'\}$, otherwise $\mathcal{S}_v := \binom{L_v'}{p}$.
            \State Solve this OLDC instance on $G'$:
            \Statex \hspace{\algorithmicindent} Use our TwoSweep($q'$, $\mathcal{S}_v$) \Comment{\Cref{alg:GenericSweep}}
        \EndIf
    \end{algorithmic}
\end{algorithm}

\subsection{Working with a nonproper initial coloring}\label{sec:fastersweep}
The weakness of the previous algorithm is the dependence of the complexity on the $q$-coloring. We do not have sufficiently fast (proper) coloring algorithms to use here. However, we overcome this problem by using a defective coloring as described in \Cref{lemm:defColorBlackBox} instead of a proper coloring, and we show that monochromatic edges (induced by this defective coloring) can be safely ignored under a slightly increased slack (cf.\ \Cref{eq:SlackInSweep}).

\begin{lemma}[Directed adaptation of \cite{Kuhn2009,KawarabayashiS18}]
    \label{lemm:defColorBlackBox}
    For every $0<\alpha\leq1$, a directed graph with an initial proper
    $q$-coloring admits a deterministic $O(1/\alpha^2)$-coloring in which each
    node $v$ has at most $\alpha\beta_v$ outneighbors of its own color. The
    coloring can be computed in $O(\log^*q)$ rounds using
    $O(\log q+\log(1/\alpha))$-bit messages. In particular, this is a
    \CONGEST algorithm whenever $q,1/\alpha\leq\poly(n)$.
\end{lemma}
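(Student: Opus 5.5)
The plan is to adapt Linial's polynomial-based color reduction to the defective, oriented setting, as in \cite{Kuhn2009,KawarabayashiS18}. The only change is that each node counts conflicts only with its outneighbors. I will build the algorithm from a single basic step. Fix a prime power $r$ and a degree bound $k$, and suppose the current coloring uses at most $r^{k+1}$ colors. Injectively assign to every current color $c$ a polynomial $f_c$ over $\mathbb{F}_r$ of degree at most $k$. A node $v$ with current color $c$ picks an evaluation point $t\in\mathbb{F}_r$ and takes the new color $(t,f_c(t))$, so there are $r^2$ new colors.

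The point $t$ is chosen to minimize the number of outneighbors $u$ with a different current color and $f_{c_u}(t)=f_c(t)$. Two distinct polynomials of degree at most $k$ agree on at most $k$ points. Summing over all $t$ therefore gives at most $k\beta_v$ such agreements, so some $t$ has at most $k\beta_v/r$ of these new conflicts.

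Outneighbors that already had the same current color as $v$ have the same polynomial. They stay monochromatic with $v$ only if they pick the same $t$, so they contribute at most the old defect. Consequently, if the defect before the step is at most $\gamma\beta_v$, the defect after the step is at most $(\gamma+k/r)\beta_v$. Each step takes one round: $v$ needs only its neighbors' current colors, which is $O(\log q+\log(1/\alpha))$ bits.

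Next I iterate the step with per-step budgets $\alpha_i$ summing to at most $\alpha$. I choose them geometric, for example $\alpha_i=\alpha/2^{j+1}$ where $j$ counts the number of steps remaining. The early steps, where the color count is huge, thus get tiny budgets, and the last step gets $\alpha/2$. In a step with current color count $q_i$, I take $k\approx\log q_i/\log r$ and a prime power $r=\Theta(k/\alpha_i)$. The new count is then $O\bigl((\log q_i/\alpha_i)^2\bigr)$, up to logarithmic factors. This is a $\log$-type reduction, so after $O(\log^* q)$ steps the count is $\poly(1/\alpha)$. The extra factors $2^{j}$ in the budgets only add $O(\log(1/\alpha))$ to the logarithm and do not affect the iteration count.

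The main obstacle is ending with exactly $O(1/\alpha^2)$ colors, rather than $\poly(1/\alpha)$ or $O(\log^2(1/\alpha)/\alpha^2)$, while the total defect stays at most $\alpha\beta_v$. My plan is to finish with a constant number of steps that use constant degree. Once $q_i\le(1/\alpha)^{c_0}$, one step with $k=c_0$ and $r=\Theta(c_0/\alpha)$, with budget a constant fraction of $\alpha$, gives $O(1/\alpha^2)$ colors, with constant depending on $c_0$. A final step with $k=1$ and $r=\Theta(1/\alpha)$, chosen so that $r^2$ is at least the current count, keeps $O(1/\alpha^2)$ colors and adds defect $\beta_v/r\le\alpha\beta_v/4$. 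The constants must be balanced, and Bertrand-type bounds give prime powers within a factor $2$ of any target. With that done, the total defect is at most $\alpha\beta_v$, the number of rounds is $O(\log^* q)$, and the message sizes are as claimed. This is a \CONGEST algorithm when $q,1/\alpha\le\poly(n)$.
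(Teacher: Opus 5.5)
Your proposal is correct and follows essentially the paper's route. Both use the oriented polynomial (Linial/Kuhn) step that counts only outneighbors of a different current color, geometric budgets that are largest in the last of $O(\log^*q)$ steps, and a final constant-degree step giving $O(1/\alpha^2)$ colors. The paper packages that last point via field size $\Theta(h/\delta)$ with $h=\max\{1,\log_{1/\delta}M\}$, which also makes your extra $k=1$ step unnecessary.

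One correction: the budget factor $2^{j}$ adds $O(j)$, not $O(\log(1/\alpha))$, to the logarithm of the new count. This is still harmless because with $j$ steps remaining the count is tower-large. The paper makes this precise by fixing $T$ as the first index with $\ln^{(T)}q=O(1/\alpha)$ and using $\ln^{(i-1)}q\geq 2^{T-i}\ln^{(T-1)}q$, so that $2^{T-i}/\alpha$ is dominated by $\ln^{(i-1)}q$.
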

The statement follows almost directly from the polynomial color-reduction and
relative-defect arguments in \cite{Kuhn2009,KawarabayashiS18}. For
completeness, and to record the directed adaptation, the dependence on a
possibly nonconstant $\alpha$, and the message-size bound, we provide a full
proof in \Cref{sec:directedDefectiveColoringProof}.

Let $0<\eps\leq p$ be a real parameter and let $p\geq1$ be an integer as before. Subsequently, we consider OLDC instances where each node~$v$ satisfies the following property.
\begin{align}\label{eq:SlackInSweep}
    \sum_{x \in L_v} (d_v(x) + 1) > \left( 1 + \eps \right) \cdot \max\left\{p, \frac{|L_v|}{p} \right\} \cdot \beta_v
\end{align}
Assume we have a defective coloring given on $G$ such that every node $v$ has at most $\varepsilon/p \cdot \beta_v$ outneighbors of the same color. To run \Cref{alg:GenericSweep}, we remove all edges whose endpoints have the same defective color. The resulting subgraph $G'$ is properly colored. We adjust the defects and color lists as specified in \Cref{alg:Sweep2}, thereby reserving enough defect for the removed edges. We can now prove our first main theorem.

\restatefirst*
\begin{proof}The case $\varepsilon=0$ is \Cref{lemm:VanillaColoring}, so assume that
    $0<\varepsilon\leq p$. 
    Apply \Cref{lemm:defColorBlackBox} with
    $\alpha=\varepsilon/p\leq1$. This takes $O(\log^*q)$ rounds and produces
    $O(p^2/\varepsilon^2)$ colors. Let
    \[
        t_v:=\left\lfloor\frac{\varepsilon\beta_v}{p}\right\rfloor,
        \qquad
        m_v:=\max\left\{p,\frac{|L_v|}{p}\right\},
    \]
    and define $d_v'(x)=d_v(x)-t_v$ and
    $L_v'=\{x\in L_v:d_v'(x)\geq0\}$ as in \Cref{alg:Sweep2}. Let
    $\beta_v'$ denote the outdegree of $v$ in $G'$. If $x\notin L_v'$, then
    integrality gives $d_v(x)+1-t_v\leq0$. Hence,
    \begin{align*}
        \sum_{x\in L_v'}(d_v'(x)+1)
        &\geq \sum_{x\in L_v}(d_v(x)+1-t_v)\\
        &=\sum_{x\in L_v}(d_v(x)+1)-|L_v|t_v\\
        &>(1+\varepsilon)m_v\beta_v
            -\varepsilon\frac{|L_v|}{p}\beta_v\\
        &\geq m_v\beta_v\\
        &\geq \max\left\{p,\frac{|L_v'|}{p}\right\}\beta_v'.
    \end{align*}
    The last two inequalities use $m_v\geq |L_v|/p$,
    $|L_v'|\leq|L_v|$, and $\beta_v'\leq\beta_v$. Thus, the reduced instance
    satisfies \Cref{eq:SlackInBaseCase} (in particular, $L_v'\neq\varnothing$) and is solved by
    \Cref{lemm:VanillaColoring} in $O(p^2/\varepsilon^2)$ rounds.

    It remains to verify the original defects. Let $h_v$ be the number of
    removed outgoing edges at $v$. By \Cref{lemm:defColorBlackBox},
    $h_v\leq\varepsilon\beta_v/p$. Since $h_v$ is an integer, $h_v\leq t_v$.
    If $v$ receives color $x$, it therefore has at most
    $d_v'(x)+h_v\leq d_v(x)$ equally colored outneighbors in $G$. Taking the
    faster of the direct and reduced implementations gives the claimed
    $O(\min\{q,p^2/\varepsilon^2+\log^*q\})$ runtime.

    The direct implementation exchanges initial colors, palettes, and
    final colors, using $O(\log q+p\log|\mathcal C|)$ bits per message
    by \Cref{lemm:VanillaColoring}. The fast branch additionally sends
    intermediate colors using $O(\log q+\log(p/\varepsilon))$ bits
    by \Cref{lemm:defColorBlackBox}. This branch is used only when
    $q>(p/\varepsilon)^2+\log^*q$, which implies
    $\log(p/\varepsilon)=O(\log q)$. Its initial sweep coloring has
    $O((p/\varepsilon)^2)$ colors, so its encoding also fits this bound.
    Thus every message in either branch has the claimed size.
\end{proof}

We can now use our Two-Sweep Algorithm in a recursive manner to compute a $(\Delta+1)$-coloring in the \CONGEST model as stated in \Cref{thm:CONGESTcoloring}. For that we use a known (\cite{fraigniaud16local,Kuhn20,FK23}) technique, called color-space reduction, on \Cref{thm:2sweepalgolists}. The details appear in the subsequent section. 
\newpage

\section{Applications of the Two-Sweep Algorithm} \label{sec:Two-SweepInCONGEST}

\paragraph*{Oriented List Defective Coloring in CONGEST} In this section, we \emph{reduce} the message complexity of our algorithm using the color space reduction technique as stated in \Cref{lemm:colorSpaceReductionBox}. The lemma is the strict-slack specialization of the more general Theorem 3 in \cite{FK23}; we outsource the proof to the appendix, adapting the proof of Theorem~1.2 in \cite{FK23full}.

\begin{lemma}[Strict-slack specialization of Theorem 3 in \cite{FK23}] \label{lemm:colorSpaceReductionBox}
    Let $\kappa$, $M$, and $T$ be nondecreasing functions of the maximum list size $\Lambda$, where $\kappa(\Lambda)>0$. Assume we are given a deterministic algorithm $\calA$ solving every oriented list defective coloring instance with nonempty lists for which each node $v \in V$ satisfies
    \begin{align*}
        \sum_{x \in L_v} (d_v(x) + 1) > \beta_v \cdot \kappa(\Lambda)
    \end{align*}
    with round complexity $T(\Lambda)$ and message complexity $M(\Lambda)$. Then, for every ambient color-space size $C$ and every integer $2\leq\lambda\leq C$, there exists a deterministic algorithm $\calA'$ that solves every oriented list defective coloring instance with nonempty lists whose colors come from that space and for which each node $v \in V$ satisfies
    \begin{align*}
        \sum_{x \in L_v} (d_v(x) + 1) > \beta_v \cdot \kappa(\lambda)^{\lceil \log_\lambda C \rceil}
    \end{align*}
    in time $O((T(\lambda)) \cdot \lceil \log_\lambda C \rceil)$ using messages of $O(M(\lambda)+\log\lambda)$ bits, assuming a common encoding of the ambient colors and partition labels.
\end{lemma}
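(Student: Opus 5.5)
The proof is a recursive color space reduction. Set $h:=\lceil\log_\lambda C\rceil$ and fix an encoding of the ambient colors as strings of $h$ digits in base $\lambda$ (padding if $C<\lambda^h$). The digit prefixes define a hierarchy of partitions: at level $i$, the colors are grouped by their first $i$ digits, and each group splits into at most $\lambda$ children. We proceed by induction on $h$, proving the following. Any instance whose colors come from a space encodable by $h$ digits and which satisfies $\sum_{x\in L_v}(d_v(x)+1)>\beta_v\,\kappa(\lambda)^h$ is solved by $\calA'$ in $O(h\cdot T(\lambda))$ rounds.

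For $h=1$ the space has at most $\lambda$ colors. Hence $\Lambda\leq\lambda$, and monotonicity gives $\kappa(\Lambda)\leq\kappa(\lambda)$, so we can run $\calA$ directly. For $h>1$ we solve an auxiliary OLDC instance on the same directed graph whose color space is the $\lambda$ possible first digits. The list of $v$ is $L_v^{(1)}:=\{j:\ L_v\cap P_j\neq\varnothing\}$, where $P_j$ is the set of colors with first digit $j$. The defect of $j$ is chosen as the largest integer $d^{(1)}_v(j)$ with
\[
  \sum_{x\in L_v\cap P_j}(d_v(x)+1) > (d^{(1)}_v(j)+1)\,\kappa(\lambda)^{h-1},
\]
that is, $d^{(1)}_v(j)+1=\bigl\lceil \sum_{x\in L_v\cap P_j}(d_v(x)+1)/\kappa(\lambda)^{h-1}\bigr\rceil$. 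Parts whose value would be $\leq 0$ are dropped from $L_v^{(1)}$. Summing over $j$, and using that the ceiling only helps, gives
\[
  \sum_j (d^{(1)}_v(j)+1)\ \geq\ \frac{1}{\kappa(\lambda)^{h-1}}\sum_{x\in L_v}(d_v(x)+1)\ >\ \beta_v\,\kappa(\lambda).
\]
This is exactly the slack hypothesis for $\calA$ with $\Lambda\leq\lambda$, so $\calA$ solves the auxiliary instance in $T(\lambda)$ rounds. It also shows $L_v^{(1)}\neq\varnothing$.

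After node $v$ obtains its digit $j_v$, we restrict the graph to the arcs $(v,u)$ with $j_u=j_v$. The new outdegree is $\beta_v'\leq d^{(1)}_v(j_v)$. The residual instance has lists $L_v\cap P_{j_v}$ with the original defects, and its colors are encoded by $h-1$ digits. By the defining inequality for $d^{(1)}_v(j_v)$ it satisfies
\[
  \sum_{x\in L_v\cap P_{j_v}}(d_v(x)+1)>\beta'_v\,\kappa(\lambda)^{h-1}.
\]
The induction hypothesis therefore applies, and all residual instances, one per first digit, run in parallel in the same rounds. Correctness transfers back to the original instance because arcs between nodes with different first digits can never become monochromatic. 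The rounds add up to $h\cdot T(\lambda)$.

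For messages, each level sends $\calA$'s messages, of $M(\lambda)$ bits, plus the chosen digit in $O(\log\lambda)$ bits, so that neighbors know which arcs survive. This gives $O(M(\lambda)+\log\lambda)$. The padding to $h$ digits and the common encoding assumption let every node compute the digits of its own colors locally.

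The main obstacle is the integrality bookkeeping when defining $d^{(1)}_v(j)$. It must make both the summed strict inequality at the top level and the per-part strict inequality at the recursive level hold simultaneously. Taking the largest integer that keeps the per-part inequality strict, as above, handles this. The remaining point I would double-check is the degenerate case $\beta_v=0$ (or $\beta'_v=0$), where every nonempty list trivially works.
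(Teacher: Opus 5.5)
Your proposal is essentially the paper's proof: a fixed depth-$\lceil\log_\lambda C\rceil$ partition tree, an auxiliary instance on the child indices with defect $\bigl\lceil \sum_{x\in L_v\cap P_j}(d_v(x)+1)/\kappa(\lambda)^{h-1}\bigr\rceil-1$ (the paper's $b_{v,i}$), and induction on the number of remaining levels, with the same round and message accounting. One slip to fix: defining $d^{(1)}_v(j)$ as the largest integer with $\sum_{x\in L_v\cap P_j}(d_v(x)+1)>(d^{(1)}_v(j)+1)\kappa(\lambda)^{h-1}$ is off by one, since it gives $d^{(1)}_v(j)+1=\lceil\cdot\rceil-1$, contradicts your ceiling formula, and would break the summed strict inequality; the condition should read $>d^{(1)}_v(j)\,\kappa(\lambda)^{h-1}$, and the residual slack then follows from $\beta'_v\le d^{(1)}_v(j_v)<\sum_{x\in L_v\cap P_{j_v}}(d_v(x)+1)/\kappa(\lambda)^{h-1}$, exactly as in the paper.
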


The formal proof of \Cref{lemm:colorSpaceReductionBox} is given in \Cref{sec:strictSlackColorSpaceReductionProof}.

Using our \emph{Fast-Two-Sweep} routine (\Cref{alg:Sweep2}) as the deterministic oriented list defective coloring algorithm $\calA$ in \Cref{lemm:colorSpaceReductionBox} and choosing suitable parameters, we can drastically reduce the dependence on $p$ and $|L_v|$ in the round and message complexity. In particular, the size of messages drops to $O(\log q+\log(C))$ bits, making the algorithm viable in \CONGEST whenever both the initial color space and the color space are $\poly(n)$. The detailed proof follows.

\restatesecond*
\begin{proof}[Proof of \Cref{thm:OLDC2}]
    Since the case that $C = 1$ is trivial, first suppose that $2\leq C \leq3$. In this case, by the assumption of the statement we have $\sum (d_v(x) + 1) \geq 4 \beta_v$ and thus, one of the (at most three) colors in the color list of $L_v$ must satisfy that $d_v(x) + 1 \geq \lceil 4/3 \cdot \beta_v \rceil$ (as defects and outdegrees are integers) and thus $d_v(x) \geq \beta_v$ (for $\beta_v=0$, this follows directly from nonnegative defects). This color can be assigned to $v$ without violating any constraint and we are done. Subsequently, assume $C \geq 4$. \par

    Now consider \Cref{thm:2sweepalgolists}. We set the parameter $p := \left\lceil \sqrt{\Lambda} \right\rceil$, where $\Lambda$ is the maximum list size as in \Cref{lemm:colorSpaceReductionBox}, and we set $\eps := \frac{1}{3 \lceil \log_4 C \rceil}$. We get an algorithm that solves a given OLDC instance for which 
    \begin{align*}
        \sum_{x \in L_v} (d_v(x) + 1) > \beta_v \cdot \left( 1+ \frac{1}{3 \lceil \log_4 C \rceil} \right) \cdot \left\lceil \sqrt{\Lambda } \right\rceil
    \end{align*}   
    holds for every node $v \in V$. The round complexity is $O(\Lambda \cdot \log^2 C + \log^* q)$ and messages are of size at most $O(\log q + \sqrt{\Lambda} \cdot \log C)$ bits. These bounds include initial and intermediate preprocessing colors, palettes of at most $p$ colors, and final output colors, by the full message accounting in \Cref{thm:2sweepalgolists}. This algorithm (with those parameters) is used as algorithm $\calA$ when applying the strict-slack version of \Cref{lemm:colorSpaceReductionBox}. Here we set the splitting parameter $\lambda = 4$. The outcome is an OLDC algorithm $\calA'$ with an initial slack requirement of
    \begin{align} \label{eq:afterColorSpaceReduction}
        \sum_{x \in L_v} (d_v(x) + 1) > \beta_v \cdot \left( \left( 1+ \frac{1}{3 \cdot \lceil \log_4 C \rceil} \right) \cdot 2 \right)^{ \lceil \log_4 C \rceil}
    \end{align}   
    for every node $v \in V$. Because \Cref{lemm:colorSpaceReductionBox}
    invokes the base algorithm once per level, its round complexity is
    \begin{align*}
        O\left(\bigl(\log^2 C+\log^*q\bigr)
            \left\lceil\log_4 C\right\rceil\right)
        =O\bigl(\log^3(C)+\log(C)\cdot\log^*q\bigr).
    \end{align*}
    It uses messages of size at most $O(\log q + \log C)$ bits. It remains to
    verify that the hypothesis of the theorem implies the strict condition
    in \Cref{eq:afterColorSpaceReduction}. We have
    \begin{align*}
        \left( \left( 1+ \frac{1}{3 \lceil \log_4 C \rceil} \right) \cdot 2 \right)^{ \lceil \log_4 C \rceil} &\leq 2 \cdot \left( 1+ \frac{1}{3 \lceil \log_4 C \rceil} \right)^{\lceil \log_4 C \rceil} \cdot 2^{\log_4 C} \\
        &< 2 \cdot e^{1/3} \cdot \sqrt{C} < 3 \sqrt{C}.
    \end{align*}
    If $\beta_v>0$, the theorem's non-strict lower bound with factor
    $3\sqrt C$ therefore implies the strict inequality in
    \Cref{eq:afterColorSpaceReduction}. If $\beta_v=0$, the same inequality
    follows from $L_v\neq\varnothing$, because then
    $\sum_{x\in L_v}(d_v(x)+1)\geq1>0$. Hence, the strict-slack reduction
    applies at every node.
\end{proof}

Using \Cref{thm:OLDC2} as OLDC subroutine of Theorem 4 in \cite{FK23}, we directly get an efficient $(degree+1)$-list coloring algorithm for the \CONGEST model. The details appear in the proof of \Cref{thm:CONGESTcoloring}.

\restatethird*
\begin{proof}[Proof of \Cref{thm:CONGESTcoloring}]
    In this given coloring instance all defects are zero, i.e., we have a list defective coloring instance where for all nodes $v \in V$ it holds $\sum_{x \in L_v} (d_v(x) + 1) = |L_v| \geq \deg(v) + 1$. Thus, we can apply Theorem 4 from \cite{FK23} with $\nu=0$ and $\kappa=3\sqrt C$, where we plug in \Cref{thm:OLDC2} as list defective coloring algorithm.
    This results in an algorithm that computes the required $(deg+1)$-list coloring in time 
    \begin{align*}
        O(\sqrt{C} \cdot \log \Delta \cdot T + \log^* n )
    \end{align*}
    where $T$ is the round complexity from \Cref{thm:OLDC2}. Since they start by computing an initial $O(\Delta^2)$ coloring (e.g. by using Linial's coloring algorithm \cite{Linial1987}), we can also use this as the $q$-coloring in \Cref{thm:OLDC2}. The theorem's assumption $C = O(\Delta)$ and the bound $q=O(\Delta^2)$ give
    \begin{align*}
        T
        &=O\bigl(\log^3(C)+\log(C)\cdot\log^*(\Delta^2)\bigr)\\
        &=O(\log^3\Delta),
    \end{align*}
    where we use $\log^*(\Delta^2)=O(\log\Delta)$. Plugging this
    value of $T$ into the round complexity above concludes the analysis of the
    round complexity. It remains to argue that the algorithm uses messages of
    size at most $O(\log n)$. Theorem 4 from \cite{FK23} states that if the
    given algorithm has a message complexity of $B$ bits, the resulting
    algorithm uses messages of size at most $O(B + \log n)$ bits. Due to
    \Cref{thm:OLDC2}, this leads to messages of size at most
    $O(\log q + \log C + \log n)$ bits. As we have $q = O(\Delta^2)$ and
    $C = O(\Delta)$, this algorithm is viable in \CONGEST.
\end{proof}
\newpage

\section{Optimizing the defect for \boldmath\texorpdfstring{$C$}{C} Colors}
\label{sec:SweepAndBuckets}
In this section, we study the relative defect achievable by the Two-Sweep
Algorithm as a function of the number $C$ of colors. We specialize
\Cref{alg:GenericSweep} to the common color list
$L_v=\mathcal C:=\set{1,\dots,C}$ and use a fixed nonempty candidate family
$\calS=\set{S_1,\dots,S_k}$ of nonempty subsets of $\mathcal C$ at every
node. We seek the same target relative defect $\rho\in[0,1]$ for every color,
with the asymptotic guarantee in \Cref{eq:defRelativeDefect}, and study how
the choice of $\calS$ affects the achievable guarantee.

At nodes of positive degree, we normalize local neighbor counts by
$\deg(v)$\footnote{For oriented graphs,
the same definitions use the outdegree $\beta_v$ in place of $\deg(v)$.}. In
Phase~I, $\alpha_S$ is the number of neighbors in $N_<(v)$ that have already
announced $S$, divided by $\deg(v)$. We write
$\vec\alpha=(\alpha_S)_{S\in\calS}$ for these normalized counts.
Isolated nodes have defect zero and may choose an arbitrary color.

\subsection{Normalized quality and the actual defect}
Write $d_v$ for the common value of $d_v(x)$ at node $v$. Since this
value is independent of $x$, it contributes the same additive constant
to every candidate within each phase and does not affect either minimization.
For a node $v$ of positive degree, put
\[
    \alpha:=\sum_{A\in\mathcal S}\alpha_A,
    \qquad \alpha_x:=\sum_{A\in\mathcal S:x\in A}\alpha_A
        =\frac{k_v(x)}{\deg(v)}.
\]
The normalized quality of a candidate palette $S\in\mathcal S$ is
\begin{align}
    Q_{\vec\alpha}(S)
    &:=\frac{1-\alpha+\sum_{x\in S}\alpha_x}{|S|}\notag\\
    &=\frac1{|S|}\left(1+\sum_{A\in\mathcal S}
        (|S\cap A|-1)\alpha_A\right).
    \label{eq:qualityOfSet}
\end{align}
Here $1-\alpha=|N_>(v)|/\deg(v)$, so the quality in
\Cref{def:qualityGeneric} satisfies the exact identity
\[
    Q_v(S)=\deg(v)Q_{\vec\alpha}(S)-d_v.
\]
Thus Phase~I minimizes $Q_{\vec\alpha}$ and Phase~II minimizes
$k_v(x)+r_v(x)$. We analyze the actual relative defect produced by these
choices.

\begin{lemma}[Quality upper-bounds the actual defect]
    \label{eq:NearEqualityOfDefectAndQuality}
    Let $S_v$ be the set chosen by a node $v$ of positive degree in
    Phase~I, and let $\rho_v$ denote its final relative defect.  Then
    \begin{align*}
        \rho_v
        \leq Q_{\vec{\alpha}}(S_v) = \min_{S\in\mathcal S}Q_{\vec\alpha}(S).
    \end{align*}
\end{lemma}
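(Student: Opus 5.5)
The plan is to redo the averaging argument of \Cref{lemm:corectnessOfGenericSweep} in normalized form, this time bounding the actual number of equally colored neighbors instead of only checking feasibility.

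The equality $Q_{\vec\alpha}(S_v)=\min_{S\in\calS}Q_{\vec\alpha}(S)$ holds by definition of Phase~I. In this section all nodes share the defect value $d_v$, and we have the exact identity $Q_v(S)=\deg(v)Q_{\vec\alpha}(S)-d_v$. Hence minimizing $Q_v$ over $\calS$ is the same as minimizing $Q_{\vec\alpha}$, and $S_v$ is a minimizer of both, with the same tie-breaking.

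For the inequality, I would fix the color $x_v\in S_v$ chosen in Phase~II. As argued in the second paragraph of the proof of \Cref{lemm:corectnessOfGenericSweep}, the number of neighbors of $v$ that end up with color $x_v$ is at most $k_v(x_v)+r_v(x_v)$. There are two reasons:
\begin{itemize}
\item nodes in $N_>(v)$ have already fixed their colors when $v$ is active;
\item nodes in $N_<(v)$ can only choose colors from their announced palettes.
\end{itemize}
This step needs no slack assumption. Since $x_v$ minimizes $k_v(x)+r_v(x)$ over $S_v$, the minimum is at most the average:
\[
k_v(x_v)+r_v(x_v)\le \frac{1}{|S_v|}\sum_{x\in S_v}\bigl(k_v(x)+r_v(x)\bigr)\le \frac{1}{|S_v|}\Bigl(|N_>(v)|+\sum_{x\in S_v}k_v(x)\Bigr).
\]
The second inequality uses $\sum_{x\in S_v}r_v(x)\le |N_>(v)|$, because each later neighbor picks exactly one color. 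Dividing by $\deg(v)$ and substituting $|N_>(v)|/\deg(v)=1-\alpha$ and $k_v(x)/\deg(v)=\alpha_x$ gives exactly $Q_{\vec\alpha}(S_v)$ by \Cref{eq:qualityOfSet}. So the defect of $v$, divided by $\deg(v)$, is at most $Q_{\vec\alpha}(S_v)$.

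The only delicate point is bookkeeping rather than mathematics. The identity $|N_>(v)|=(1-\alpha)\deg(v)$ requires that $\deg(v)=|N_<(v)|+|N_>(v)|$, which holds because the initial coloring is proper. In the oriented variant the same holds with $\beta_v$ in place of $\deg(v)$. I would also make explicit that $\rho_v$ means the actual count of equally colored neighbors divided by $\deg(v)$, with no additive $o(\deg(v))$ term.
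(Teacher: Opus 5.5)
Your proof is correct and follows the paper's argument essentially step for step: you bound the number of equally colored neighbors by $k_v(x_v)+r_v(x_v)$ (the paper writes the earlier-neighbor count as $h_v(x_v)\le k_v(x_v)$), use minimum $\le$ average over $S_v$ together with $\sum_{x\in S_v}r_v(x)\le|N_>(v)|$, and normalize by $\deg(v)$, while the equality is just that $S_v$ minimizes $Q_{\vec\alpha}$. One small wording fix: the identity $Q_v(S)=\deg(v)Q_{\vec\alpha}(S)-d_v$ only needs $d_v(x)$ to be independent of the color $x$ at the node $v$, not a defect value shared by all nodes.
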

\begin{proof}For $x\in\mathcal C$, let $h_v(x)$ be the number of neighbors in
    $N_<(v)$ that eventually choose $x$, and let $r_v(x)$ be the number of
    neighbors in $N_>(v)$ that choose $x$. Since a node counted by
    $h_v(x)$ must have announced a set containing $x$, we have
    $h_v(x)\leq k_v(x)$. In Phase~II, node $v$ chooses a color
    $x_v\in S_v$ minimizing $k_v(x)+r_v(x)$. Hence,
    \begin{align*}
        \rho_v
        &=\frac{h_v(x_v)+r_v(x_v)}{\deg(v)}\\
        &\leq \frac{k_v(x_v)+r_v(x_v)}{\deg(v)}\\
        &\leq \frac{1}{|S_v|\deg(v)}
        \sum_{x\in S_v}\bigl(k_v(x)+r_v(x)\bigr)\\
        &\leq \frac{1}{|S_v|\deg(v)}
        \left(\sum_{x\in S_v}k_v(x)+|N_>(v)|\right)
        =Q_{\vec{\alpha}}(S_v).
    \end{align*}
    Here, the last inequality uses
    $\sum_{x\in S_v}r_v(x)\leq|N_>(v)|$.  The equality in the lemma statement
    follows because $S_v$ minimizes $Q_{\vec\alpha}$.
\end{proof}

We first establish the lower bound of \Cref{thm:2sweeplower}; the analysis of
candidate families giving upper bounds follows in \Cref{sec:towardsUpperBound}.

\subsection{Two-Sweep lower bound on the defect}
\label{sec:lowerBound}
For the lower bound, we consider undirected graphs and first allow arbitrary
nonnegative earlier-neighbor masses $\alpha_S$ with total
$\alpha:=\sum_{S\in\calS}\alpha_S\leq1$. The remaining mass $1-\alpha$
represents later neighbors. For each color $x\in\mathcal C$, define
$\alpha_x:=\sum_{S\in\calS:x\in S}\alpha_S$, and let $\delta_x$ denote the
mass of later neighbors choosing $x$. In an execution,
$\alpha_x=k_v(x)/\deg(v)$ and $\delta_x=r_v(x)/\deg(v)$, so the bound used in
\Cref{eq:NearEqualityOfDefectAndQuality} becomes $\alpha_x+\delta_x$.
The following lemma characterizes when a candidate palette guarantees that this
bound is at most $\rho$ for some color, for every allocation of
later-neighbor colors.

\begin{lemma}[Feasible candidate palettes]
    \label{lem:goodset}
    Let $\alpha_S\geq 0$ for all $S\in\calS$, let $\alpha:=\sum_{S\in\calS}\alpha_S<1$, and define $\alpha_x$ as above. A nonempty candidate palette $S\in\calS$ guarantees that, for every vector $(\delta_x)_{x\in S}$ of nonnegative later-neighbor masses (i.e., with $\sum_{x\in S}\delta_x\leq 1-\alpha$), there exists a color $x\in S$ with $\alpha_x+\delta_x\leq\rho$ if and only if
    \begin{equation}\label{eq:goodset}
        \sum_{x\in S}\max\set{0,\rho-\alpha_x}\geq 1-\alpha.
    \end{equation}
\end{lemma}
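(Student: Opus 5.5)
This is an adversary characterization, so I will prove the two directions separately. The key object is the ``budget'' the adversary needs to push every color of $S$ above $\rho$. For a color $x\in S$ with $\alpha_x\geq\rho$ there is nothing to do: already $\alpha_x+\delta_x\geq\rho$. To make it strictly exceed $\rho$ costs only an arbitrarily small positive amount. For $x$ with $\alpha_x<\rho$, the adversary must spend $\delta_x>\rho-\alpha_x$. So the total later-neighbor mass needed to violate every color of $S$ is essentially $\sum_{x\in S}\max\{0,\rho-\alpha_x\}$, plus arbitrarily small extras. The adversary has total mass $1-\alpha>0$ to spend.

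\emph{Sufficiency.} Assume \eqref{eq:goodset} holds and argue by contradiction. Suppose some nonnegative $(\delta_x)_{x\in S}$ with $\sum_{x\in S}\delta_x\leq 1-\alpha$ has $\alpha_x+\delta_x>\rho$ for every $x\in S$. Then $\delta_x>\rho-\alpha_x$ for each $x$, and also $\delta_x\geq0$, so $\delta_x\geq\max\{0,\rho-\alpha_x\}$ for every $x\in S$.
\begin{itemize}
\item If some $x$ has $\alpha_x<\rho$, the inequality for that $x$ is strict. Summing gives $\sum_x\delta_x>\sum_x\max\{0,\rho-\alpha_x\}\geq1-\alpha$, which is a contradiction.
\item If instead every $x\in S$ has $\alpha_x\geq\rho$, the left-hand side of \eqref{eq:goodset} is $0$. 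Then \eqref{eq:goodset} forces $1-\alpha\leq0$, contradicting $\alpha<1$.
\end{itemize}
So this case cannot actually arise under the hypothesis, and sufficiency is established.

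\emph{Necessity.} Assume $\sum_{x\in S}\max\{0,\rho-\alpha_x\}<1-\alpha$ and construct a violating allocation. Let $\eta:=1-\alpha-\sum_{x\in S}\max\{0,\rho-\alpha_x\}>0$, and set
\[
\delta_x:=\max\{0,\rho-\alpha_x\}+\eta/|S|.
\]
Each $\delta_x$ is nonnegative, the total is exactly $1-\alpha$, and for every $x\in S$ we get $\alpha_x+\delta_x\geq\rho+\eta/|S|>\rho$. Hence no color of $S$ achieves the bound $\rho$, so $S$ does not guarantee it.

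The only delicate point is the strict-versus-nonstrict boundary: the claim ``there exists $x$ with $\alpha_x+\delta_x\leq\rho$'' is non-strict, which matches $\geq$ in \eqref{eq:goodset}. This is handled by the strictness argument in the sufficiency part and the positive slack $\eta$ in the necessity part. I also need $S\neq\varnothing$ for the division by $|S|$, and $\alpha<1$ for the all-saturated subcase in the sufficiency proof.
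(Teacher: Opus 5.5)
Your proof is correct and follows essentially the same route as the paper. The paper's sufficiency argument restricts to $S'=\{x\in S:\alpha_x<\rho\}$ and uses $R\geq 1-\alpha>0$ to get $S'\neq\varnothing$, which is exactly your case split, and your necessity construction $\delta_x=\max\{0,\rho-\alpha_x\}+\eta/|S|$ is identical to the paper's.
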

\begin{proof}
    By assumption, the later-neighbor masses satisfy $\sum_{x\in S}\delta_x\leq 1-\alpha$.

    Let $S':=\set{x\in S:\alpha_x<\rho}$ and write $R:=\sum_{x\in S'}(\rho-\alpha_x)$. If \Cref{eq:goodset} holds, then $R\geq 1-\alpha>0$, so $S'$ is nonempty. If every $x\in S'$ satisfied $\alpha_x+\delta_x>\rho$, we would have $\sum_{x\in S'}\delta_x>R\geq 1-\alpha$, a contradiction.

    Conversely, suppose that $R<1-\alpha$ and set $\eps:=(1-\alpha-R)/|S|>0$. The choice $\delta_x:=\max\set{0,\rho-\alpha_x}+\eps$ for every $x\in S$ has total mass $1-\alpha$ and satisfies $\alpha_x+\delta_x>\rho$ for every $x\in S$. Hence, $S$ cannot guarantee the desired bound.
\end{proof}

If $\alpha=1$, there are no later neighbors, and a set $S$ guarantees the desired bound precisely when $\alpha_x\leq\rho$ for some $x\in S$. The lower-bound argument below constructs a state with $\alpha<1$, so \Cref{lem:goodset} applies.

The next lemma proves the lower bound of \Cref{thm:2sweeplower} for the following simplifying assumption. We assume that in the first sweep, a node can see an arbitrary distribution of earlier-neighbor choices. Further, we assume that in the second sweep, all earlier neighbors will choose a worst-case color from their sets. This assumption is not true in general, but it allows us to identify a numerical obstruction for every candidate palette. A full proof of \Cref{thm:2sweeplower} is given in \Cref{sec:TechnicalDetailsLowerBound-Appendix}, where we show that this obstruction can also be realizedin an execution. We identify actually realizable palettes and the colors that earlier neighbors can actually choose and we apply \Cref{lemma:LowerBound} to these effective-color sets.

\begin{lemma}[Numerical obstruction]
    \label{lemma:LowerBound}
    Let $\calS$ be any nonempty family of nonempty subsets of a color space $\mathcal C$ of size $C$. If
    \begin{equation}\label{eq:baddefectcondition}
    C\rho^2 < 1,
    \end{equation}
    then there are nonnegative masses $\alpha_S$ with $\alpha:=\sum_{S\in\calS}\alpha_S<1$ such that, writing $\alpha_x:=\sum_{S:x\in S}\alpha_S$,
    \[
        \sum_{x\in S}\max\set{0,\rho-\alpha_x}<1-\alpha
        \qquad\text{for every }S\in\calS.
    \]
\end{lemma}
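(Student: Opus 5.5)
The idea is to obtain the masses $\alpha_S$ as a minimizer of a suitable potential whose partial derivatives are exactly the quantities in the claimed inequality. First fix a slightly larger target: since $C\rho^2<1$, choose $\rho'>\rho$ with $C\rho'^2<1$. I will prove the non-strict inequality $\sum_{x\in S}\max\{0,\rho'-\alpha_x\}\leq 1-\alpha$ for every $S\in\calS$, and then derive the strict one for $\rho$ at the end.

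On the compact domain $D:=\{(\alpha_S)_{S\in\calS}:\alpha_S\geq 0,\ \sum_S\alpha_S\leq 1\}$ consider
\[
    G(\vec\alpha):=\alpha-\frac{\alpha^2}{2}+\frac12\sum_{x\in\mathcal C}\bigl(\max\{0,\rho'-\alpha_x\}\bigr)^2,
\]
where $\alpha=\sum_S\alpha_S$ and $\alpha_x=\sum_{S\ni x}\alpha_S$. The function $G$ is continuously differentiable, because $t\mapsto(\max\{0,\rho'-t\})^2/2$ is $C^1$ with derivative $-\max\{0,\rho'-t\}$. Moreover
\[
    \frac{\partial G}{\partial\alpha_S}=(1-\alpha)-\sum_{x\in S}\max\{0,\rho'-\alpha_x\}.
\]
By compactness, $G$ attains its minimum on $D$ at some $\vec\alpha^*$.

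\textbf{Step 1 (the minimizer avoids $\alpha=1$).} At $\vec\alpha=0$ we have $G=C\rho'^2/2<1/2$. At any point of $D$ with $\alpha=1$ we have $G\geq 1-\tfrac12=\tfrac12$, since the sum term is nonnegative. Hence the minimizer satisfies $\alpha^*<1$.

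\textbf{Step 2 (one-sided first-order conditions).} Since $\alpha^*<1$, the constraint $\sum_S\alpha_S\leq 1$ is inactive near $\vec\alpha^*$, so we may freely increase any coordinate. Minimality then forces $\partial G/\partial\alpha_S\geq 0$ for every $S$; if $\alpha^*_S>0$, we can also decrease that coordinate and obtain equality. In either case
\[
    \sum_{x\in S}\max\{0,\rho'-\alpha^*_x\}\leq 1-\alpha^*\quad\text{for all }S\in\calS.
\]

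\textbf{Step 3 (strictness for $\rho$).} Fix $S\in\calS$. If some $x\in S$ has $\alpha^*_x<\rho'$, then for that $x$ we have $\max\{0,\rho-\alpha^*_x\}<\rho'-\alpha^*_x$, while every other term is at most its $\rho'$-counterpart. Hence the $\rho$-sum is strictly below $1-\alpha^*$. Otherwise every $x\in S$ has $\alpha^*_x\geq\rho'>\rho$, so the $\rho$-sum is $0<1-\alpha^*$. Either way the strict inequality holds, and $\alpha^*<1$ as required.

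The main obstacle is choosing the potential so that its gradient reproduces both the $1-\alpha$ term and the $(\rho-\alpha_x)^+$ terms simultaneously. The $-\alpha^2/2$ term makes $G$ nonconvex, but this is harmless: only the first-order conditions at a global minimizer are used, and compactness ensures one exists. A second subtlety is the boundary face $\alpha=1$, where the one-sided conditions would fail. The comparison of $G(0)$ with the value $\geq 1/2$ on that face, which is exactly where $C\rho'^2<1$ enters, rules it out.
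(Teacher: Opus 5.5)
Your proof is correct and follows essentially the paper's route: minimize a quadratic potential of the form $\sum_{x}\max\{0,\rho-\alpha_x\}^2-\lambda(1-\alpha)^2$ over the simplex, rule out the face $\alpha=1$ by comparing with $\vec 0$, and read off the inequality from the first-order condition in the direction of increasing $\alpha_S$. The only difference is where the slack comes from. Your $G$ equals $\tfrac12+\tfrac12\bigl(\sum_x\max\{0,\rho'-\alpha_x\}^2-(1-\alpha)^2\bigr)$, i.e., the paper's potential with $\lambda=1$ and $\rho$ replaced by $\rho'>\rho$, so you get strictness from $\rho<\rho'$ where the paper gets it from $\lambda<1$.
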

\begin{proof}
    We use a constrained minimization argument and choose its objective by working backwards from these inequalities. Fix $\lambda:=(1+C\rho^2)/2$, so that $C\rho^2<\lambda<1$. Our aim is to obtain the stronger bounds
    \begin{align}\label{eq:potentialWithLambda}
        \sum_{x\in S}\max\set{0,\rho-\alpha_x}\leq\lambda(1-\alpha)
        \qquad\text{for every }S\in\calS
    \end{align}
    by showing that any violation would allow us to decrease a suitable potential. Define
    \begin{equation}\label{eq:potentialfct}
        \Phi(\vec\alpha)
        :=\sum_{x\in\mathcal{C}}\max\set{0,\rho-\alpha_x}^2
        -\lambda(1-\alpha)^2,
    \end{equation}
    where $\vec\alpha$ denotes $(\alpha_S)_{S\in\calS}$ with nonnegative entries fulfilling $\sum_{S\in\calS}\alpha_S\leq 1$. We will now argue that a vector minimizing this potential also fulfills \Cref{eq:potentialWithLambda}.

    By compactness of the feasible simplex and continuity of $\Phi$, there exists a non-negative vector that minimizes $\Phi$. Fix a minimizing vector $\vec\alpha$. At the zero vector, we have $\Phi(\vec 0)=C\rho^2-\lambda<0$, whereas every vector with $\alpha=1$ has $\Phi(\vec\alpha)\geq 0$. Thus, the minimizer satisfies $\alpha<1$.

    Suppose that \Cref{eq:potentialWithLambda} fails for some $S\in\calS$. Write $r_x:=\max\set{0,\rho-\alpha_x}$ and $R_S:=\sum_{x\in S}r_x$, and let $\eta:=R_S-\lambda(1-\alpha)>0$ be the amount of the violation. Increase only $\alpha_S$ by $\eps$, where $0<\eps<1-\alpha$, and denote the resulting feasible vector by $\vec\alpha(\eps)$. Its total mass is $\alpha+\eps$, and the remaining capacity of each $x\in S$ becomes $\max\set{0,r_x-\eps}$; all other capacities remain unchanged. Substituting directly into \Cref{eq:potentialfct} gives
    \begin{align*}
        \Phi(\vec\alpha(\eps))-\Phi(\vec\alpha)
        &=\sum_{x\in S}\left(\max\set{0,r_x-\eps}^2-r_x^2\right)
        +2\lambda\eps(1-\alpha)-\lambda\eps^2\\
        &\leq-2\eps R_S+|S|\eps^2+2\lambda\eps(1-\alpha)-\lambda\eps^2\\
        &=-2\eta\eps+(|S|-\lambda)\eps^2.
    \end{align*}
    The inequality uses $\max\set{0,r_x-\eps}^2\leq(r_x-\eps)^2$. Since $\eta>0$ and $|S|-\lambda>0$, choosing
    \[
        0<\eps<\min\set{1-\alpha,\frac{2\eta}{|S|-\lambda}}
    \]
    makes this potential difference strictly negative, contradicting the minimality of $\vec\alpha$. Hence, \Cref{eq:potentialWithLambda} holds for every $S\in\calS$. Using $\lambda<1$ and $\alpha<1$, we conclude that
    \[
        \sum_{x\in S}\max\set{0,\rho-\alpha_x}
        \leq\lambda(1-\alpha)<1-\alpha.
    \]
    This holds for every $S\in\calS$, so \Cref{lem:goodset} implies that no candidate palette can guarantee relative defect $\rho$ for these masses.
\end{proof}

\subsection{Towards the upper bound: formulation as a linear program}
\label{sec:towardsUpperBound}
For a fixed family $\mathcal S$, consider the following relaxed max-min linear
program:
\begin{equation}
\label{eq:generalQualityLP}
\begin{aligned}
    \text{maximize } z\\
    \text{subject to}\\
     \forall S \in \mathcal{S} \quad &Q_{\vec{\alpha}}(S) \geq z \\
     &\sum_{S \in \mathcal{S}} \alpha_S \leq 1 \\
      \forall S \in \mathcal{S}: \quad &\alpha_S \geq 0
\end{aligned}   
\end{equation}

Its optimum is
$\max_{\vec\alpha}\min_{S\in\mathcal S}Q_{\vec\alpha}(S)$ over the
displayed simplex. Every vector $\vec\alpha$ arising from the relative
predecessor-neighbor counts at a node is
feasible for this LP. Hence, by \Cref{eq:NearEqualityOfDefectAndQuality},
the LP optimum upper-bounds the worst-case relative defect of the actual
algorithm. The converse need not hold because the relaxation may contain
vectors $\vec\alpha$ that are not realizable by an execution.

\subsection{The Two-Bucket Two-Sweep Algorithm}
For the two-bucket construction, we partition the colors in $\mathcal{C}$ into two disjoint \textit{buckets} of sizes $C_1$ and $C_2$ (with $C=C_1+C_2$) and independently construct all possible sets of size $s_1:=\lfloor\sqrt C\rfloor$ in the first bucket and $s_2:=\lceil\sqrt C\rceil$ in the second bucket. More precisely:
\begin{align*}
    \mathcal{S} := \underbrace{\binom{\{1, ..., C_1\}}{\lfloor \sqrt{C} \rfloor}}_{\text{ first bucket }} \bigcup \underbrace{\binom{\{C_1 + 1, ..., C\}}{\lceil \sqrt{C} \rceil}}_{\text{ second bucket }} 
\end{align*}

Within bucket $i$, select the $s_i$ colors with the smallest counts $k_v(x)$, compute the resulting palette's quality, and choose the better of the two palettes.

Note that this construction has the handy property that sets from different buckets are guaranteed to be disjoint. Furthermore, the number of candidate palettes in the first bucket is exactly $\binom{C_1}{\lfloor\sqrt{C}\rfloor}$, and $\binom{C_2}{\lceil\sqrt{C}\rceil}$ sets can be constructed from the second bucket. \par
For purposes of obtaining an upper bound, the relaxed two-player game
simplifies as follows: Player one chooses $C_1$ and $C_2$, which determines
all palettes, and the relaxed adversary maximizes the LP objective over the
$\alpha_S$ values. We therefore seek a partition of the $C$ colors that
minimizes this worst-case upper bound. For some small values of $C$, we
empirically computed such optimal partitions and stated them in
\Cref{sec:full-table}.

By \Cref{lemma:EqualAlphas}, the relaxed LP has an optimal solution in which
all $\alpha_S$ values within a bucket are equal.  Restrict to such a solution,
and write $\alpha_i$ for the common value in bucket $i$.  The total
masses of the two buckets are
\begin{align*}
    a_1 := \binom{C_1}{s_1} \cdot \alpha_1 \text{ and } a_2 := \binom{C_2}{s_2} \cdot\alpha_2,
\end{align*}
respectively.  For arbitrary palettes $S_1$ and $S_2$ from the two buckets,
symmetry and~\Cref{eq:qualityOfSet} give the following two qualities:
\begin{align}
    Q_{\vec{\alpha}}(S_1) &= \frac{1}{s_1} \cdot \left( 1 + \sum_{S' \in \mathcal{S}} \left(|S_1 \cap S'| - 1\right) \cdot \alpha_{S'} \right) \notag \\ 
    &= \frac{1}{s_1} \cdot \left( 1 - \binom{C_1}{s_1} \alpha_1 - \binom{C_2}{s_2} \alpha_2 + \sum_{x \in S_1} \binom{C_1 - 1}{s_1 - 1} \cdot \alpha_1 \right) \notag\\
    &= \frac{1 - a_1  - a_2}{s_1} +  \frac{s_1}{C_1} \cdot a_1 \label{eq:qualityS1} \\
    Q_{\vec{\alpha}}(S_2) &= \frac{1 - a_1  - a_2}{s_2} +  \frac{s_2}{C_2} \cdot a_2 \label{eq:qualityS2}
\end{align}

The relaxed max-min linear program for the two-bucket construction thus
simplifies to the subsequent form.
\begin{equation}
\label{eq:LPbukets}
\begin{aligned}
    \text{maximize } z\\
    \text{subject to}\\
     \frac{1 - a_1  - a_2}{s_1} +  \frac{s_1}{C_1} \cdot a_1 &\geq z \\
     \frac{1 - a_1  - a_2}{s_2} +  \frac{s_2}{C_2} \cdot a_2  &\geq z \\
     a_1 + a_2 &\leq 1 \\
     a_1 \geq 0 \text{ and } a_2 &\geq 0
\end{aligned}   
\end{equation}

\begin{lemma}[Relaxed LP upper bound]
    \label{lemma:optimalityOfLP}
    Let $z_{2B}$ be the optimum of the linear program in
    \Cref{eq:LPbukets}. The relative defect guaranteed by the Two-Bucket
    Two-Sweep Algorithm is at most $z_{2B}$.
\end{lemma}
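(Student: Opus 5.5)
The plan is to chain three facts already established: \Cref{eq:NearEqualityOfDefectAndQuality}, the observation that realized predecessor masses are feasible for \Cref{eq:generalQualityLP}, and the symmetrization \Cref{lemma:EqualAlphas}, which reduces the general LP for the two-bucket family to \Cref{eq:LPbukets}.

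First, fix any execution and any node $v$ of positive degree. Let $\vec\alpha$ be the vector of normalized counts of earlier neighbors announcing each palette. It is nonnegative and satisfies $\sum_S\alpha_S=|N_<(v)|/\deg(v)\leq 1$, so it is feasible for \Cref{eq:generalQualityLP}. Note that the bucket-wise implementation (taking the $s_i$ colors of smallest $k_v(x)$ in each bucket and then the better of the two) does return a global minimizer of $Q_{\vec\alpha}$ over $\calS$. Within a bucket all palettes have the same size $s_i$, so $Q$ depends only on $\sum_{x\in S}\alpha_x$, which is minimized by the $s_i$ smallest counts. Hence \Cref{eq:NearEqualityOfDefectAndQuality} applies and gives $\rho_v\leq\min_{S\in\calS}Q_{\vec\alpha}(S)$. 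Setting $z=\min_S Q_{\vec\alpha}(S)$ makes $(\vec\alpha,z)$ feasible for \Cref{eq:generalQualityLP}, so $\rho_v$ is at most the optimum $z^*$ of the general LP.

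Second, I would show $z^*=z_{2B}$. By \Cref{lemma:EqualAlphas}, some optimal solution of \Cref{eq:generalQualityLP} is constant within each bucket; if it is stated only for the max-min value, the argument is the same. The reason is that the objective $\min_S Q_{\vec\alpha}(S)$ is concave, being a minimum of affine functions, and is invariant under permutations of the colors inside each bucket. Averaging over that symmetry group therefore does not decrease it. For a bucket-constant $\vec\alpha$, the computation leading to \Cref{eq:qualityS1,eq:qualityS2} shows that all palettes in bucket $i$ share the quality given there, expressed through $a_i=\binom{C_i}{s_i}\alpha_i$. The constraints $\alpha_S\geq0$ and $\sum\alpha_S\leq1$ translate exactly into $a_1,a_2\geq0$ and $a_1+a_2\leq1$. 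The map between bucket-constant vectors and pairs $(a_1,a_2)$ is a bijection preserving objective values, so the restricted optimum equals $z_{2B}$. Since the symmetric restriction loses nothing, $z^*=z_{2B}$.

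Third, combining the two steps gives $\rho_v\leq z_{2B}$ for every node in every execution, with no $o(\deg(v))$ error term, which is stronger than \Cref{eq:defRelativeDefect}. The degenerate cases are routine. If $C_i=0$, that bucket contributes no palettes and its constraint is dropped; the stated hypotheses exclude this anyway. Isolated nodes have defect zero.

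The main obstacle I expect is the symmetrization step: being careful that \Cref{lemma:EqualAlphas} indeed yields an optimal solution of the full LP and not only of a restricted one. If needed, I would prove it directly by the averaging and concavity argument above.
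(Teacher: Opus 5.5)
Your proposal is correct and follows the paper's proof. Realized predecessor masses are feasible for the general LP, so \Cref{eq:NearEqualityOfDefectAndQuality} bounds every node's relative defect by its optimum, and \Cref{lemma:EqualAlphas} together with the substitution $a_i=\binom{C_i}{s_i}\alpha_i$ identifies that optimum with $z_{2B}$. Your two additional checks are correct and make explicit details the paper leaves implicit: that the bucket-wise selection returns a global minimizer of $Q_{\vec\alpha}$, and the concavity/averaging justification of the symmetrization.
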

\begin{proof}
    By \Cref{eq:NearEqualityOfDefectAndQuality}, the
    optimum of the general relaxed LP upper-bounds the defect of every
    execution.  By \Cref{lemma:EqualAlphas}, that LP has a bucket-symmetric
    optimum.  Under the substitution
    $a_i=\binom{C_i}{s_i}\alpha_i$,
    \Cref{eq:qualityS1,eq:qualityS2} show that its restriction to
    bucket-symmetric vectors $\vec\alpha$ is the LP in \Cref{eq:LPbukets}. Its value $z_{2B}$ is
    therefore the required upper bound.
\end{proof}

\subsection{Solving the linear program}
Solving this linear program when we use only a single bucket (i.e., we create all possible sets of size $\lfloor \sqrt{C} \rfloor$ or $\lceil \sqrt{C} \rceil$) is straightforward, and we defer the analysis to the appendix \Cref{sec:singleBucketAnalysis}. When $C$ is a perfect square, this single-bucket design attains the $1/\sqrt C$ lower bound for the Two-Sweep model of \Cref{thm:2sweeplower}. When $C$ differs from a perfect square by exactly one, one of the two stated single-bucket designs attains the corresponding upper bound in \Cref{thm:2bucketasymptotic}.

For completeness, consider two feasible buckets when $C=s^2$, so
$s_1=s_2=s$. By \Cref{eq:qualityS1,eq:qualityS2},
\[
  \min\{Q_{\vec\alpha}(S_1),Q_{\vec\alpha}(S_2)\}
  \leq \frac{C_1}{C}Q_{\vec\alpha}(S_1)
      +\frac{C_2}{C}Q_{\vec\alpha}(S_2)
  =\frac{1-a_1-a_2}{s}
      +\frac{s}{C}(a_1+a_2)
  =\frac1s.
\]
Taking $a_1=a_2=0$ attains $1/s$, so the LP optimum is
exactly $1/s$. The first term in the maximum in \Cref{thm:2bucketexact}
is $1/s$, and the second is $s^2/(C_2+s^3)\leq1/s$.
Together with \Cref{lemma:optimalityOfLP}, this proves the square case.

In the remainder of this section, we assume that $C$ is not a square number, and we analyze the linear program when none of the two buckets are empty. The following lemma states a property of the linear program that we will later use to simplify our calculations.

\begin{lemma}
\label{lemma:optimalLPSolution}
    For nonsquare $C$ and any choice of $C_1 \geq s_1$ and $C_2 \geq s_2$, a maximum value of $z$ is
    \begin{align}
        z_{2B} = \frac{1 - a_1  - a_2}{s_1} +  \frac{s_1}{C_1} \cdot a_1 = \frac{1 - a_1  - a_2}{s_2} +  \frac{s_2}{C_2} \cdot a_2
    \end{align}
    where either $a_1 = 0$ or $a_1 + a_2 = 1$.
\end{lemma}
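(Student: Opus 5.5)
We view the LP in \Cref{eq:LPbukets} as maximizing $\min\{f_1,f_2\}$ over the triangle $T:=\{a_1,a_2\geq0,\ a_1+a_2\leq1\}$. Here
\[
f_1(a_1,a_2):=\frac{1-a_1-a_2}{s_1}+\frac{s_1}{C_1}a_1,\qquad f_2(a_1,a_2):=\frac{1-a_1-a_2}{s_2}+\frac{s_2}{C_2}a_2 .
\]
Both are affine, so $\min\{f_1,f_2\}$ is concave and piecewise linear, and its maximum over the compact polytope $T$ is attained. The plan has three steps: show that some optimum lies on the line $f_1=f_2$; then show that along the segment $\{f_1=f_2\}\cap T$ the objective is monotone; finally conclude that the maximum sits at an endpoint of that segment, i.e.\ on one of the edges $a_1=0$, $a_2=0$, or $a_1+a_2=1$. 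The remaining work is to rule out the edge $a_2=0$ (unless it coincides with another case) and to show that the endpoint actually used lies on $a_1=0$ or on $a_1+a_2=1$.

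First I will record the signs of the partial derivatives. Since $C$ is not a square, $s_1<\sqrt C<s_2=s_1+1$.

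$\partial_{a_1}f_1=s_1/C_1-1/s_1$, which is negative iff $C_1>s_1^2$.

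$\partial_{a_2}f_1=-1/s_1<0$.

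$\partial_{a_1}f_2=-1/s_2<0$.

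$\partial_{a_2}f_2=s_2/C_2-1/s_2$, which is negative iff $C_2>s_2^2$.

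In particular, increasing $a_2$ always hurts $f_1$ and increasing $a_1$ always hurts $f_2$. From this I will argue that at an optimum with $f_1\neq f_2$, one can move the point to raise the smaller function without decreasing the min, until equality holds or a boundary is hit. Concretely, if $f_1<f_2$, decrease $a_2$ or adjust $a_1$. At the origin $f_1=1/s_1>1/s_2=f_2$, so the smaller function there is $f_2$, and raising $a_2$ (if $\partial_{a_2}f_2>0$) or lowering $a_1$ helps. This case analysis should show that an optimum exists with $f_1=f_2$. Then the equality line is a single segment in $T$. Because $f_1=f_2$ at the origin fails with $f_1>f_2$, and at the vertex $(0,1)$ we have $f_1=0\leq f_2$, the segment meets the edge $a_1=0$ (intermediate value theorem on that edge). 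Its other endpoint lies on $a_2=0$ or on $a_1+a_2=1$.

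Next, $z$ restricted to the segment is affine, so its maximum is at an endpoint. If it is at the $a_1=0$ endpoint we are done. Otherwise the candidate other endpoint is on $a_1+a_2=1$ (done) or on $a_2=0$. I expect the main obstacle to be excluding the $a_2=0$ endpoint. There $f_2=(1-a_1)/s_2$ while $f_1=(1-a_1)/s_1+s_1a_1/C_1$, and equality would need $(1-a_1)(1/s_2-1/s_1)=s_1a_1/C_1$. The left side is $\leq0$ and the right side is $\geq0$, so this forces $a_1=0$, the origin, where $f_1\neq f_2$. Hence that endpoint cannot occur, and the second endpoint lies on $a_1+a_2=1$.

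Finally, I would double-check the first step, that an optimum with $f_1=f_2$ exists, by a cleaner argument. If at an optimum $f_2<f_1$ strictly, then the optimum also maximizes $f_2$ locally. An affine function on $T$ is maximized at a vertex, and $f_2$ at the vertices is $1/s_2$, $0$, or $s_2/C_2$; at each of these one can check that $f_1\geq f_2$ fails or that the point lies on the allowed edges. The symmetric case $f_1<f_2$ is handled the same way, with $f_1$ at the vertices equal to $1/s_1$, $s_1/C_1$, or $0$. In either case the optimum has $a_1=0$ or $a_1+a_2=1$, or can be moved to such a point without changing $z$.
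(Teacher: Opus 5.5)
Your analysis of the segment $\{f_1=f_2\}\cap T$ is correct and agrees with the paper. It starts on the edge $a_1=0$, the common value is affine along it, and your computation on $a_2=0$ excludes that edge, so the other endpoint lies on $a_1+a_2=1$. The gap is in the first step: neither sketch shows that an optimum with $f_1=f_2$ exists.

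The decisive fact is the sign of $\partial_{a_2}f_2=s_2/C_2-1/s_2$, which you state only conditionally (``if $\partial_{a_2}f_2>0$''). Without it the case $f_2<f_1$ is not closed. If $C_2\geq s_2^2$ were possible, the origin, where $f_2=1/s_2<1/s_1=f_1$, could be the unique optimum, and the lemma would fail. Your fallback that the optimum then ``lies on the allowed edges'' does not help, because the lemma also asserts $z_{2B}=f_1=f_2$; the origin lies on $a_1=0$ but has $f_1\neq f_2$. Likewise, ``can be moved to such a point without changing $z$'' is not justified anywhere. There is a further imprecision: ``an affine function on $T$ is maximized at a vertex'' locates \emph{a} maximizer, but your optimum need not be a vertex. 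You need the whole face of maximizers; for instance, when $C_1=s_1^2$ the entire edge $a_2=0$ maximizes $f_1$.

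The missing line is $C_2=C-C_1\leq C-s_1<C<s_2^2$, since $C$ is not a square; the paper's proof opens with exactly this. At the vertices, $f_2$ takes the values $1/s_2$, $0$, and $s_2/C_2>1/s_2$, so $(0,1)$ is its unique maximizer on $T$. Suppose an optimum has $f_2<f_1$. As you correctly observe, it is then a local, hence global, maximizer of the affine $f_2$, so it must be $(0,1)$. But $f_1(0,1)=0<f_2(0,1)$, a contradiction. Now suppose an optimum has $f_1<f_2$. It then maximizes $f_1$, whose maximizers form a face avoiding $(0,1)$ because $f_1(0,1)=0<1/s_1$. That face lies in the edge $a_2=0$, where you showed $f_1>f_2$, again a contradiction.

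With this repair your argument is a valid alternative to the paper's. The paper rules out the strict cases by three explicit perturbations: decrease $a_2$; increase $a_2$; or, on $a_1+a_2=1$, shift mass from $a_1$ to $a_2$. These rest on the same sign facts. Your local-maximum formulation replaces that boundary case distinction with one comparison of vertex values per function.
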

\begin{proof}Denote the two expressions in \Cref{eq:LPbukets} by
    $f_i(a_1,a_2)$. The LP maximizes
    $\min\set{f_1,f_2}$ subject to $a_1,a_2\geq0$ and
    $a_1+a_2\leq1$.
    Since $C_2<C<s_2^2$, increasing $a_2$ decreases $f_1$ and
    increases $f_2$. Moreover, when $a_2=0$, we have
    \begin{align}\label{eq:optimalDifferenceWithAlpha2IsZero}
        f_1-f_2
        =(1-a_1)\left(\frac1{s_1}-\frac1{s_2}\right)
            +\frac{s_1}{C_1}a_1>0,
    \end{align}
        
    so equality is impossible there.

    We next show that an optimal solution can be chosen with $f_1=f_2$.
    If $f_1<f_2$, then $a_2>0$, and slightly decreasing $a_2$
    increases the smaller value $f_1$. If $f_2<f_1$ and
    $a_1+a_2<1$, then slightly increasing $a_2$
    increases the smaller value $f_2$. Finally, if $f_2<f_1$ and
    $a_1+a_2=1$, then $a_1>0$ (otherwise
    $a_2=1$ and $f_1=0<f_2=s_2/C_2$, contradicting $f_2<f_1$). Choose a sufficiently
    small $\delta>0$ and replace $a_1$ by $a_1-\delta$ and
    $a_2$ by $a_2+\delta$. This preserves feasibility, increases
    $f_2$, and, for sufficiently small $\delta$, leaves $f_2$ smaller than
    $f_1$. Hence it strictly increases $\min\set{f_1,f_2}$, contradicting
    optimality. Thus none of these strict inequalities can hold at an
    optimum, and we may assume that $f_1=f_2=z_{2B}$.

    The equality $f_1=f_2$ describes a line, and the feasible points on this
    line form an interval. The common value $f_1=f_2$ changes linearly along
    this interval, so its maximum is attained at one of the interval's
    endpoints. Such an endpoint satisfies $a_1=0$, $a_2=0$,
    or $a_1+a_2=1$. It remains to argue that $a_2 = 0$ cannot lead to an optimal solution. We know that \Cref{eq:optimalDifferenceWithAlpha2IsZero} is always larger than $0$ for any choice of $a_1$, but since for optimal solutions $f_1-f_2 = 0$, this is not possible.  Hence a maximum is attained with either $a_1=0$ or $a_1+a_2=1$, as claimed.
\end{proof}

Based on that, we compute the LP upper bound in \Cref{thm:2bucketexact} by
considering the two possible cases for $a_1$ and $a_2$. The
details are given in the subsequent lemma.

\begin{lemma}
 \label{lemma:ApproxFinal}
    For nonsquare $C$, for any choice of $ C_1 \geq s_1$ and $ C_2 \geq s_2$, the maximum value of $z_{2B}$ is
    \begin{align*}
        z_{2B} = \max \left\{\frac{s_1 \cdot s_2}{s_1 \cdot C_2 + s_2 \cdot C_1}, \frac{s_2^2}{C_2 + s_1 \cdot s_2^2} \right\}.
    \end{align*}
\end{lemma}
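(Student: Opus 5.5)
By \Cref{lemma:optimalLPSolution}, an optimal solution of \Cref{eq:LPbukets} satisfies $f_1=f_2$ and lies either on the edge $a_1=0$ or on the edge $a_1+a_2=1$. The plan is to solve the equation $f_1=f_2$ on each of these two edges, compute the resulting value of $z$, and take the larger of the two (whichever case is feasible). Throughout I use $s_1<\sqrt C<s_2$ for nonsquare $C$, together with $C_1\geq s_1$ and $C_2\geq s_2$.

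\emph{Case $a_1=0$.} Here $f_1=(1-a_2)/s_1$ and $f_2=(1-a_2)/s_2+(s_2/C_2)a_2$. Setting $u:=1-a_2$, the equation $f_1=f_2$ reads
\[
  u\Bigl(\tfrac1{s_1}-\tfrac1{s_2}+\tfrac{s_2}{C_2}\Bigr)=\tfrac{s_2}{C_2},
\]
so $u=\frac{s_1s_2^2}{C_2(s_2-s_1)+s_1s_2^2}$. This gives $z=u/s_1=\frac{s_2^2}{C_2(s_2-s_1)+s_1s_2^2}$. Since $s_2-s_1=1$ for nonsquare $C$, this is exactly $\frac{s_2^2}{C_2+s_1s_2^2}$. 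It also satisfies $u\in(0,1]$, so the point is feasible.

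\emph{Case $a_1+a_2=1$.} The first terms vanish, so $f_1=(s_1/C_1)a_1$ and $f_2=(s_2/C_2)(1-a_1)$. Solving $f_1=f_2$ gives $a_1=\frac{s_2C_1}{s_1C_2+s_2C_1}\in[0,1]$ and common value $z=\frac{s_1s_2}{s_1C_2+s_2C_1}$. This is also feasible.

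Both points are feasible for the LP, so $z_{2B}$ is at least the maximum of the two values. By \Cref{lemma:optimalLPSolution}, some optimum has one of these two forms, so $z_{2B}$ is at most that maximum as well. The two bounds together give equality.

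The only subtle step is the use of $s_2-s_1=1$ to match the stated first-case formula. It must also be checked that the line $f_1=f_2$ meets each edge in exactly one point, which holds because the difference $f_1-f_2$ is strictly monotone in the free parameter on each edge. This rules out degenerate cases in which an entire edge attains the optimum.
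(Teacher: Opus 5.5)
Your proposal is correct and is essentially the paper's proof: you use the same two cases from \Cref{lemma:optimalLPSolution} and obtain the same solutions $a_1=\frac{s_2C_1}{s_1C_2+s_2C_1}$ and $a_2=1-u=\frac{C_2}{C_2+s_1s_2^2}$ (using $s_2-s_1=1$), hence the same two values. Your explicit sandwich argument just makes precise the step the paper leaves implicit when it takes the maximum: both points are feasible, so the LP value is at least their maximum, while the lemma places some optimum at one of them.
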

\begin{proof}Consider the 2 cases from \Cref{lemma:optimalLPSolution}.

    \textbf{Case $a_1 + a_2 = 1$:}

    In this case, we know that
    \begin{align*}
        z_{2B} = \frac{s_1}{C_1} a_1 = \frac{s_2}{C_2} a_2 = \frac{s_2}{C_2} (1 - a_1)
    \end{align*}
    Solving this equation for $a_1$ we get that $a_1 = C_1 \cdot \frac{s_2}{s_1C_2+s_2C_1}$. Substituting this value into $z_{2B}$ we get
    \begin{align*}
        z_{2B} = \frac{s_1 \cdot s_2}{s_1 \cdot C_2 + s_2 \cdot C_1}.
    \end{align*}
    \textbf{Case $a_1 = 0$:}
    \begin{align*}
        z_{2B} = \frac{1 - a_2}{s_1} = \frac{1}{s_2} + a_2 \cdot \left(\frac{s_2}{C_2} - \frac{1}{s_2} \right)
    \end{align*}
    This leads, using that $s_2 = s_1 +1$ if $C$ is not a square number, to $a_2 = \frac{C_2}{C_2 + s_1s_2^2}$ and thus
    \begin{align*}
        z_{2B} = \left( \frac{C_2 + s_1s_2^2}{C_2 + s_1s_2^2} - \frac{C_2}{C_2 + s_1s_2^2} \right)/s_1 = \frac{s_1s_2^2}{C_2s_1 + s_1^2s_2^2} = \frac{s_2^2}{C_2 + s_1s_2^2}
    \end{align*}
\end{proof}

Together with \Cref{lemma:optimalityOfLP}, this proves the upper bound in
\Cref{thm:2bucketexact}.

To optimize this bound for a fixed nonsquare $C$, write $C_2=C-C_1$ and
regard the LP optimum as a function of the first bucket size:
\begin{align}\label{eq:OptimalValueAsFunctionOfC1}
    z_{2B}(C_1)
    =\max\left\{
        \frac{s_1s_2}{s_1C+C_1},
        \frac{s_2^2}{C-C_1+s_1s_2^2}
    \right\}.
\end{align}
The first argument decreases with $C_1$, while the second increases. We
first identify their intersection and then use this monotonicity to find
an optimal integer choice.

\begin{lemma}
    \label{lemma:RealValuesC1AndC2}
    Assume that $C$ is not a square number. The values
    \begin{align*}
        C_1^*:=\frac{s_1^2(s_2^2-C)}{s_1+s_2}
        \quad\text{and}\quad
        C_2^*:=\frac{s_2^2(C-s_1^2)}{s_1+s_2}
    \end{align*}
    sum to $C$ and make the two arguments defining $z_{2B}(C_1^*)$ equal.
    Their common value is
    \begin{align*}
        z_{2B}^*=\frac{s_1+s_2}{C+s_1s_2}
            =\frac{2s_1+1}{C+s_1^2+s_1}.
    \end{align*}
    For $C\geq3$, let
    \begin{align*}
        \widehat C_1:=\min\left\{C-s_2,\max\{s_1,C_1^*\}\right\}.
    \end{align*}
    Then the feasible interval $[s_1,C-s_2]$ is nonempty, and
    $z_{2B}(\widehat C_1)$ is the minimum value of $z_{2B}(C_1)$ over this
    interval.
\end{lemma}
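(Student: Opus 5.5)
The proof is direct algebra with $s_2=s_1+1$, which holds since $C$ is nonsquare. I will write $f(C_1):=\frac{s_1s_2}{s_1C+C_1}$ and $g(C_1):=\frac{s_2^2}{C-C_1+s_1s_2^2}$, matching \Cref{eq:OptimalValueAsFunctionOfC1}.

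\textbf{Step 1: $C_1^*$ and $C_2^*$ sum to $C$.} Compute
\[
s_1^2(s_2^2-C)+s_2^2(C-s_1^2)=C(s_2^2-s_1^2)=C(s_2-s_1)(s_1+s_2)=C(s_1+s_2).
\]
Dividing by $s_1+s_2$ gives $C_1^*+C_2^*=C$.

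\textbf{Step 2: the crossing point.} Setting $f=g$ and cross-multiplying gives an equation that is linear in $C_1$:
\[
s_1s_2\,(C-C_1+s_1s_2^2)=s_2^2\,(s_1C+C_1).
\]
Cancel $s_2$ and collect $C_1$, using $s_1+s_2$ as a factor. This yields
\[
C_1(s_1+s_2)=s_1C-s_1s_2C+s_1^2s_2^2=s_1^2(s_2^2-C),
\]
where the last equality uses $s_1-s_1s_2=-s_1^2$. So the crossing is exactly $C_1^*$.

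\textbf{Step 3: the common value.} Substitute $C_1^*$ into $f$. The denominator is
\[
s_1C+\frac{s_1^2(s_2^2-C)}{s_1+s_2}=\frac{s_1\bigl(C(s_1+s_2)+s_1s_2^2-s_1C\bigr)}{s_1+s_2}=\frac{s_1s_2\,(C+s_1s_2)}{s_1+s_2}.
\]
Hence $z_{2B}^*=\frac{s_1+s_2}{C+s_1s_2}$. Writing $s_2=s_1+1$ gives the second form $\frac{2s_1+1}{C+s_1^2+s_1}$.

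\textbf{Step 4: the feasible interval is nonempty.} We need $s_1\le C-s_2$, i.e.\ $C\ge 2s_1+1$. For $C\ge3$ nonsquare we have $s_1\ge1$ and $C\ge s_1^2+1$. If $s_1\ge2$, then $s_1^2+1\ge 2s_1+1$ and we are done. If $s_1=1$, then $C\in\{2,3\}$, and $C\ge3$ gives $C\ge 2s_1+1$.

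\textbf{Step 5: minimality.} On the interval $[s_1,C-s_2]$, $f$ is strictly decreasing and $g$ is strictly increasing, as the paper already notes; both denominators are positive there. Therefore $\max\{f,g\}$ is unimodal: it equals $f$ (decreasing) for $C_1\le C_1^*$ and $g$ (increasing) for $C_1\ge C_1^*$.

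The minimum over the interval is attained at the projection of $C_1^*$ onto $[s_1,C-s_2]$, and that projection is exactly $\widehat C_1$:
\begin{itemize}
\item If $C_1^*<s_1$, then $g\ge f$ on the whole interval, so $\max\{f,g\}=g$ is increasing and is minimized at $s_1$.
\item If $C_1^*>C-s_2$, then $\max\{f,g\}=f$ on the whole interval and is minimized at $C-s_2$.
\item Otherwise the minimum is at $C_1^*$ itself.
\end{itemize}
The only care needed is that, because the interval is nonempty, the $\min\{\cdot,\max\{\cdot\}\}$ formula for $\widehat C_1$ really produces this clamp.

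The main obstacle is the algebra in Steps 2 and 3. The key is to use $s_2^2-s_1^2=s_1+s_2$ and to cancel the common factor $s_1s_2$ at the right moment.
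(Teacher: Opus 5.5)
Your proof is correct and follows essentially the paper's route: algebra with $s_2=s_1+1$ to locate the crossing point and its common value, then the monotonicity-and-clamping argument for the minimum over $[s_1,C-s_2]$. The differences are cosmetic: you derive $C_1^*$ by solving $f=g$ instead of checking the paper's denominator identities at $C_1^*$, and you prove the nonemptiness condition $C\geq 2s_1+1$, which the paper only asserts.
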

\begin{proof}
    Since $s_2-s_1=1$, we have
    \begin{align*}
        C_1^*+C_2^*
            =\frac{(s_2^2-s_1^2)C}{s_1+s_2}=C, \quad
        s_1C_2^*+s_2C_1^*
            =\frac{s_1s_2(C+s_1s_2)}{s_1+s_2},\quad
        C_2^*+s_1s_2^2
            =\frac{s_2^2(C+s_1s_2)}{s_1+s_2}.
    \end{align*}
    Since $C_2^*=C-C_1^*$ and $s_2=s_1+1$, the denominator of the first
    argument satisfies $s_1C+C_1^*=s_1C_2^*+s_2C_1^*$. Substituting the
    identities above into the two arguments therefore gives
    \begin{align*}
        \frac{s_1s_2}{s_1C+C_1^*}
            &=\frac{s_1s_2}{s_1C_2^*+s_2C_1^*}
            =s_1s_2\cdot\frac{s_1+s_2}{s_1s_2(C+s_1s_2)}
            =\frac{s_1+s_2}{C+s_1s_2},\\
        \frac{s_2^2}{C-C_1^*+s_1s_2^2}
            &=\frac{s_2^2}{C_2^*+s_1s_2^2}
            =s_2^2\cdot\frac{s_1+s_2}{s_2^2(C+s_1s_2)}
            =\frac{s_1+s_2}{C+s_1s_2}.
    \end{align*}
    Thus both arguments are equal at $C_1^*$, and their common value is
    \begin{align*}
        z_{2B}^*=\frac{s_1+s_2}{C+s_1s_2}
            =\frac{2s_1+1}{C+s_1^2+s_1},
    \end{align*}
    where the last equality again uses $s_2=s_1+1$.
    For nonsquare $C\geq3$, we have
    $C\geq s_1+s_2$, so the feasible interval is nonempty. The two
    arguments defining $z_{2B}(C_1)$ are positive, with the first strictly
    decreasing and the second strictly increasing. Their maximum therefore
    decreases up to $C_1^*$ and increases after $C_1^*$. Its constrained
    minimum is attained at the intersection if it is feasible, and at the
    nearest endpoint otherwise, which is exactly $\widehat C_1$.
\end{proof}

\begin{claim}[Optimal integer bucket choice]
    \label{claim:optimalIntegerBuckets}
    Let $C\geq3$ be nonsquare, let $\widehat C_1$ be as in
    \Cref{lemma:RealValuesC1AndC2} and let $z_{2B}(C_1)$ be the function defined in \Cref{eq:OptimalValueAsFunctionOfC1}. An optimal feasible integer split is
    obtained by comparing
    \begin{align*}
        C_1\in\left\{\lfloor\widehat C_1\rfloor,
                         \lceil\widehat C_1\rceil\right\},
        \qquad C_2=C-C_1.
    \end{align*}
    Comparing these at most two splits with the single-bucket choices of
    palette sizes $s_1$ and $s_2$ gives the smallest relaxed-LP upper bound
    among all these constructions. In particular, we can guarantee
    \begin{align*}
        \rho\leq\min\left\{
            \frac1{s_1},\frac{s_2}{C},
            z_{2B}(\lfloor\widehat C_1\rfloor),
            z_{2B}(\lceil\widehat C_1\rceil)
        \right\}.
    \end{align*}
\end{claim}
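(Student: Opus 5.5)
The plan is to use the unimodality of the bucket bound established in \Cref{lemma:RealValuesC1AndC2}. On the feasible integer range $C_1\in\{s_1,\dots,C-s_2\}$, write $z_{2B}(C_1)=\max\{g_1(C_1),g_2(C_1)\}$, where
\[
    g_1(C_1):=\frac{s_1s_2}{s_1C+C_1}
    \qquad\text{and}\qquad
    g_2(C_1):=\frac{s_2^2}{C-C_1+s_1s_2^2}.
\]
Both $g_1$ and $g_2$ are positive. The function $g_1$ is strictly decreasing in $C_1$, and $g_2$ is strictly increasing. Hence $z_{2B}$ is nonincreasing on $(-\infty,C_1^*]$ and nondecreasing on $[C_1^*,\infty)$ when viewed as a real function. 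Restricted to the feasible interval $[s_1,C-s_2]$, it is nonincreasing up to $\widehat C_1$ and nondecreasing after it, because $\widehat C_1$ is the clamp of $C_1^*$ to that interval.

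The first step is the integer optimization. Since $s_1$ and $C-s_2$ are integers and $\widehat C_1\in[s_1,C-s_2]$, both $\lfloor\widehat C_1\rfloor$ and $\lceil\widehat C_1\rceil$ are feasible integers. Now take any feasible integer $C_1$. If $C_1\leq\lfloor\widehat C_1\rfloor$, then monotonicity on the left part gives $z_{2B}(C_1)\geq z_{2B}(\lfloor\widehat C_1\rfloor)$. If $C_1\geq\lceil\widehat C_1\rceil$, then monotonicity on the right part gives $z_{2B}(C_1)\geq z_{2B}(\lceil\widehat C_1\rceil)$. Every feasible integer falls into one of these two cases, so the better of the two candidates is an optimal feasible integer split. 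Each candidate is a valid two-bucket design with $C_1\geq s_1$ and $C_2\geq s_2$. Therefore \Cref{thm:2bucketexact}, via \Cref{lemma:optimalityOfLP,lemma:ApproxFinal}, shows that its relative defect is at most the corresponding value $z_{2B}$.

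The second step handles the single-bucket designs. I would invoke the single-bucket analysis in \Cref{sec:singleBucketAnalysis}, or equivalently \Cref{lemm:VanillaColoring} with $L_v=[C]$, zero-offset defects and $p\in\{s_1,s_2\}$. This gives relative defect at most $\max\{p/C,1/p\}$. For $p=s_1$ we have $s_1^2\leq C$, so the bound is $1/s_1$. For $p=s_2$ we have $s_2^2\geq C$, so the bound is $s_2/C$. The algorithm can select whichever of the four constructions has the smallest bound, since $C$ is global knowledge. This yields the displayed minimum.

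The main obstacle I expect is a subtle point about the integrality of the endpoints. The claim asserts optimality among all feasible integer splits, and this relies on $\lfloor\widehat C_1\rfloor$ and $\lceil\widehat C_1\rceil$ staying within $[s_1,C-s_2]$. That holds because the endpoints of this interval are integers and the interval is nonempty for $C\geq3$, as shown in \Cref{lemma:RealValuesC1AndC2}. A second, minor check is that the single-bucket bounds apply to the relative defect in the sense of \Cref{eq:defRelativeDefect}. This follows because the strict inequality in \Cref{lemm:VanillaColoring} allows choosing $d=\lfloor\rho\,\deg(v)\rfloor$, which costs only an additive $O(1)$ term.
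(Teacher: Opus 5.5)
Your proof is correct and follows the paper's argument. The integer endpoints of $[s_1,C-s_2]$ make $\lfloor\widehat C_1\rfloor$ and $\lceil\widehat C_1\rceil$ feasible, the unimodality from \Cref{lemma:RealValuesC1AndC2} shows that every other feasible integer is dominated by one of them, and \Cref{lemma:SingleBucketAnalysis} supplies the single-bucket values $1/s_1$ and $s_2/C$. One cosmetic point: state the monotonicity of $z_{2B}$ on the feasible interval, where both denominators are positive, rather than on all of $(-\infty,C_1^*]$, because $g_1$ has a pole at $C_1=-s_1C$. Your alternative derivation of the single-bucket bounds through \Cref{lemm:VanillaColoring} is also valid.
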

\begin{proof}
    Both endpoints of $[s_1,C-s_2]$ are integers, so rounding
    $\widehat C_1$ down or up preserves feasibility. By the monotonicity in
    \Cref{lemma:RealValuesC1AndC2}, every other feasible integer has an LP
    value at least as large as one of these two choices. The single-bucket
    LP values are $1/s_1$ and $s_2/C$ by
    \Cref{lemma:SingleBucketAnalysis}. Taking the smallest of these values
    proves the claim.
\end{proof}

For square $C$, use a single bucket with palettes of size $\sqrt C$; for
$C=2$, the best single-bucket bound is $1$, and two nonempty buckets are
infeasible. Together with \Cref{claim:optimalIntegerBuckets}, this gives
an exact rule using at most four candidate constructions for every $C$.
Ties may be resolved arbitrarily; preferring a single bucket in a tie gives
the choices displayed in \Cref{tab:short_defect_data,tab:defect_table_complete}.
The rule optimizes the LP upper bound within the analyzed family. We next
show that it also satisfies the asymptotic bounds of
\Cref{thm:2bucketasymptotic}.

\restatetwobucketasymptotic*
\begin{proof}For square $C$, \Cref{lemma:SingleBucketAnalysis} gives
    $\rho\leq1/\sqrt C$.

    If $C=s_2^2-1$ or $C=s_1^2+1$, the two single-bucket choices give,
    respectively,
    \begin{align*}
        \rho&\leq\frac{s_2}{C}
            =\frac{\sqrt{C+1}}{C}
            \leq\frac1{\sqrt C}+\frac1{2C^{3/2}},\\
        \rho&\leq\frac1{s_1}
            =\frac1{\sqrt{C-1}}
            =\frac1{\sqrt{C\left(1-\frac1C\right)}}
            =\frac1{\sqrt C}\left(1-\frac1C\right)^{-1/2}
            =\frac1{\sqrt C}+\frac1{2C^{3/2}}+O(C^{-5/2}).
    \end{align*}
    Here we used $\sqrt{1+x}\leq1+x/2$ for $x\geq0$ and the expansion
    $(1-x)^{-1/2}=1+x/2+O(x^2)$. Thus, in both cases,
    $\rho\leq1/\sqrt C+1/(2C^{3/2})+O(C^{-5/2})$, proving the second
    case of the theorem.

    For every remaining nonsquare $C$, we have
    $s_1^2+2\leq C\leq s_2^2-2$ and hence $C\geq6$. In this range,
    \begin{align*}
        C_1^*&\geq\frac{2s_1^2}{s_1+s_2}
            =s_1-\frac{s_1}{s_1+s_2}>s_1-1,\\
        C_2^*&\geq\frac{2s_2^2}{s_1+s_2}>s_2.
    \end{align*}
    Thus $C_1=\lceil C_1^*\rceil$ and $C_2=C-C_1$ are feasible integer
    bucket sizes. The exact rule in \Cref{claim:optimalIntegerBuckets}
    gives a bound no larger than that of this choice, so it suffices to
    bound $z_{2B}(C_1)$.

    Let $\varepsilon:=C_1-C_1^*\in[0,1)$, so
    $C_2=C_2^*-\varepsilon$. Since $C_1\geq C_1^*$, the second argument
    defining $z_{2B}(C_1)$ attains the maximum. Consequently,
    \begin{align}
        z_{2B}(C_1)-z_{2B}^*
            &=\frac{s_2^2}{C_2+s_1s_2^2}
                -\frac{s_2^2}{C_2^*+s_1s_2^2}\notag\\
            &=\frac{s_2^2\bigl[(C_2^*+s_1s_2^2)-(C_2+s_1s_2^2)\bigr]}
                {(C_2+s_1s_2^2)(C_2^*+s_1s_2^2)}\notag\\
            &=\frac{s_2^2(C_2^*-C_2)}
                {(C_2+s_1s_2^2)(C_2^*+s_1s_2^2)}\notag\\
            &=\frac{\varepsilon}{C_2+s_1s_2^2}
                \cdot\underbrace{\frac{s_2^2}{C_2^*+s_1s_2^2}}_{=\,z_{2B}^*}\notag\\
            &=\frac{\varepsilon z_{2B}^*}{C_2+s_1s_2^2}
            \leq\frac1{s_1^2s_2^2}
            \leq\frac4{C^2}.\label{eq:integerBucketRounding}
    \end{align}
    The first inequality uses $\varepsilon<1$ and
    $z_{2B}^*=(s_1+s_2)/(C+s_1s_2)\leq1/s_1$; the last uses
    $s_1\geq\sqrt C/2$ and $s_2\geq\sqrt C$. This rounding term contributes
    $O(C^{-2})$ in addition to the $O(C^{-5/2})$ remainder in the bound
    on $z_{2B}^*$. Combining this with
    \Cref{lemma:technicalLemmaZStar} proves the third case:
    \begin{align*}
        \rho\leq z_{2B}(C_1) \leq z_{2B}^* + \frac4{C^2} \leq\frac1{\sqrt C}+\frac1{8C^{3/2}}
                +\frac1{32C^{5/2}}+\frac4{C^2} \leq\frac1{\sqrt C}+\frac1{8C^{3/2}}+\frac5{C^2}.
    \end{align*}
\end{proof}

\bibliography{references}

\onecolumn
\appendix
\clearpage
\section{Coloring graphs of bounded neighborhood independence}
\label{sec:BoundedNeighborhood}

In this section, we apply our oriented list defective coloring result
\Cref{thm:OLDC2}, through the list-arbdefective reduction of
\Cref{coll:sweepAdaption}, to graphs of bounded neighborhood independence.
The neighborhood independence $\theta$ of a graph $G$ is the
maximum independence number of a graph induced by the neighborhood of a
node. Notable examples with $\theta=O(1)$ are line graphs of bounded-rank
hypergraphs and claw-free graphs. We follow the recursive framework of
\cite{FuchsK25,FuchsK25full}, but use the \CONGEST list-arbdefective base
case obtained from \Cref{thm:OLDC2}.

We first state the precise problem setup. Fix an undirected $n$-node
communication graph $G=(V,E)$ of maximum degree $\Delta$, neighborhood
independence at most $\theta$, and a supplied proper $q$-vertex coloring.
Every instance has nonempty lists $L_v\subseteq\mathcal C$ and integer defect
functions $d_v:L_v\to\mathbb N_0$. In a list defective coloring, each node
$v$ chooses $x_v\in L_v$ and has at most $d_v(x_v)$ neighbors of color
$x_v$. In a list arbdefective coloring, the algorithm additionally orients
every monochromatic edge and requires at most $d_v(x_v)$ same-colored
outneighbors at $v$; the orientation need not be acyclic.

An instance has slack at least a real value $S\geq1$ if, for every node $v$,
\begin{align*}
    \sum_{x\in L_v}(d_v(x)+1)>S\deg(v).
\end{align*}
For an integer $C\geq1$, let $P_D(S,C)$ and $P_A(S,C)$ denote these two
problem classes when $|\mathcal C|\leq C$. We write $T_D,T_A$ for the
upper bounds on deterministic round complexities for these instance classes
(with the ambient graph parameters understood) and $B_D,B_A$ for the maximum
number of bits sent over one edge in one round. 

We first use the slack-generation reduction, which is Lemma~5 in
\cite{FuchsK25} respectively Lemma~2.2 in the full version \cite{FuchsK25full}.
\begin{restatable}{lemma}{restateLemmaSlackGeneration}
    \label{lemma:SlackGeneration}
    Let $G=(V,E)$ be a graph equipped with a proper $q$-vertex coloring. For
    every real $S\geq1$, we have
    \begin{align}
        T_A(1,C)
        &\leq O(\log\Delta)\bigl(T_A(2,C)+\log^*\Delta\bigr)
            +O(\log^*q),\label{eq:slackGeneration1}\\
        T_A(2,C)
        &\leq O(S^2)T_A(S,C)+O(\log^*q).
        \label{eq:slackGeneration2}
    \end{align}
    The corresponding per-edge message bounds (in bits) are
    \begin{align*}
        B_A(1,C)
        &=O\bigl(\max\{\log q,\log\Delta,\log C,B_A(2,C)\}\bigr),\\
        B_A(2,C)
        &=O\bigl(\max\{\log q,\log C,\log S,B_A(S,C)\}\bigr).
    \end{align*}
\end{restatable}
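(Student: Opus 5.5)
This lemma is the slack-generation reduction of \cite{FuchsK25,FuchsK25full}. I plan to re-derive it from there, checking only that each step works with list arbdefective instances in \CONGEST and recording message sizes. The two inequalities are proved separately. Both rely on the same mechanism: repeatedly solve an easier instance with more slack on part of the graph, fix the colors found, and pass the leftover instance with updated lists and defects to the next stage.

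For \Cref{eq:slackGeneration2}, I start with an instance of slack $2$. I replace each defect by $d_v'(x):=\lfloor (d_v(x)+1)/S'\rfloor-1$ for a suitable constant multiple $S'$ of $S$, and drop colors whose new defect is negative. Summing, $\sum_x(d_v'(x)+1)\ge \sum_x (d_v(x)+1)/S' - |L_v|$. This loses too much when many colors have small defect. As in \cite{FuchsK25}, I would therefore first compute a low-outdegree orientation, or a partition into $O(S^2)$ classes using a defective coloring from \Cref{lemm:defColorBlackBox} with $\alpha=\Theta(1/S)$. Each class then has degree at most $\deg(v)/S$ relative to the original degree. The classes are processed sequentially, for $O(S^2)$ rounds of calls to the slack-$S$ solver. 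When a class is processed, the colors already chosen by neighbors in earlier classes are charged against the defects, and intra-class monochromatic edges are absorbed by the arbdefect orientation. This gives $O(S^2)T_A(S,C)+O(\log^*q)$. Messages carry colors, class labels of $O(\log S)$ bits, and the subroutine's messages, which matches $B_A(2,C)$.

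For \Cref{eq:slackGeneration1}, I go from slack $1$ to slack $2$ by a halving scheme that runs for $O(\log\Delta)$ phases. In each phase, each node that still has slack near $1$ tries a small set of colors through a slack-$2$ subinstance built on a subgraph. I would choose the subgraph with a $2$-defective-style split, using \Cref{lemm:defColorBlackBox} at cost $O(\log^*\Delta)$ per phase. Nodes that receive a color are removed, and the remaining nodes keep slack at least $1$ with their degree halved in the residual instance. After $O(\log\Delta)$ phases all degrees are $0$ and every node is colored. The initial coloring costs $O(\log^* q)$ once.

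I expect the main obstacle to be the slack-$1$ to slack-$2$ step. The potential has to be shown to halve the residual degree while preserving slack $\ge 1$, and this requires the arbdefective orientation of monochromatic edges between removed and remaining nodes. I would follow \cite{FuchsK25full} Lemma~2.2 verbatim for this accounting, and then verify that all communicated objects (colors in $\mathcal C$, phase indices $O(\log\Delta)$, initial colors $O(\log q)$) fit in the stated bit bounds.
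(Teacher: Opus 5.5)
Your plan matches the paper's proof: the paper also takes both runtime inequalities directly from the slack-generation lemma of \cite{FuchsK25} (Lemma~2.2 in \cite{FuchsK25full}). Its only new work is the message accounting, namely the supplied $q$-coloring plus an intermediate $O(\Delta^2)$-coloring in the first reduction, the supplied coloring plus one of $O(S^2)$ subgraph indices in the second, and $O(\log C)$-bit final colors on top of the recursive calls. The one point to state explicitly is that the $\log\Delta$ term in $B_A(1,C)$ comes from that intermediate $O(\Delta^2)$-coloring, which is also what makes each phase cost $O(\log^*\Delta)$ rather than $O(\log^* q)$. It does not come from phase indices.
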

\begin{proof}
    The runtime inequalities are exactly the cited slack-generation lemma, so we need to focus on the message sizes here.
    Its first reduction communicates the supplied $q$-coloring and an
    intermediate $O(\Delta^2)$-coloring. The second communicates the supplied
    coloring and one of $O(S^2)$ subgraph indices. These labels require
    $O(\log q+\log\Delta)$ and $O(\log q+\log(S))$ bits, respectively.
    The final arbdefective colors require $O(\log C)$ bits, in addition to
    the messages of the recursive calls. This gives the displayed bounds.
\end{proof}

We next use the color-space reduction of Lemma~6 in \cite{FuchsK25}.
\begin{restatable}{lemma}{restateLemmaColorSpaceReduction}
    \label{lemm:colorSpaceReduction}
    Let $C\geq1$ and $p\in\{1,\ldots,C\}$ be integers, and let
    $1\leq\sigma\leq S$. Then
    \begin{align}
        T_A(S,C)
        \leq T_D(\sigma,p)
            +T_A\left(\frac S\sigma,
                \left\lceil\frac Cp\right\rceil\right).
        \label{eq:colorSpaceReduction}
    \end{align}
    The corresponding message bound (in bits) is
    \begin{align*}
        B_A(S,C)=O\left(\max\left\{\log C,B_D(\sigma,p),
        B_A\left(\frac S\sigma,\left\lceil\frac Cp\right\rceil\right)
        \right\}\right).
    \end{align*}
\end{restatable}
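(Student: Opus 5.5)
The plan is to treat \Cref{lemm:colorSpaceReduction} as the color-space reduction step of \cite{FuchsK25} (Lemma~6 there; Lemma~2.3 in \cite{FuchsK25full}). The round bound \Cref{eq:colorSpaceReduction} will be taken directly from that lemma, so the new content is only the message-size accounting. Still, I will first recall the structure of the reduction, since the message bound depends on it.

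\textbf{Structure of the reduction.} Partition the color space $\mathcal C$ (of size at most $C$) into $p$ consecutive blocks $\mathcal C_1,\dots,\mathcal C_p$, each of size at most $\lceil C/p\rceil$; this partition is globally known. For each node $v$ and block index $i$, let $L_{v,i}:=L_v\cap\mathcal C_i$ and set
\[
    D_v(i):=\left\lfloor \frac{1}{\sigma'}\sum_{x\in L_{v,i}}(d_v(x)+1)\right\rfloor-1
\]
for a suitable $\sigma'$ depending on $S/\sigma$ and $\deg(v)$, exactly as in the cited lemma. This defines an auxiliary list defective instance with lists in $\{1,\dots,p\}$. The original slack $S$ splits multiplicatively, so that the block instance has slack $\sigma$ and each resulting block has slack $S/\sigma$ with respect to its own degree.

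\textbf{The two stages.} First, solve the block instance, a list defective instance of type $P_D(\sigma,p)$, in time $T_D(\sigma,p)$. This gives each $v$ a block $i_v$ and at most $D_v(i_v)$ neighbors choosing the same block. Second, in every block, solve in parallel the list arbdefective instance restricted to the neighbors with the same block, using lists $L_{v,i_v}$ over a color space of size $\lceil C/p\rceil$. Its slack is at least $S/\sigma$ because the degree inside the block is at most $D_v(i_v)$. Different blocks use disjoint colors, so edges between different blocks are never monochromatic and need no orientation. This yields the time bound as the sum of the two terms. Since the reduction is deterministic and sequential, the bound holds verbatim.

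\textbf{Message size.} Three kinds of messages are sent. The first-stage messages are bounded by $B_D(\sigma,p)$. Each node announces its block index $i_v$ once, which costs $O(\log p)=O(\log C)$ bits. The second stage runs recursive calls that each occur only within one block. Every edge participates in at most one such call, namely the one for the common block of its endpoints, or in none, so the cost is $B_A(S/\sigma,\lceil C/p\rceil)$. Colors of the second stage are encoded relative to the ambient encoding and take $O(\log C)$ bits. Taking the maximum over all three gives the stated $O(\max\{\cdot\})$ bound.

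\textbf{Main obstacle.} The delicate point is verifying that the integrality in $D_v(i)$ still gives strict slack $S/\sigma$ in each block and slack $\sigma$ for the block instance. If I had to redo the argument rather than cite it, I would check this with the floor/ceiling manipulation from \cite{FuchsK25full}. Here I will rely on the cited lemma for it.
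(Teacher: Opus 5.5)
Your proposal is correct and follows the paper's proof: take the round bound from Lemma~6 of \cite{FuchsK25}, and note that the only new messages are the $O(\log p)=O(\log C)$-bit block index and the $O(\log C)$-bit final color, in addition to $B_D(\sigma,p)$ and $B_A(S/\sigma,\lceil C/p\rceil)$ for the two subcalls. One slip in your recalled sketch does not affect the argument, since you defer it to the cited lemma: with $W_{v,i}:=\sum_{x\in L_{v,i}}(d_v(x)+1)$, the block defect should be $D_v(i)=\lceil \sigma W_{v,i}/S\rceil-1$, using a ceiling and the scaling $S/\sigma$ (independent of $\deg(v)$), as in the proof of \Cref{lemm:colorSpaceReductionBox}; with your floor, the block instance can lose almost one unit of budget per block (or even get $D_v(i)=-1$) and then need not have slack $\sigma$.
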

\begin{proof}
    Except for the message bound, everything is shown in Lemma 6 of \cite{FuchsK25}. Here the only messages that need to be communicated are the color that refers to the color space of size $p \leq C$ and the final color. Thus, the number of bits is $O(\log C)$ plus the bits needed for the recursive calls.
\end{proof}

The last imported reduction is Theorem~7 in \cite{FuchsK25}, respectively
Theorem~2.4 in \cite{FuchsK25full}. This is the only one of the three
reductions that uses the neighborhood-independence bound.
\begin{restatable}{lemma}{restateforth}
    \label{lemma:ArbToDef}
    Let $\theta\geq1$, and let $G=(V,E)$ be an $n$-node graph of maximum
    degree $\Delta$ and neighborhood independence at most $\theta$, equipped
    with a proper $q$-vertex coloring. Then, for every real $S\geq1$,
    \begin{align}
        T_D(16\theta S,C)
        \leq O(\log^3\Delta)T_A(S,C)+O(\log^*q).
        \label{eq:defToArbdef}
    \end{align}
    The corresponding message bound (in bits) is
    \begin{align*}
        B_D(16\theta S,C)=O\bigl(\max\{\log C,\log q,\log\log\Delta,
        B_A(S,C)\}\bigr).
    \end{align*}
\end{restatable}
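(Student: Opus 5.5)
As with \Cref{lemma:SlackGeneration,lemm:colorSpaceReduction}, I would take the runtime inequality \Cref{eq:defToArbdef} directly from Theorem~7 of \cite{FuchsK25} (Theorem~2.4 of \cite{FuchsK25full}). That reduction is the only place where the neighborhood-independence bound $\theta$ is used. It is purely a round bound, and its proof does not depend on message sizes, so it carries over verbatim. The new content is the per-edge message bound $B_D(16\theta S,C)$. I would obtain it by walking through the reduction of \cite{FuchsK25full} step by step and listing every kind of information sent over an edge, apart from the messages of the recursive calls to the $P_A(S,C)$ algorithm. Those recursive calls contribute $B_A(S,C)$ bits each. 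They are executed sequentially, never in parallel on the same edge, so they do not multiply the message size.

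\textbf{Step 1: classify the remaining messages.} I expect the reduction to run in $O(\log\Delta)$ phases, possibly with $O(\log^2\Delta)$ substeps per phase, which accounts for the $O(\log^3\Delta)$ factor. In each substep, the not-yet-finalized nodes pose an arbdefective list instance with modified lists and defects. These are computed locally from the node's current lists and from counts of already-colored neighbors per color, so they need no communication beyond the following items.
\begin{itemize}[nosep]
  \item The supplied initial colors, and any relabeling used to synchronize the sweep. These need $O(\log q)$ bits, and the $O(\log^* q)$ additive term indicates that at most a Linial-type reduction of $q$ is involved.
  \item Tentative or final colors from $\mathcal C$, needing $O(\log C)$ bits.
  \item Orientation bits and ``finished'' flags, needing $O(1)$ bits.
  \item A phase or level index, or a counter bounded by $O(\log\Delta)$. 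Sending it costs $O(\log\log\Delta)$ bits, which explains the $\log\log\Delta$ term in the bound.
\end{itemize}

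\textbf{Step 2: rule out hidden large messages.} This is the main obstacle. In the bounded neighborhood independence argument, a node must decide whether it has too many same-colored in-neighbors, and it must turn an arbdefective guarantee into a defective one. I need to check that each node only needs \emph{counts} of incoming monochromatic edges, and that it computes these locally from received colors and orientation bits. It must not need, for example, the lists of its neighbors, which would be far too large. I would verify this against the proof in \cite{FuchsK25full}. If some step does aggregate a number, such as a degree or a number of in-neighbors, it has value at most $\Delta$. Where a threshold comparison is all that is needed, I would compress such a value to its $O(\log\log\Delta)$-bit logarithmic scale. Otherwise I would absorb its $O(\log\Delta)$ bits into the $\log q$ term, since the supplied coloring can be assumed to have $q\geq\Delta+1$ colors, or, if needed, note that the value is sent only in the base instance.

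\textbf{Step 3: combine.} Each round either carries messages of a recursive $P_A(S,C)$ call, or a constant number of the items from Step~1. A constant number of such items can be packed into one message. Taking the maximum over all rounds then gives $B_D(16\theta S,C)=O(\max\{\log C,\log q,\log\log\Delta,B_A(S,C)\})$, as claimed.
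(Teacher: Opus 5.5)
Your outer structure matches the paper's: you import the round bound from Theorem~7 of \cite{FuchsK25} and argue only about per-edge messages. The gap is that you never identify what that reduction actually sends. Your Step~1 guesses its structure, and Step~2 is explicitly deferred (``I would verify this''), so the bound is not yet established.

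The missing ingredient is that the reduction first splits $G$ into subgraphs using the defective coloring algorithm of \cite{Kuhn2009}. The paper then accounts for the following messages:
\begin{itemize}
  \item the initial $O(\log q)$-bit exchange of the supplied coloring, together with the messages of that defective-coloring computation;
  \item in the later active phases, the subgraph labels, which have $O(\log\log\Delta)$ bits because the splitting coloring uses only $\poly\log\Delta$ colors;
  \item the final $O(\log C)$-bit colors;
  \item the messages of the recursive calls.
\end{itemize}
These subgraph labels are the actual source of the $\log\log\Delta$ term. You do anticipate a Linial-type reduction, but only as a relabeling with $O(\log q)$-bit colors. Its real role is to produce this partition, and the labels must be exchanged in every active phase. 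The phase or level indices to which you attribute the $\log\log\Delta$ term need not be transmitted at all, because nodes share the global round counter.

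Your fallback for possible $\Theta(\log\Delta)$-bit quantities would also fail. You cannot assume $q\geq\Delta+1$: a proper coloring may use far fewer colors, e.g.\ $q=2$ on a bipartite graph of large degree. Enlarging $q$ to $\Delta+1$ replaces $\log q$ by $\log\Delta$ in the bound. That is weaker than the statement whenever $q$, $C$, and $B_A(S,C)$ are small. So the lemma cannot be obtained by absorbing such values. It holds because the reduction sends no such values, and that is exactly what the concrete message inventory above establishes.
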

\begin{proof}
    The proof is essentially the same as Theorem 7 of \cite{FuchsK25}, however, we need to check the communication complexity. First, we need to split the graph into multiple subgraphs. This is done using the defective coloring algorithm from \cite{Kuhn2009} that requires messages of $O(\log \log \Delta)$ bits. An initial exchange of the supplied vertex coloring costs $O(\log q)$ bits. In later active phases, nodes exchange the $O(\log\log\Delta)$-bit subgraph labels and their final $O(\log C)$-bit colors, in addition to messages from recursive calls. This gives the stated bound.
\end{proof}

We now apply our Two-Sweep Algorithm as the list arbdefective base case.
\begin{corollary}
    \label{coll:sweepAdaption}
    There is a deterministic algorithm to solve list arbdefective coloring
    instances $P_A(1,C)$ on any $n$-node graph equipped with a proper $q$-vertex
    coloring in
    \begin{align*}
        T_A(1,C)=O\left(\sqrt C\log\Delta
        \bigl(\log^3C+\log C\log^*q\bigr)\right)
    \end{align*}
    rounds, using messages of
    $B_A(1,C)=O(\log C+\log q+\log n)$ bits. Thus this is a
    \CONGEST algorithm whenever $q,C\leq\poly(n)$.
\end{corollary}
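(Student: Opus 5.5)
The plan is to chain the two generic reductions from the preceding lemmas with \Cref{thm:OLDC2} as the base case. We start from \Cref{eq:slackGeneration1}, which reduces $T_A(1,C)$ to $O(\log\Delta)$ calls of $T_A(2,C)$ plus $O(\log\Delta\log^*\Delta+\log^*q)$ overhead. We then apply \Cref{eq:slackGeneration2} with slack $S:=3\sqrt C$. This gives $T_A(2,C)\leq O(C)\,T_A(3\sqrt C,C)+O(\log^*q)$. However, that route would cost a factor $C$ rather than $\sqrt C$, so we must be careful. The intended route is instead to solve $P_A(2,C)$ \emph{directly} through an OLDC instance, with the orientation chosen so that only a $1/O(\sqrt{C})$ fraction of the slack is lost. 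I would therefore reconsider and use the following chain.

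Given a $P_A(2,C)$ instance, i.e.\ $\sum_x(d_v(x)+1)>2\deg(v)$, I would compute an edge orientation in which every node has outdegree at most roughly $\deg(v)/(1.5\sqrt C)$ inside each of $O(C)$ parts. Concretely, I would split into $O(S^2)$ subgraphs as in the second slack-generation step. This loses the factor $S^2=\Theta(C)$ only if the parts are processed sequentially. To avoid that, I would combine the arbdefective view with the fact that the output may orient edges arbitrarily. Orient each edge from lower to higher initial (Linial) color class within a $\Theta(\sqrt C)$-class defective partition, giving out-degree about $\deg(v)/\sqrt C$ per class index. Then run \Cref{thm:OLDC2} on the classes sequentially: $O(\sqrt C)$ rounds of classes, each costing $O(\log^3C+\log C\log^*q)$, while earlier-colored classes only reduce defects. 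Here the lists are updated: when a class is processed, each node subtracts from $d_v(x)$ the already colored neighbors of color $x$ and removes colors with negative defect. The remaining slack must still exceed $3\sqrt C\cdot\beta_v$, where $\beta_v$ counts edges to later or same classes. Edges to earlier classes are oriented as in-edges and thus are not charged, matching arbdefective semantics.

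Combining, $T_A(2,C)=O(\sqrt C(\log^3C+\log C\log^*q))$, and plugging this into \Cref{eq:slackGeneration1} yields the claimed $O(\sqrt C\log\Delta(\log^3C+\log C\log^*q))$; the additive $\log\Delta\log^*\Delta$ and $\log^*q$ terms are dominated or absorbed (using $\log^*\Delta\le\log^*q$ when $q\geq\Delta$). Messages: \Cref{thm:OLDC2} uses $O(\log q+\log C)$ bits, and the reductions add $O(\log q+\log\Delta+\log C)$ bits by \Cref{lemma:SlackGeneration}; since $\Delta<n$, we get $O(\log C+\log q+\log n)$, which is CONGEST-compatible when $q,C\le\poly(n)$.

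The main obstacle is the slack accounting in the middle step: guaranteeing that after accounting for already colored classes the per-class OLDC instance still satisfies $\sum(d_v(x)+1)\geq 3\sqrt C\beta_v$, while the partition into $\Theta(\sqrt C)$ classes is computed fast (via \Cref{lemm:defColorBlackBox}) and within-class edges are handled (oriented arbitrarily but counted as out-edges). If the constants do not close with slack $2$, I would first boost slack with \Cref{eq:slackGeneration2} using a constant $S$, costing only an $O(1)$ factor.
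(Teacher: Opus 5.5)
There is a genuine gap in your middle step, and that step carries the whole content of the corollary. Orienting inter-class edges from lower to higher class charges each such edge to the endpoint that is colored \emph{first}. Its higher-class neighbors are still uncolored, so they must be counted in its OLDC outdegree, and a node in an early class can have $\beta_v$ close to $\deg(v)$. Slack $\sum_x(d_v(x)+1)>2\deg(v)$ then cannot give the requirement $\sum_x(d_v(x)+1)\geq 3\sqrt C\,\beta_v$ of \Cref{thm:OLDC2}. Your accounting is also inconsistent: subtracting already colored earlier-class neighbors from $d_v(x)$ charges those edges to $v$, yet you also declare them uncharged in-edges. Moreover, a defective partition only bounds the number of \emph{same}-class neighbors, so there is no ``out-degree about $\deg(v)/\sqrt C$ per class index''.

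Suppose you fix the direction and charge every inter-class edge to the later-processed endpoint, so that after the subtraction the budget is still $>\deg(v)$. You then need every class to have within-class outdegree at most $\deg(v)/(3\sqrt C)$ while using only $O(\sqrt C)$ classes. \Cref{lemm:defColorBlackBox} cannot give this. It yields relative defect $\alpha$ with $O(1/\alpha^2)$ classes, so $\Theta(\sqrt C)$ classes only guarantee within-class degree $O(\deg(v)/C^{1/4})$. Relative defect $1/(3\sqrt C)$ needs $\Theta(C)$ classes, i.e., $\Theta(C\cdot T)$ rounds, which is exactly the loss of the route via \Cref{eq:slackGeneration2} that you wanted to avoid. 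Boosting the slack to $S$ first does not help: the $O(S^2)$ factor times the $\Theta(C/S^2)$ classes then needed is again $\Theta(C)$.

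The missing ingredient is a reduction from list arbdefective coloring to OLDC whose overhead is linear rather than quadratic in the slack factor $\kappa$. This is Theorem~4 of \cite{FK23}, which the paper applies as a black box with $\nu=0$ and $\kappa=3\sqrt C$. That gives $O(\kappa\log\Delta\cdot T+\log^*q)$ rounds and $O(B+\log n)$-bit messages, into which the paper substitutes $T$ and $B$ from \Cref{thm:OLDC2}.

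A smaller issue: the additive $O(\log\Delta\log^*\Delta)$ term from \Cref{eq:slackGeneration1} is not absorbed when $q<\Delta$, and nothing guarantees $q\geq\Delta$. The paper avoids this term by running Linial's reduction from the supplied $q$-coloring, or keeping that coloring if it is already small.
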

\begin{proof}
    We use the general list-arbdefective reduction of Theorem~4 in
    \cite{FK23}. Its preprocessing obtains a proper $O(\Delta^2)$-coloring
    for the arbdefective decomposition. Starting from the supplied proper
    $q$-coloring, Linial's color reduction~\cite{Linial1987} takes
    $O(\log^*q)$ rounds; if the supplied coloring already has sufficiently
    few colors, we retain it. The original proper $q$-coloring remains
    available for the OLDC calls on all subgraphs. Thus the preprocessing
    term is $O(\log^*q)$ instead of $O(\log^*n)$.
    For $\nu=0$, the reduction transforms an oriented list defective coloring algorithm
    with slack factor $\kappa$, runtime $T$, and message size $B$ into an
    algorithm for $P_A(1,C)$ with runtime
    \begin{align*}
        O\bigl(\kappa\log\Delta\cdot T+\log^*q\bigr)
    \end{align*}
    The reduction applies to arbitrary integer defect functions satisfying the strict
    slack condition, not only to the zero-defect special case.

    We apply it with $\kappa=3\sqrt C$ and use \Cref{thm:OLDC2}, for which
    \begin{align*}
        T&=O\bigl(\log^3C+\log C\log^*q\bigr),
        &B&=O(\log q+\log C).
    \end{align*}
    Substitution gives the stated round bound, absorbing the additive
    $O(\log^*q)$ term. For communication, Theorem~4 of \cite{FK23}
    guarantees $O(B+\log n)$-bit messages. The preprocessing from the
    supplied coloring uses $O(\log q+\log\Delta)$-bit messages, which fit
    this bound because $\Delta<n$. Hence the resulting message size is
    $O(\log C+\log q+\log n)$, as claimed.
\end{proof}

We are now ready to prove the recursive bounds.
\begin{lemma}
  \label{lemm:mainNeighborhood}
  In $n$-node graphs of maximum degree $\Delta$ and neighborhood independence $\theta$, we have the following two bounds on $T_A(1,C)$ in the \CONGEST model, assuming that $C$ is at most $\poly \Delta$. 
  \begin{align}
    T_A(1,C) = &  (\theta\log\Delta\big)^{O\left(1+\frac{\log\log C}{\max\{1, \log\log\log \Delta - \log \log \theta\} } \right)}+ O(\log^* n),\label{eq:mainBound1}\\
    T_A(1,C) = & O\big(\theta^2 \cdot C^{1/4} \cdot \log^8 \Delta + \log^* n\big).\label{eq:mainBound2}
  \end{align}
\end{lemma}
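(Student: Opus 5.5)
The plan is to follow the recursive framework of \cite{FuchsK25}, replacing its \LOCAL base case with \Cref{coll:sweepAdaption}. First, use \Cref{lemma:SlackGeneration} to reduce $P_A(1,C)$ to $P_A(S,C)$ for a large slack $S$, at a multiplicative cost of $O(\log\Delta)\cdot O(S^2)$. Then, at each recursion level, apply \Cref{lemm:colorSpaceReduction} with parameters $\sigma$ and $p$. This splits the task into a list defective instance $P_D(\sigma,p)$ on a color space of size $p$ and an arbdefective instance $P_A(S/\sigma,\lceil C/p\rceil)$ on a color space reduced by a factor of $p$. The defective instance is converted back to an arbdefective one by \Cref{lemma:ArbToDef}, with $\sigma=16\theta S'$ and a cost factor of $O(\log^3\Delta)$. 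When the color space has become small, or the recursion is otherwise stopped, we fall back to the base case \Cref{coll:sweepAdaption}, which solves $P_A(1,C')$ in $O(\sqrt{C'}\log\Delta(\log^3C'+\log C'\log^*q))$ rounds. The initial proper $q$-coloring comes from Linial's algorithm with $q=O(\Delta^2)$, computed in $O(\log^*n)$ rounds. Hence every $\log^*q$ term becomes $O(\log^*\Delta)$ and is absorbed, while the message sizes remain $O(\log n)$ because $C\le\poly\Delta$ and every imported lemma has message size $O(\log q+\log C+\log\Delta+\text{recursive})$.

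For \eqref{eq:mainBound2}, I would use a single level of color-space reduction. Apply \Cref{lemma:SlackGeneration} to reduce to slack $S=32\theta$ at a cost of $O(\log\Delta)\cdot O(\theta^2)$. Then apply \Cref{lemm:colorSpaceReduction} with $\sigma=16\theta$ and $p=\sqrt C$. This leaves an arbdefective instance $P_A(2,\sqrt C)$, which goes back through \Cref{lemma:SlackGeneration}, and a defective instance $P_D(16\theta,\sqrt C)$. By \Cref{lemma:ArbToDef}, the defective instance costs $O(\log^3\Delta)\,T_A(1,\sqrt C)$. Both subproblems are solved with the base case on color space $\sqrt C$, which costs $O(C^{1/4}\log\Delta\log^3\Delta)$ since $\log C=O(\log\Delta)$. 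Multiplying the factors gives $O(\theta^2)\cdot O(\log\Delta)\cdot O(\log^3\Delta)\cdot O(C^{1/4}\log^4\Delta)=O(\theta^2C^{1/4}\log^8\Delta)$. Some care is needed with $\log^*$ terms, but they are additive and become $O(\log^*n)$.

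For \eqref{eq:mainBound1}, I would unroll the recursion over many levels, using $p=C^{1/k}$ or a suitable constant factor per level. This gives a recurrence of the form $T_A(1,C)\le (\theta\log\Delta)^{O(1)}\cdot T_A(1,C^{1-\gamma})$ together with defective subcalls. As in \cite{FuchsK25}, the recursion depth is $O(\log\log C/\max\{1,\log\log\log\Delta-\log\log\theta\})$, with the base case reached once $C'$ is $\poly(\theta\log\Delta)$ or smaller, where \Cref{coll:sweepAdaption} costs $(\theta\log\Delta)^{O(1)}$. The exponent then multiplies out to the claimed bound. The main obstacle is choosing the per-level parameters $p,\sigma,S$ so that the slack consumed by \Cref{lemma:ArbToDef} (the factor $16\theta$) and the $O(S^2)$ cost from slack generation are balanced against the color-space shrinkage. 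Here I would copy the parameter choice from \cite{FuchsK25} verbatim, check that only the base-case cost changes (now $\sqrt{C'}\cdot\poly\log$ instead of the \LOCAL bound of \cite{FK23}), and verify that this change costs only $\poly\log\Delta$ factors per level, which the $O(\cdot)$ in the exponent absorbs.
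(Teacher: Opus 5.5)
\textbf{Bound \eqref{eq:mainBound2}} is fine and matches the paper's argument. Your parameters ($S=32\theta$, $\sigma=16\theta$, \Cref{lemma:ArbToDef} with parameter $1$) differ harmlessly from the paper's ($S=64\theta$, $\sigma=32\theta$, parameter $2$). The factor count $O(\log\Delta)\cdot O(\theta^2)\cdot O(\log^3\Delta)\cdot O(C^{1/4}\log^4\Delta)$ is the same.

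\textbf{The gap is in \eqref{eq:mainBound1}.} You recurse on $T_A(1,\cdot)$, so every level pays the $O(\log\Delta)$ factor of \eqref{eq:slackGeneration1}. You then claim that $\poly\log\Delta$ factors per level are absorbed by the $O(\cdot)$ in the exponent. They are not. Whatever split $C\mapsto(C^{\eps},C^{1-\eps})$ you use, one child keeps at least half of $\log C$. So some root-to-leaf path has at least $\log_2(\log C/\log C_0)$ levels, where $C_0$ is the base-case threshold. For constant $\theta\geq2$ and $C=\Delta$, this is $\Theta(\log\log\Delta)$ levels. A $\log\Delta$ factor per level then leads to a bound of $(\log\Delta)^{\Omega(\log\log\Delta)}$, whereas the claimed bound is $(\log\Delta)^{O(\log\log\Delta/\log\log\log\Delta)}$. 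The same observation shows that the recursion depth $O(\log\log C/\max\{1,\log\log\log\Delta-\log\log\theta\})$ you posit is not attainable.

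\textbf{The missing idea} is to pay \eqref{eq:slackGeneration1} only once, at the top, and recurse on slack-$2$ instances. With $S=64\theta$ and $\sigma=32\theta$, the color-space branch lands directly in $P_A(2,C^{1-\eps})$. \Cref{lemma:ArbToDef} with parameter $2$ sends the defective branch back to $P_A(2,C^{\eps})$. For $F(C)=T_A(2,C)$ this gives $F(C)\le B\,F(C^{\eps})+A\,F(C^{1-\eps})$ with $A=O(\theta^2)$ and $B=O(\theta^2\log^3\Delta)$.

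The asymmetry between $A$ and $B$ is what creates the denominator. The ansatz $F(C)\le b\,(\log C/\log C_0)^{\gamma}$ propagates if $B\eps^{\gamma}+A(1-\eps)^{\gamma}\le1$, i.e.\ if $\gamma\gtrsim\max\{\ln A/\eps,\ \ln B/\ln(1/\eps)\}$. Since $\ln A=O(\log\theta)$ is much smaller than $\ln B$, a small $\eps$ yields $\gamma=O(\log(\theta\log\Delta)/\max\{1,\log\log\log\Delta-\log\log\theta\})$. The cheap branch is iterated $\Theta(\log\log C/\eps)$ times. This is affordable only because each such step costs $O(\theta^2)$ rather than $\poly\log\Delta$. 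If $A$ also carried a $\log\Delta$ factor, the best choice would be $\eps=\Theta(1)$, and you would only get $(\theta\log\Delta)^{O(\log\log C)}$.

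The new base case matters only at the leaves. There one checks that \Cref{coll:sweepAdaption} gives $O(\log^4\Delta)$ for $C\le\max\{2,\log\Delta\}$, matching the cited recurrence analysis.

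\textbf{Message sizes.} Since the depth is not constant, you cannot simply nest the imported $O(\max\{\cdots\})$ message bounds, and certainly not add them level by level; constants would compound. You need the paper's observation that subcalls run in separate phases and the nonrecursive messages are labels of $O(\max\{\log C,\log q,\log\theta,\log\log\Delta\})$ bits. Then the message size is a maximum that does not depend on the depth.
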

\begin{proof}
    Let us first show how to solve $P_A(2, C)$; the first bound of \Cref{lemma:SlackGeneration} then completes the proof. We initially compute a proper $q=O(\Delta^2)$-coloring using \cite{Linial1987}. Starting from $P_A(2,C)$, we generate slack $S = 64 \cdot \theta$ using the second bound of \Cref{lemma:SlackGeneration}. We then apply the color-space reduction of \Cref{lemm:colorSpaceReduction} with parameters $\sigma = S/2=32\theta$ and $p = \lceil C^\varepsilon\rceil$ for some appropriately chosen $1 > \varepsilon > 0$. The second recursive color space then has size at most $\lceil C/p\rceil$. We suppress these ceilings in the recurrence displays below; they change the logarithms of the subproblem sizes by only additive constants and are absorbed by the base-case threshold in the cited recurrence analysis. The resulting list defective instance has slack $32\theta=16\theta\cdot2$, so \Cref{lemma:ArbToDef} applies with its parameter $S=2$.
  \begin{align} 
      T_A(2, C) &\leq O(\theta^2) \cdot T_A \left(64 \cdot \theta, C \right) + O(\log^* q) \notag \\
         &\leq O(\theta^2) \cdot \left( T_D\big(32 \theta, C^\eps\big) + T_A\big(2, C^{1 - \eps}\big)\right) + O(\log^* q)  \notag \\
         &\leq O(\theta^2) \cdot \left( \log^3 \Delta \cdot T_A\big(2, C^{\eps}\big) + T_A\big(2, C^{1 - \eps}\big)\right)  \label{eq:splitting}
  \end{align}
  In the last step, the additive $O(\theta^2\log^*q)$ term from
  \Cref{lemma:ArbToDef} is absorbed by the first displayed term, using
  $q=O(\Delta^2)$, $\theta\geq1$, and $T_A(2,C^\eps)\geq1$.

  We now argue that all these steps have low message complexities. Since the
  initial coloring has $q=O(\Delta^2)$ colors, its contribution satisfies
  $\log q=O(\log\Delta)=O(\log n)$.
  \begin{align} 
      B_A(2, C) &\leq O(\max\{\log n, \log C, \log q, \log \theta, B_A(64 \theta, C) \}) \\
      &\leq O(\max\{\log n, \log C, \log q, \log \theta, B_D(32\theta, C^\eps), B_A(2, C^{1-\eps}) \}) \\
      &\leq O(\max\{\log n, \log C, \log q, \log \theta, \log \log \Delta, B_A(2, C^\eps), B_A(2, C^{1-\eps}) \}) \label{eq:spliitingMessages}
  \end{align}

  With $F(C):=T_A(2,C)$ and constants absorbed into $A$ and $B$,
  \Cref{eq:splitting} has the form
  \[
      F(C)\leq B F(C^\eps)+A F(C^{1-\eps}),
      \qquad A=O(\theta^2),\quad B=O(\theta^2\log^3\Delta).
  \]
  The recurrence analysis of Lemma~4.3 and Corollary~4.5 in
  \cite{FuchsK25full} chooses the split to balance these two contributions
  and stops at $C\leq\max\{2,\log\Delta\}$, increasing the base-case
  threshold by a constant factor if needed for integer rounding.
  In order to obtain \Cref{eq:mainBound1}, we set $\eps$ appropriately to get the following bound,
  \begin{equation}\label{eq:mainBound1Slack}
    T_A(2,C) =  (\theta\log\Delta\big)^{O\left(1 + \frac{\log\log C}{\max\{1, \log\log\log \Delta - \log \log \theta\} } \right)}.
  \end{equation}
  We apply the recurrence analysis of Corollary~4.5 of
  \cite{FuchsK25full}, stopping at $C\leq\max\{2,\log\Delta\}$.
  For $q=O(\Delta^2)$, \Cref{coll:sweepAdaption} gives
  \[
    T_A(2,C)\leq T_A(1,C)
    =O\!\left(\sqrt{\log\Delta}\,\log\Delta\cdot
      \bigl((\log\log\Delta)^3+\log\log\Delta\cdot\log^*q\bigr)\right)
    =O(\log^4\Delta),
  \]
  matching the base-case bound used in its underlying Lemma~4.3.
  We have to argue that the construction also works in \CONGEST. Since $q,C\leq\poly\Delta$,
  this base case works in \CONGEST. Before reaching the base case, the
  construction gradually applies the steps in \Cref{eq:splitting} to reduce
  the color space. By
  \Cref{lemma:SlackGeneration,lemm:colorSpaceReduction,lemma:ArbToDef}, the
  nonrecursive messages contain labels of
  $O(\max\{\log C,\log q,\log\theta,\log\log\Delta\})$ bits, and the
  base-case messages have size $O(\log C+\log q+\log n)$ by
  \Cref{coll:sweepAdaption}. The reductions execute their subcalls in
  separate phases, so the message size is bounded by the maximum over
  these messages, without multiplying by the recursion depth. Therefore,
  \Cref{eq:spliitingMessages} gives
  $O(\max\{\log n,\log C,\log q,\log\theta,\log\log\Delta\})$ bits
  throughout the recursion. Since $\theta\leq\Delta$ and
  $C,q\leq\poly\Delta$, this is $O(\log n)$. Computing the initial
  coloring also uses $O(\log n)$-bit messages. Hence, the complete
  procedure works in \CONGEST.

  We now show the second statement of the lemma.  We use
  \Cref{eq:splitting} with $\varepsilon=1/2$ and solve the remaining instance
  with \Cref{coll:sweepAdaption}. As above,
  $P_A(2,\sqrt C)\subseteq P_A(1,\sqrt C)$, so applying the corollary to color
  space $\sqrt C$ yields
  \begin{align*}
    T_A(2, C)
    &\leq O(\theta^2\log^3\Delta)\cdot T_A(2,\sqrt C)\\
    &\leq O(\theta^2\log^3\Delta)\cdot T_A(1,\sqrt C)\\
    &\leq O(\theta^2\log^3\Delta)\cdot
      O\left(C^{1/4}\log\Delta\cdot
      \big(\log^3(C)+\log(C)\cdot\log^*q\big)\right)\\
    &\leq O\left(\theta^2C^{1/4}\log^7\Delta\right),
  \end{align*}
  where we used $C=\poly\Delta$.  The first equation of
  \Cref{lemma:SlackGeneration} contributes one further $O(\log\Delta)$ factor.
  It remains to argue about the message complexity again, but since the same subprocedures are used as above, we can upper bound the message size again by $O(\log n)$ bits.
\end{proof}

To obtain \Cref{thm:boundedNeighborhoodMain}, set all defects to zero and
$C=\Delta+1$. The fraction in the exponent of \Cref{eq:mainBound1} is then
at least one, so its additive constant can be absorbed into the $O(\cdot)$.

\section{Technical details of the Two-Sweep lower bound}
\label{sec:TechnicalDetailsLowerBound-Appendix}

We complete \Cref{thm:2sweeplower} by realizing the numerical obstruction
from \Cref{sec:lowerBound} in finite undirected trees. Recall from
\Cref{sec:SweepAndBuckets} that $\mathcal C=\set{1,\dots,C}$ is the
color space and $\calS\subseteq 2^{\mathcal C}\setminus\set{\varnothing}$
is the fixed nonempty family of candidate palettes available to every
node. A palette is a nonempty set of colors: in Phase~I, node $v$
announces one palette $S_v\in\calS$, and in Phase~II it chooses its final
color $x_v\in S_v$. Membership in $\calS$ only makes a palette available;
we must still construct neighbors that actually announce it.

\subsection{Model and local counts}
\label{sec:lowerBoundModel}
We use the uniform-relative-defect setting of \Cref{sec:lowerBound} and
fix a global deterministic tie-breaking rule in each phase, based only on
local counts. Thus, the same local counts give the same decisions. All
edges below are undirected, so both endpoints count each other as
neighbors. The words \emph{earlier} and \emph{later} always refer to
Phase~I; Phase~II processes the vertices in reverse order. Assigning
distinct initial colors realizes any prescribed finite vertex order.

\paragraph*{Integer counts and their normalized counterparts.}
The graph constructions use numbers of neighbors rather than fractions.
For a node $v$, introduce the nonnegative integer counts
\begin{equation}
    \label{eq:lowerBound-local-counts}
    n_A:=\bigl|\set{u\in N_<(v):S_u=A}\bigr|\quad(A\in\calS),
    \qquad b:=|N_>(v)|,
    \qquad r_x:=r_v(x)\quad(x\in\mathcal C).
\end{equation}
Here $N_<(v)$ and $N_>(v)$ are the earlier and later neighbor sets from
\Cref{sec:genericTwoSweep}, and $r_v(x)$ counts later neighbors that
finish in color $x$. We abbreviate the two vectors of counts by
\[
    \vec{n}:=(n_A)_{A\in\calS},
    \qquad \vec{r}:=(r_x)_{x\in\mathcal C}.
\]
In particular,
\[
    b=\sum_{x\in\mathcal C}r_x,
    \qquad \deg(v)=\sum_{A\in\calS}n_A+b,
    \qquad k_v(x)=\sum_{\substack{A\in\calS\\x\in A}}n_A.
\]
At positive degree, these are exactly the unnormalized versions of the
quantities in \Cref{sec:lowerBound}:
\[
    \alpha_A=\frac{n_A}{\deg(v)},
    \qquad 1-\alpha=\frac{b}{\deg(v)},
    \qquad \delta_x=\frac{r_x}{\deg(v)}.
\]
The same notation will be used when the node under consideration is a
root $u$ instead of $v$.

By \Cref{eq:qualityOfSet}, choosing a palette at positive degree is
equivalent to minimizing
\[
    \frac{b+\sum_{A\in\calS}|S\cap A|n_A}{|S|}
    =\deg(v)\,Q_{\vec\alpha}(S)
    \qquad(S\in\calS).
\]
Formally, fix two deterministic selection functions $T_{\mathrm I}$
and $T_{\mathrm{II}}$, used by every vertex. For all nonnegative integer
counts, they satisfy
\[
    T_{\mathrm I}(\vec{n},b)\in
    \operatorname*{arg\,min}_{S\in\calS}
    \frac{b+\sum_{A\in\calS}|S\cap A|n_A}{|S|}
\]
and
\[
    T_{\mathrm{II}}(\vec{n},\vec{r})\in
    \operatorname*{arg\,min}_{x\in T_{\mathrm I}(\vec{n},\sum_{y\in\mathcal C}r_y)}
    \left(\sum_{\substack{A\in\calS\\x\in A}}n_A+r_x\right).
\]
These functions include the respective tie-breaking choices, depend
only on the displayed counts, and are fixed throughout the proof.
Thus $(\vec{n},b)$ determines the Phase~I choice, and
$(\vec{n},\vec{r})$ determines the Phase~II choice, since $\vec{r}$
also determines $b$ and hence the selected palette.
Preserving these integer counts preserves both
decisions, including ties. This exact preservation is the reason the
neighbor replacements below work.

\paragraph*{Attaching a rooted tree.}
To make one neighbor behave as required, we construct a finite tree with
a distinguished vertex $u$, called its \emph{root}, and a specified
order of its vertices. We connect it to an external vertex $w$ by the
single edge $\set{u,w}$. Only $u$ is adjacent to $w$; the other tree
vertices help determine the behavior of $u$. We call this tree,
together with its order and its connection to $w$, a \emph{witness}
for the behavior stated in each construction. Every vertex in it runs
the same algorithm with the same candidate family $\calS$.

The external vertex $w$ is not part of the witness. It may have arbitrary
other neighbors, provided that no additional edge enters the witness.
An \emph{earlier witness} places all its vertices before $w$; a
\emph{later witness} places all its vertices after $w$. These terms
refer to the order relative to $w$, not relative to the root $u$:
an earlier witness can contain neighbors that are later than $u$.
Every conditional guarantee below holds in any attachment respecting
this order. When several witnesses are attached to one vertex, their
vertex sets (including their roots) are disjoint, and their internal
orders are preserved. We also refer to these vertex sets as their
\emph{interiors}, to distinguish them from the shared external vertex.

\subsection{Realizable palettes}
\label{sec:realizablePalettes}
We need to know which palettes can be supplied by an earlier neighbor
of a vertex whose other neighbors have not yet been constructed.
Define $\mathcal R$ to be the family of palettes with this property.
More precisely, a palette $A\in\calS$ belongs to $\mathcal R$ if there
exists an earlier witness whose root $u$ announces $S_u=A$ in Phase~I
in every attachment at its external vertex $w$.
\Cref{fig:onePortPredecessorWitness} illustrates the definition.

Thus $\mathcal R\subseteq\calS$ is a family of \emph{palettes}, not a
set of individual colors. For each $A\in\mathcal R$ we may choose a
different witness. Membership certifies that we can attach a copy of
that tree and obtain one earlier neighbor announcing $A$. This says
nothing yet about its final color. The announcement is independent of
$w$'s choices because every witness vertex acts before $w$ in Phase~I;
the edge $\set{u,w}$ still contributes one later neighbor to the root's
degree and to $b$. We do not assume that $\mathcal R=\calS$.

\begin{figure}[htbp]
    \centering
    \begin{tikzpicture}[
    vertex/.style={circle,draw,minimum size=6mm,inner sep=0pt},
    root/.style={vertex,fill=black!10,thick},
    internal/.style={circle,draw,fill=white,minimum size=3mm,inner sep=0pt}
]
    \node[internal] (a1) at (0,0.7) {};
    \node[internal] (a2) at (0,-0.5) {};
    \node[internal] (a) at (1.3,0.1) {};
    \node[root] (u) at (3.1,0.1) {$u$};
    \node[internal] (b1) at (4.7,0.7) {};
    \node[internal] (b2) at (4.7,-0.5) {};
    \node[vertex] (w) at (7,0.1) {$w$};

    \draw (a1) -- (a) -- (a2);
    \draw (a) -- (u);
    \draw (b1) -- (u) -- (b2);
    \draw[thick] (u) -- (w)
        node[pos=0.8,above] {$\set{u,w}$};
    \draw[densely dotted] (w) -- (8.1,0.1);
    \node[anchor=west] at (8.1,0.1) {$\cdots$};

    \draw[dashed,rounded corners]
        (-0.4,-1.25) rectangle (5.4,1.15);
    \node[anchor=south] at (2.5,1.15) {earlier witness for $A$};
    \node[align=center] at (3.1,-0.95) {root $u$: $S_u=A$};
    \node[align=center] at (7,-0.65) {external vertex};

    \draw[-{Latex[length=2mm]}] (-0.2,-1.7) -- (7.2,-1.7)
        node[midway,below] {initial coloring order};
\end{tikzpicture}
    \caption{An earlier witness certifying $A\in\mathcal R$. The shaded
    vertex $u$ is the root, and the dashed boundary contains all witness
    vertices, including $u$. They all precede the external vertex $w$
    in Phase~I, although some may follow $u$. The edge $\set{u,w}$ is
    the witness's only connection to the surrounding graph; $w$ may
    have other neighbors outside it. The root announces $S_u=A$
    independently of $w$'s choices. The arrow indicates sweep order,
    not an edge orientation.}
    \label{fig:onePortPredecessorWitness}
\end{figure}

\textit{Example: realizability may exclude candidate palettes.} Consider two colors and the candidate family
    \[
        \mathcal C=\set{1,2},
        \qquad
        \calS=\bigl\{\set{1},\set{1,2}\bigr\}.
    \]
    Then
    \[
        \mathcal R=\bigl\{\set{1,2}\bigr\}\subsetneq\calS.
    \]
    Indeed, any root $u$ of an earlier witness has the external vertex
    $w$ as a later neighbor, and hence its normalized earlier-neighbor
    mass satisfies $\alpha_{\set{1}}+\alpha_{\set{1,2}}<1$. The two
    Phase~I qualities are
    \[
        Q_{\vec\alpha}(\set{1})=1,
        \qquad
        Q_{\vec\alpha}(\set{1,2})
        =\frac{1+\alpha_{\set{1,2}}}{2}<1.
    \]
    Thus every such root strictly chooses $\set{1,2}$, independently
    of the earlier witnesses and of tie-breaking. Consequently,
    $\set{1}\notin\mathcal R$. On the other hand, a root with no
    earlier neighbors and the single later neighbor $w$ announces
    $\set{1,2}$, so $\set{1,2}\in\mathcal R$.

    The condition that the witness has a later external vertex matters: a vertex with no later neighbors could choose $\set{1}$ when we force $\alpha_{\set{1,2}} = 1$.

\begin{lemma}[Realizable palettes]
    \label{lem:lowerBound-realizable-palettes}
    The family $\mathcal R$ is nonempty. For any nonnegative integer
    counts $n_A$, with $n_A=0$ for $A\notin\mathcal R$, and any integer
    $b\geq1$, the palette selected from $(\vec{n},b)$ also belongs
    to $\mathcal R$.
\end{lemma}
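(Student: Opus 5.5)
The proof has two parts, and both use explicit witness constructions.

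\emph{Nonemptiness.} Take the simplest earlier witness: a single root $u$ with no other tree vertices, attached only to the external vertex $w$. Since every witness vertex precedes $w$, the root has $\vec n=\vec 0$ and $b=1$. Its Phase~I choice $T_{\mathrm I}(\vec 0,1)$ is therefore a fixed palette $A_0\in\calS$. This choice is determined by the global deterministic rule, and $u$ makes it before $w$ acts and independently of anything $w$ does. Hence $A_0\in\mathcal R$.

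\emph{Closure.} Fix counts $\vec n$ supported on $\mathcal R$ and an integer $b\geq 1$, and let $A^\ast:=T_{\mathrm I}(\vec n,b)$. We build an earlier witness whose root $u$ faces exactly the counts $(\vec n,b)$.
\begin{itemize}
\item For each $A\in\mathcal R$ with $n_A>0$, attach $n_A$ disjoint copies of a fixed earlier witness for $A$, using $u$ as their external vertex. These copies are placed before $u$ in the order.
\item Additionally attach $b-1$ fresh leaves to $u$, placed after $u$ but still before $w$. The edge $\{u,w\}$ supplies the remaining later neighbor.
\end{itemize}

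Within each attached copy, the internal order is preserved. Because the copies are pairwise disjoint, the whole structure is still a tree. By the defining property of each witness (``in every attachment at its external vertex''), the root of every copy announces its palette $A$, regardless of how $u$ behaves and of what else is attached to $u$. So when $u$ becomes active in Phase~I, it sees exactly $n_A$ earlier neighbors announcing $A$ for each $A$. Its later neighbors are the $b-1$ leaves together with $w$, giving $b$ in total. Therefore $u$ selects $T_{\mathrm I}(\vec n,b)=A^\ast$, and this holds in every attachment to $w$: $u$ acts before $w$, and nothing attached to $w$ enters the witness.

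The leaves have only $u$ as a neighbor. Their own choices are irrelevant to $u$'s Phase~I decision, since later neighbors contribute only through the count $b$. The whole tree, with $u$ as root, is thus an earlier witness for $A^\ast$, so $A^\ast\in\mathcal R$.

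The one point needing care is the requirement $b\geq 1$. The edge to $w$ always makes $w$ a later neighbor of the root, so a witness root can never have $b=0$. That is why the hypothesis is stated with $b\geq 1$, consistent with the example preceding the lemma. The other thing to check is that the finite ordering is globally consistent: all copies come first in their internal orders, then $u$, then the leaves, then $w$. Any such order can be realized by distinct initial colors.
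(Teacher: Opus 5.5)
Your proposal is correct and follows essentially the same approach as the paper: a lone root with $w$ as its only (later) neighbor gives nonemptiness, and for closure you attach $n_A$ earlier witnesses per $A\in\mathcal R$ before $u$, add $b-1$ later leaves, and put $w$ last, so that $u$ sees exactly $(\vec n,b)$ independently of $w$. Your extra checks (the tree property, why $b\geq 1$ is needed, and realizing the order by distinct initial colors) are consistent with the paper's brief argument.
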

\begin{proof}
    A single root with its external vertex as its only, later neighbor
    witnesses some palette, so $\mathcal R\neq\varnothing$.
    For the second claim, take $n_A$ earlier witnesses for each
    $A\in\mathcal R$ and attach them to a new root $u$. Place all their
    vertices before $u$. Add $b-1$ leaves after $u$, and put the external
    vertex $w$ after all these vertices. The edge $\set{u,w}$ supplies
    the last later neighbor. The counts at $u$ are exactly $(\vec{n},b)$,
    independently of $w$'s choices. This is an earlier witness for the
    selected palette; see \Cref{fig:predecessorWitnessConstruction}.
\end{proof}

\begin{figure}[htbp]
    \centering
    \begin{tikzpicture}[
    vertex/.style={circle,draw,minimum size=6mm,inner sep=0pt},
    leaf/.style={vertex,minimum size=8mm},
    root/.style={vertex,fill=black!10,thick}
]
    \draw[dashed,rounded corners]
        (-5.6,-1.9) rectangle (4.2,2.05);
    \node[anchor=south] at (-0.7,2.05) {new earlier witness};

    \draw[rounded corners,fill=black!3]
        (-5.35,0.45) rectangle (-1.4,1.45);
    \draw[rounded corners,fill=black!3]
        (-5.35,-1.45) rectangle (-1.4,-0.45);
    \node[align=center] at (-4.15,0.95) {witness\\for $A$};
    \node[align=center] at (-4.15,-0.95) {witness\\for $A$};
    \node at (-3.8,0) {$\vdots$};
    \node[vertex,minimum size=8mm] (u1) at (-2.2,0.95) {$u_1$};
    \node[vertex,minimum size=8mm] (uL) at (-2.2,-0.95) {$u_{n_A}$};
    \node[align=center,font=\small] at (-3.45,1.75)
        {$n_A$ copies for each $A\in\mathcal R$};

    \node[root] (u) at (0,0) {$u$};
    \node at (0,-0.7) {new root $u$};

    \node[leaf] (l1) at (2.9,0.95) {$\ell_1$};
    \node at (2.9,0) {$\vdots$};
    \node[leaf] (lLast) at (2.9,-0.95) {$\ell_{b-1}$};
    \node[align=center,font=\small] at (2.8,1.75) {$b-1$ later leaves};

    \node[vertex] (w) at (6.2,0) {$w$};
    \node at (6.2,-0.65) {external vertex};

    \draw (u1) -- (u);
    \draw (uL) -- (u);
    \draw (u) -- (l1);
    \draw (u) -- (lLast);
    \draw[thick] (u) -- (w)
        node[pos=0.8,above] {$\set{u,w}$};

    \draw[-{Latex[length=2mm]}] (-5.3,-2.35) -- (6.2,-2.35)
        node[midway,below]
        {initial coloring order};
\end{tikzpicture}
    \caption{The witness construction of
    \Cref{lem:lowerBound-realizable-palettes}. Shaded boxes represent
    earlier witnesses for a palette $A$, with roots $u_1,\dots,u_{n_A}$.
    The dashed boundary excludes the external vertex $w$.}
    \label{fig:predecessorWitnessConstruction}
\end{figure}

\subsection{Realizable later colors}
\label{sec:realizableLaterColors}
We next need later neighbors whose final colors are known before the
central node acts in Phase~II. A \emph{later witness} has the same form
as an earlier witness: a rooted tree joined to the external vertex $w$
by the single edge $\set{u,w}$. Its placement is \emph{mirrored relative
to $w$} in Phase~I: the entire earlier witness precedes $w$, while the
entire later witness follows $w$. The later root $u$ can therefore use
$w$'s announced palette, but chooses
its final color before $w$ in the reversed Phase~II order. An earlier
witness supplies $w$ with a palette for its Phase~I choice; a later
witness supplies it with a final color for its Phase~II choice.
This mirroring describes the order: merely reversing an earlier witness
need not give the required final color. The proof of
\Cref{lem:lowerBound-realizable-colors} constructs suitable later witnesses.

Define $\mathcal X\subseteq\mathcal C$
to be the set of colors $x$ with the
following property: for every $A\in\mathcal R$, there is a later witness
whose root outputs $x$ whenever its external vertex announces $A$.
This is a conditional guarantee: the witness does not force its external
vertex to announce $A$. When using the witness, we establish that
announcement separately from the external vertex's Phase~I counts.
The witness may depend on both $A$ and $x$. Its output is independent of
the external vertex's final color, which is chosen only afterward.
Here $x$ need not belong to $A$: the witness root chooses from its own
palette, whereas $A$ is the external vertex's palette.

\begin{lemma}[Realizable later colors]
    \label{lem:lowerBound-realizable-colors}
    The set $\mathcal X$ is nonempty. Moreover, for any nonnegative
    integer count vectors $(\vec{n},\vec{r})$ satisfying
    $n_A\geq1$ for $A\in\mathcal R$, $n_A=0$ for
    $A\in\calS\setminus\mathcal R$, and $r_x=0$ for
    $x\in\mathcal C\setminus\mathcal X$, the selected final color
    belongs to $\mathcal X$ as well.
\end{lemma}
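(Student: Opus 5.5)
The plan is to prove the second (closure) claim by a direct witness construction, mirroring \Cref{lem:lowerBound-realizable-palettes}. The construction will not use any element of $\mathcal X$ when $\vec r=\vec 0$, so nonemptiness then follows without circularity. Fix count vectors $(\vec n,\vec r)$ as in the statement, and let $y:=T_{\mathrm{II}}(\vec n,\vec r)$ be the selected final color. We must show $y\in\mathcal X$. Concretely, for every $A\in\mathcal R$ we build a later witness whose root outputs $y$ whenever its external vertex $w$ announces $A$.

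\textbf{Construction for a fixed $A\in\mathcal R$.} Create a new root $u$ and place it after $w$, so that $w$ is an earlier neighbor of $u$ announcing $A$. This is exactly where the hypothesis $n_A\geq1$ for all $A\in\mathcal R$ is used: $w$ plays the role of one of the $n_A$ earlier neighbors announcing $A$.
\begin{enumerate}
\item Attach to $u$ further earlier witnesses (available by the definition of $\mathcal R$): $n_A-1$ copies for $A$, and $n_B$ copies for each $B\in\mathcal R\setminus\set{A}$. Place all their vertices after $w$ but before $u$. This is allowed, because an earlier witness only has to precede its own external vertex, here $u$, and the whole new tree must follow $w$.
\item With $b:=\sum_x r_x$, the root's Phase~I counts are now exactly $(\vec n,b)$. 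Its palette is therefore $S_u=T_{\mathrm I}(\vec n,b)$, independently of $A$ and of all tie-breaking outside the witness.
\item If $b\geq1$, then \Cref{lem:lowerBound-realizable-palettes} gives $S_u\in\mathcal R$, since $\vec n$ is supported on $\mathcal R$.
\item For each color $x$ with $r_x>0$ (so $x\in\mathcal X$), the definition of $\mathcal X$ applied with the palette $S_u\in\mathcal R$ provides a later witness whose root outputs $x$ whenever its external vertex announces $S_u$. Attach $r_x$ disjoint copies with external vertex $u$, and place all their vertices after $u$ in Phase~I.
\end{enumerate}
Since $u$ does announce $S_u$, each attached later root outputs its color before $u$ acts in Phase~II. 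Hence the Phase~II counts at $u$ are exactly $(\vec n,\vec r)$, and $u$ outputs $y=T_{\mathrm{II}}(\vec n,\vec r)$.

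\textbf{Independence from $w$.} In Phase~II, $u$ acts before $w$, so $u$'s output does not depend on $w$'s final color. The only dependence on $w$ is its announcement $A$, which is the conditional hypothesis in the definition of $\mathcal X$. The tree meets the outside only through the edge $\set{u,w}$, and all its vertices follow $w$, so it is a valid later witness. Since the counts, and hence the output $y$, are the same for every $A\in\mathcal R$, this gives $y\in\mathcal X$.

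\textbf{Nonemptiness.} Take $n_A=1$ for $A\in\mathcal R$, $n_A=0$ otherwise, and $\vec r=\vec 0$. This uses $\mathcal R\neq\varnothing$ from \Cref{lem:lowerBound-realizable-palettes}. Step~4 is then vacuous, and with $b=0$ we never need $S_u\in\mathcal R$. So the same construction, using no element of $\mathcal X$, produces a color that lies in $\mathcal X$.

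The main obstacle is getting the ordering right. The earlier witnesses attached to $u$ must lie before $u$ yet after $w$. The later witnesses must see $u$'s palette already fixed, which forces the check $S_u\in\mathcal R$ and hence the $b\geq1$ versus $b=0$ case split. Finally, $w$ must be counted as an earlier neighbor announcing $A$, so that the root's counts, and therefore both deterministic decisions including ties, do not depend on which $A\in\mathcal R$ the external vertex announces.
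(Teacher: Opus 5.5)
Your proposal is correct and follows essentially the same route as the paper. The paper first realizes the full counts $(\vec n,\vec r)$ at a root and then replaces one of the $n_A\geq1$ earlier witnesses of type $A$ by the external vertex $w$. You instead build that post-replacement tree directly, with $n_A-1$ copies of the type-$A$ witness plus $w$; this is the same construction, and your nonemptiness case with $\vec r=\vec 0$ and $b=0$ matches the paper's.
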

\begin{proof}
    First attach one earlier witness for every $A\in\mathcal R$ to a
    root $u$, with no later neighbors. Let $x$ be the color chosen by
    $u$. For each $A\in\mathcal R$, take a fresh copy of this tree,
    remove the entire earlier witness of type $A$, and replace it by
    an external earlier neighbor $w$. Put $w$ before
    all remaining vertices; see
    \Cref{fig:laterColorWitnessConstruction}. If $w$ announces $A$,
    the counts at $u$ have not changed, so it still outputs $x$.
    This gives a later witness for $x$ for every external palette $A$, proving
    $x\in\mathcal X$. The root's palette need not belong to
    $\mathcal R$ in this step: $b=0$, so no later color witness needs
    to respond to that palette.

    For the closure claim, first set $b:=\sum_{x\in\mathcal C}r_x$
    and let $S$ be the palette selected by the Phase~I rule from the
    prescribed counts $(\vec{n},b)$. Thus $S$ is determined before we
    construct the tree. If $b>0$, then $S\in\mathcal R$ by
    \Cref{lem:lowerBound-realizable-palettes}.

    We now realize these counts at a new root $u$. Attach $n_A$
    earlier witnesses of each type $A\in\mathcal R$. If $b>0$, also
    attach $r_x$ later witnesses for each $x\in\mathcal X$, choosing
    the versions that output $x$ whenever their external vertex
    announces $S$. Such versions exist because $S\in\mathcal R$.
    Place all earlier witnesses before $u$ and all later witnesses
    after $u$, preserving their internal orders.

    Each earlier root announces its prescribed palette independently
    of $u$'s choices, so $u$ has exactly the earlier-palette counts
    $\vec{n}$. Each later witness contributes one later neighbor of
    $u$, giving exactly $\sum_x r_x=b$ later neighbors. The Phase~I
    rule uses only these counts: the internal structure, palettes,
    and final colors of the later witnesses do not enter this
    decision. Hence $u$ sees precisely the input $(\vec{n},b)$ used
    to define $S$ and, with the same deterministic tie-breaking rule,
    announces $S$. This establishes the condition required by the
    later witnesses, so they output exactly the prescribed colors
    before $u$ acts in Phase~II. If $b=0$, all $r_x$ are zero and no
    later witnesses are needed. In either case, $u$ has the specified
    Phase~II counts $(\vec{n},\vec{r})$. Denote its resulting output
    by $y$.

    For each $A\in\mathcal R$, start with a fresh copy of the resulting
    tree and replace one of the $n_A\geq1$ earlier witnesses by an
    external earlier vertex $w$, exactly as above.
    Conditional on $w$ announcing $A$, the root's degree and earlier
    palette counts are unchanged. Its palette, the outputs of its later witnesses, and
    its final color $y$ are consequently unchanged. These replacements
    give the later witnesses required for $y\in\mathcal X$.
\end{proof}

\begin{figure}[htbp]
    \centering
    \begin{tikzpicture}[
    vertex/.style={circle,draw,fill=white,minimum size=6mm,inner sep=0pt},
    root/.style={vertex,fill=black!10,thick},
    branch/.style={rounded corners,draw,fill=black!3},
    annotation/.style={align=center,font=\small}
]
    \node at (2,2.05) {(a) Before replacement};

    \draw[branch] (-0.6,0.25) rectangle (2.15,1.4);
    \node[annotation] at (0.4,0.8) {witness\\for $A$};
    \node[vertex] (a) at (1.7,0.8) {$a$};

    \draw[branch] (-0.6,-1.75) rectangle (2.15,-0.15);
    \node[annotation] at (0.4,-0.95) {other\\witnesses\\(if any)};
    \node[vertex] (p1) at (1.7,-0.55) {};
    \node[vertex] (p2) at (1.7,-1.35) {};

    \node[root] (uBefore) at (4.2,0.1) {$u$};
    \node[annotation] at (4.2,-0.75) {outputs $x$};
    \draw[thick] (a) -- (uBefore);
    \draw (p1) -- (uBefore);
    \draw (p2) -- (uBefore);

    \draw[-{Latex[length=2mm]}] (-0.5,-2.45) -- (4.7,-2.45)
        node[midway,below,annotation] {first sweep: witnesses before $u$};

    \draw[thick,-{Latex[length=2mm]}] (5,0.65) -- (6.8,0.65)
        node[midway,above,annotation] {replace};

    \begin{scope}[xshift=7.6cm]
        \node at (3.1,2.05) {(b) After replacement};
        \draw[dashed,rounded corners]
            (1,-2.05) rectangle (6.55,1.2);
        \node[anchor=south] at (3.6,1.2) {new later witness};

        \node[vertex] (w) at (0,0.8) {$w$};
        \node[annotation] at (0,1.45) {palette $S_w$};
        \node[annotation] at (0,0.1) {external};

        \draw[branch] (1.3,-1.75) rectangle (3.95,-0.15);
        \node[annotation] at (2.2,-0.95) {other\\witnesses\\(if any)};
        \node[vertex] (q1) at (3.5,-0.55) {};
        \node[vertex] (q2) at (3.5,-1.35) {};

        \node[root] (uAfter) at (5.5,0.1) {$u$};
        \node[annotation] at (5.5,-0.75) {outputs $x$\\if $S_w=A$};
        \draw[thick] (w) -- (uAfter);
        \draw (q1) -- (uAfter);
        \draw (q2) -- (uAfter);

        \draw[-{Latex[length=2mm]}] (-0.2,-2.45) -- (6,-2.45)
            node[midway,below,annotation] {first sweep: $w$ before all witness vertices};
    \end{scope}
\end{tikzpicture}
    \caption{Replacing one earlier witness by the external vertex $w$,
    shown for $b=0$. The construction does not force $w$'s palette:
    it guarantees that $u$ outputs $x$ if $S_w=A$, with no guarantee
    when $S_w\neq A$.
    The dashed boundary encloses the later witness; the closure step
    uses the same replacement with any later witnesses retained.}
    \label{fig:laterColorWitnessConstruction}
\end{figure}

\subsection{Effective colors and copying witnesses}
\label{sec:effectiveColorsAndCopying}
The same replacement idea works in the opposite direction. If a root
already chooses $x$ while one later neighbor supplies $x$, replacing
that neighbor by an external vertex of color $x$ preserves the root's
decision. We therefore identify local counts in which such a neighbor
is available.

For $S\in\mathcal R$, define its \emph{effective colors} $E(S)$ to be
the colors selected in Phase~II at some nonnegative integer count
vectors $(\vec{n},\vec{r})$ for which Phase~I, using
$b=\sum_{x\in\mathcal C}r_x$, selects $S$ and
\begin{equation}
    \label{eq:lowerBound-positive-counts}
    n_A\geq1\quad(A\in\mathcal R), \qquad n_A=0\quad(A\not\in\mathcal R),
    \qquad
    r_x\geq1\quad(x\in\mathcal X), \qquad r_x=0\quad(x\not\in\mathcal X).
\end{equation}
The set $E(S)$ may be a proper subset of $S$ or
even empty. By
\Cref{lem:lowerBound-realizable-colors}, $E(S)\subseteq S\cap\mathcal X$.
There exists at least one $S\in\mathcal R$ with $E(S)\neq\varnothing$.
To see this, set $n_A=1$ for every $A\in\mathcal R$ and $r_x=1$ for
every $x\in\mathcal X$, with all other counts zero. Let $S$ and $y$
be the palette and final color selected from these counts. Since
$\mathcal X\neq\varnothing$, we have $b=|\mathcal X|\geq1$, so
$S\in\mathcal R$ by \Cref{lem:lowerBound-realizable-palettes}.
The counts satisfy \Cref{eq:lowerBound-positive-counts}, and hence
$y\in E(S)$ by definition.

More generally, the construction in the proof of
\Cref{lem:lowerBound-realizable-colors} realizes any counts satisfying
\Cref{eq:lowerBound-positive-counts} at a root of a finite tree:
attach $n_A$ earlier palette witnesses of each type $A$ and $r_x$
later color witnesses of each type $x$, with the latter chosen for
the root's Phase~I palette. The two positivity conditions provide
the witnesses needed for the replacement arguments. The condition
$n_A\geq1$ ensures that an earlier witness announcing $A$ is available
to remove in the proof of \Cref{lem:lowerBound-realizable-colors}.
For the next lemma, if $x\in E(S)$, then $x\in\mathcal X$ and thus
$r_x\geq1$. Consequently, there is a later witness outputting $x$
whose entire tree can be removed and replaced by the external vertex.

\begin{lemma}[Earlier neighbors that copy a color]
    \label{lem:lowerBound-copying-witness}
    For every $S\in\mathcal R$ and $x\in E(S)$, there is an earlier
    witness whose root announces $S$ independently of the external
    vertex's choices and outputs $x$ whenever the external vertex
    outputs $x$.
\end{lemma}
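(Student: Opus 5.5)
Since $x\in E(S)$, the definition of effective colors supplies nonnegative integer count vectors $(\vec n,\vec r)$ satisfying \Cref{eq:lowerBound-positive-counts}. At these counts, Phase~I (with $b=\sum_y r_y$) selects $S$ and Phase~II selects $x$. We realize these counts at a root $u$ exactly as in the proof of \Cref{lem:lowerBound-realizable-colors}. For each $A\in\mathcal R$ we attach $n_A$ earlier witnesses of type $A$, placed before $u$. For each $y\in\mathcal X$ we attach $r_y$ later witnesses that output $y$ whenever their external vertex (here $u$) announces $S$; such versions exist because $S\in\mathcal R$ (note $b\geq|\mathcal X|\geq1$, since $\mathcal X\neq\varnothing$). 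These later witnesses are placed after $u$. By the argument of that lemma, $u$ announces $S$, every later root outputs its prescribed color, and $u$ outputs $x$.

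The key step is the replacement. Because $x\in E(S)\subseteq\mathcal X$, we have $r_x\geq1$. We remove the entire later witness of one root that outputs $x$, and in its place attach the external vertex $w$ by the edge $\set{u,w}$. In the order, $w$ takes that witness's position relative to the tree: after all tree vertices in Phase~I, hence before them in Phase~II. All tree vertices then precede $w$, so the resulting tree with root $u$ is an earlier witness in the sense of \Cref{sec:lowerBoundModel}.

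We then verify both properties. For Phase~I, $u$ still has the same earlier-palette counts $\vec n$, since the earlier witnesses are untouched and announce their palettes independently of everything later. It also still has exactly $b$ later neighbors, because $w$ replaces one later neighbor. Phase~I depends only on $(\vec n,b)$, so $u$ announces $S$ no matter what $w$ does. Consequently, the remaining later witnesses see $u$ announce $S$ and output their prescribed colors. These outputs are fixed before $u$ acts in Phase~II and do not depend on $w$, since their trees are disjoint from $w$. Now suppose $w$ outputs $x$; $w$ acts before $u$ in Phase~II because it is later in Phase~I. Then $u$'s later-color counts are again exactly $\vec r$, so $T_{\mathrm{II}}(\vec n,\vec r)=x$ is preserved, including ties.

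The main obstacle is bookkeeping rather than any calculation. We must make sure that the removed later witness was the only place where its interior interacted with the rest of the tree, and that the remaining later witnesses' conditional guarantees only require $u$ to announce $S$, which holds unconditionally. The second point holds because the Phase~I decision of $u$ is independent of $w$'s palette. We must also check that the tree stays finite and connected only through $\set{u,w}$, which is immediate from the construction.
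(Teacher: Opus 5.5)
Your proposal is correct and follows essentially the same route as the paper: realize the counts witnessing $x\in E(S)$ as in \Cref{lem:lowerBound-realizable-colors}, use $r_x\geq1$ to remove one later witness that outputs $x$, and put the external vertex $w$ after all remaining vertices. The Phase~I counts $(\vec n,b)$ are then unchanged, and the Phase~II counts are unchanged whenever $w$ outputs $x$. Your bookkeeping matches the paper's, including that $w$ must come after all tree vertices rather than only after $u$, and that the tree structure survives because the removed witness was attached by a single edge.
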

\begin{proof}
    Choose counts for which the root selects $S$ and then $x$, as in
    the definition of $E(S)$, and realize them using earlier palette
    witnesses and later color witnesses as in the proof of
    \Cref{lem:lowerBound-realizable-colors}.
    Because $r_x\geq1$, a later neighbor is the root of a witness
    that outputs $x$. Remove that entire witness and replace it by the
    external vertex $w$, placed after all remaining vertices.

    The earlier palette counts and the number of later neighbors are
    unchanged, so the root still announces $S$, independently of $w$.
    The remaining later witnesses therefore still output their
    prescribed colors. If $w$ outputs $x$, the root's later-color
    counts also equal their original values, so the root still
    outputs $x$. All remaining vertices precede $w$, making this an
    earlier witness. They form a tree because the removed witness was
    a branch attached by a single edge. The guarantee is conditional
    on $w$ choosing $x$; no claim is needed when $w$ chooses another
    color.
\end{proof}

\subsection{The obstruction and the tree construction}
\label{sec:obstructionAndTree}
We now connect the preceding constructions to \Cref{lemma:LowerBound}.
For masses $\alpha_A\geq0$, $A\in\mathcal R$, with all other masses zero,
we retain $\alpha$ and $\alpha_x$ and define $\gamma_x$ by
\begin{equation}
    \label{eq:lowerBound-copying-mass}
    \alpha=\sum_{A\in\mathcal R}\alpha_A,
    \qquad
    \alpha_x=\sum_{\substack{A\in\mathcal R\\x\in A}}\alpha_A,
    \qquad
    \gamma_x:=\sum_{\substack{A\in\mathcal R\\x\in E(A)}}\alpha_A.
\end{equation}
The quantity $\alpha_x$ counts all earlier palette incidences of $x$.
The quantity $\gamma_x$ counts only the earlier witnesses that the
preceding lemma can make copy $x$ if their external vertex chooses $x$.
Both are fractions of the degree of the vertex to which the witnesses
will be attached, and $\gamma_x\leq\alpha_x$ because $E(A)\subseteq A$.

\begin{lemma}[Obstruction for effective colors]
    \label{lem:lowerBound-effective-obstruction}
    If $0\leq\rho<1/\sqrt C$, there are positive rational masses
    $\alpha_A$, $A\in\mathcal R$, with $\alpha<1$ such that
    \begin{equation}
        \label{eq:lowerBound-effective-capacity}
        \sum_{x\in E(S)}\max\set{0,\rho-\gamma_x}<1-\alpha
        \qquad(S\in\mathcal R).
    \end{equation}
\end{lemma}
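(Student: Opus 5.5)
The plan is to reduce the statement to \Cref{lemma:LowerBound}, applied not to $\calS$ but to the family of effective-color sets, and then to perturb the resulting masses so that they become positive and rational. Let
\[
    \mathcal F:=\set{E(A): A\in\mathcal R,\ E(A)\neq\varnothing}.
\]
This is a family of nonempty subsets of $\mathcal C$. It is nonempty, because \Cref{sec:effectiveColorsAndCopying} exhibits some $S\in\mathcal R$ with $E(S)\neq\varnothing$. Since $|\mathcal C|=C$ and $C\rho^2<1$, \Cref{lemma:LowerBound} gives nonnegative masses $\beta_F$, $F\in\mathcal F$, with $\beta:=\sum_F\beta_F<1$. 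Writing $\beta_x:=\sum_{F\ni x}\beta_F$, these masses satisfy $\sum_{x\in F}\max\set{0,\rho-\beta_x}<1-\beta$ for every $F\in\mathcal F$.

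Next, I transfer these masses to $\mathcal R$. Several palettes $A\in\mathcal R$ may share the same effective set, so for each $F\in\mathcal F$ I fix one representative $A_F\in\mathcal R$ with $E(A_F)=F$. I then set $\alpha_{A_F}:=\beta_F$ and put $\alpha_A:=0$ for every other $A\in\mathcal R$. With these masses $\alpha=\beta<1$, and by \Cref{eq:lowerBound-copying-mass} we have $\gamma_x=\sum_{A:\,x\in E(A)}\alpha_A=\sum_{F\ni x}\beta_F=\beta_x$. It remains to check \Cref{eq:lowerBound-effective-capacity} for every $S\in\mathcal R$. If $E(S)=\varnothing$, the left-hand side is $0<1-\alpha$. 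Otherwise $E(S)\in\mathcal F$, and the inequality is exactly the one supplied by \Cref{lemma:LowerBound}. The values $\alpha_x$ (all palette incidences) play no role here; only the $\gamma_x$ enter.

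Finally, the lemma asks for masses that are strictly positive on all of $\mathcal R$ and rational. For each $S\in\mathcal R$, both sides of \Cref{eq:lowerBound-effective-capacity} are continuous functions of the finite vector $(\alpha_A)_{A\in\mathcal R}$, and the constraint $\alpha<1$ is an open condition. Since $\mathcal R$ is finite, the finitely many strict inequalities continue to hold on an open neighbourhood of the vector just constructed. Such a neighbourhood contains vectors all of whose coordinates are positive rationals, and any of them satisfies the statement.

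I do not expect a serious obstacle. The only points requiring care are (i) choosing a single representative per effective set, so that $\gamma_x$ coincides with $\beta_x$ instead of overcounting, and (ii) handling palettes with $E(S)=\varnothing$, for which the inequality holds trivially. The perturbation step is routine because every condition involved is strict.
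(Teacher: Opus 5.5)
Your proposal is correct and follows essentially the same route as the paper. Both arguments apply \Cref{lemma:LowerBound} to the family of distinct nonempty effective-color sets, put each resulting mass on one representative palette per effective set so that $\gamma_x$ equals the per-color mass, handle $E(S)=\varnothing$ via $\alpha<1$, and then perturb to positive rationals using strictness and continuity.
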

\begin{proof}
    Let $\mathcal F$ be the family of distinct nonempty sets among
    $\{E(S):S\in\mathcal R\}$. This family is nonempty by the
    preceding argument. Since $C\rho^2<1$, we can apply
    \Cref{lemma:LowerBound} to $\mathcal F$. It gives nonnegative
    masses $\mu_B$, $B\in\mathcal F$, with
    \[
        \mu:=\sum_{B\in\mathcal F}\mu_B<1,
        \qquad
        \mu_x:=\sum_{\substack{B\in\mathcal F\\x\in B}}\mu_B,
    \]
    such that
    \[
        \sum_{x\in B}\max\set{0,\rho-\mu_x}<1-\mu
        \qquad(B\in\mathcal F).
    \]

    To turn these masses on effective-color sets into masses on
    palettes, choose one representative $A_B\in\mathcal R$ with
    $E(A_B)=B$ for each $B\in\mathcal F$. Set
    $\alpha_{A_B}:=\mu_B$ and give every other palette mass zero.
    This preserves both the total mass and the mass contributing to
    each effective color:
    \[
        \alpha=\mu,
        \qquad
        \gamma_x
        =\sum_{\substack{A\in\mathcal R\\x\in E(A)}}\alpha_A
        =\sum_{\substack{B\in\mathcal F\\x\in B}}\mu_B
        =\mu_x.
    \]
    Now fix any $S\in\mathcal R$ with $E(S)\neq\varnothing$.
    Then $E(S)\in\mathcal F$, so applying the displayed inequality
    to $B=E(S)$ and substituting $\mu_x=\gamma_x$ and
    $\mu=\alpha$ gives \Cref{eq:lowerBound-effective-capacity}
    for this $S$. This also covers palettes that were not chosen as
    representatives: the inequality depends only on their effective
    set $E(S)$. If $E(S)=\varnothing$, the left-hand side is zero,
    and the inequality follows from $\alpha<1$.

    It remains to ensure that the masses are positive and rational,
    as required by the statement; the masses constructed so far may
    be zero or irrational. The finitely many inequalities in
    \Cref{eq:lowerBound-effective-capacity} and the bound $\alpha<1$
    are strict and depend continuously on the masses. They therefore
    remain valid after sufficiently small changes. We may thus
    replace the masses by sufficiently close positive rational
    numbers, obtaining all the claimed properties.
\end{proof}

\begin{proof}[Proof of \Cref{thm:2sweeplower}]
    Fix $0\leq\rho<1/\sqrt C$ and choose the masses from
    \Cref{lem:lowerBound-effective-obstruction}. For every
    possible palette $S\in\mathcal R$ at the central node, choose
    positive rational later-neighbor masses
    $(\delta_x^S)_{x\in\mathcal X}$ satisfying
    \begin{equation}
        \label{eq:lowerBound-later-masses}
        \sum_{x\in\mathcal X}\delta_x^S=1-\alpha,
        \qquad
        \delta_x^S>\max\set{0,\rho-\gamma_x}
        \quad(x\in E(S)).
    \end{equation}
    The superscript $S$ labels an allocation for that palette; each
    $\delta_x^S$ has the same meaning as the earlier $\delta_x$, namely
    the fraction of all neighbors that are later and finish in $x$.
    This is possible by \Cref{eq:lowerBound-effective-capacity}: the
    strict slack leaves room to give positive mass to every color in
    $\mathcal X$. Rational choices are possible because $1-\alpha$ is
    rational and the inequalities are strict. Since there are only
    finitely many pairs $(S,x)$ with $S\in\mathcal R$ and $x\in E(S)$,
    and at least one such pair, the constant
    \[
        \eps:=\min_{\substack{S\in\mathcal R\\x\in E(S)}}
            \bigl(\gamma_x+\delta_x^S-\rho\bigr)>0
    \]
    is well defined and satisfies
    \begin{equation}
        \label{eq:lowerBound-defect-gap}
        \gamma_x+\delta_x^S\geq\rho+\eps
        \qquad(S\in\mathcal R,\ x\in E(S)).
    \end{equation}

    Fix trees with the following guarantees, writing $u$ for each
    tree's root and $w$ for its external vertex:
    \begin{itemize}
        \item For every $A\in\mathcal R$, fix an \emph{earlier
        palette witness}: its root $u$ announces $A$ in Phase~I,
        independently of $w$'s choices. Such a tree exists by the
        definition of $\mathcal R$; no particular final color is
        required of its root.
        \item For every $A\in\mathcal R$ and $x\in E(A)$, fix an
        \emph{earlier copying witness} from
        \Cref{lem:lowerBound-copying-witness}: its root $u$ announces
        $A$ independently of $w$'s choices and outputs $x$ in
        Phase~II whenever $w$ outputs $x$.
        \item For every $A\in\mathcal R$ and $x\in\mathcal X$, fix a
        \emph{later color witness}: its root $u$ outputs $x$ in
        Phase~II whenever $w$ announces $A$ in Phase~I. Such a tree
        exists by the definition of $\mathcal X$, using the
        constructions of \Cref{lem:lowerBound-realizable-colors}.
        The tree may depend on both $A$ and $x$; its root's own
        palette is not prescribed.
    \end{itemize}
    Every vertex of an earlier witness precedes $w$ in Phase~I,
    whereas every vertex of a later witness follows $w$. Thus, in
    the reversed Phase~II order, later witnesses output their colors
    before $w$, while earlier copying witnesses act after $w$.

    Let $D$ be an arbitrarily large common multiple of the denominators
    of all the $\alpha_A$ and $\delta_x^S$. This will be the degree of
    the central node $v$. We first determine its decisions from the
    intended counts. Set $n_A:=D\alpha_A$ for all $A\in\mathcal R$ and
    $n_A=0$ otherwise, and let $b:=D(1-\alpha)$. Let $S$ be the palette
    selected from $(\vec{n},b)$, and hence
    $S\in\mathcal R$ by \Cref{lem:lowerBound-realizable-palettes}.
    Set $r_x:=D\delta_x^S$ for $x\in\mathcal X$ and $r_x=0$ otherwise,
    and let $x_v$ be the final color selected from
    $(\vec{n},\vec{r})$. These counts
    satisfy \Cref{eq:lowerBound-positive-counts}, hence $x_v\in E(S)$.

    We now realize these decisions at a central node $v$ of degree $D$.
    Attach the following witnesses, with disjoint interiors and $v$ as
    their external vertex (see also \Cref{fig:lowerBoundTreeConstruction}):
    \begin{itemize}
        \item For each $A\in\mathcal R$ with $x_v\notin E(A)$,
        attach $n_A$ copies of the fixed earlier palette witness
        announcing $A$.
        \item For each $A\in\mathcal R$ with $x_v\in E(A)$,
        attach $n_A$ copies of the fixed earlier copying witness
        announcing $A$ and outputting $x_v$ whenever $v$ outputs
        $x_v$.
        \item For each $x\in\mathcal X$, attach $r_x$ later
        witnesses outputting $x$ when their external vertex announces
        $S$.
    \end{itemize}
    Place every earlier witness before $v$ and every later witness
    after $v$, preserving their internal orders. Joining disjoint
    trees to $v$ by one edge each gives a finite simple tree, with
    \[
        \deg(v)=\sum_{A\in\mathcal R}n_A
            +\sum_{x\in\mathcal X}r_x=D.
    \]
    With $C$, $\rho$, $\calS$, and the selection functions
    $T_{\mathrm I},T_{\mathrm{II}}$ fixed, the witnesses chosen above
    form a finite collection of finite trees, fixed before $D$ varies.
    Their sizes are therefore bounded by a constant independent of $D$.
    Since the construction attaches exactly
    $\sum_A n_A+\sum_x r_x=D$ such witnesses, the resulting tree has
    $O(D)=O(\Delta)$ vertices.

    The initial proper coloring can also use a number of colors independent
    of $D$. For each of the finitely many fixed witness types, fix ranks that
    realize its prescribed internal order, and let $q_0$ be the maximum
    number of ranks needed. Reuse colors $1,\dots,q_0$ in every earlier
    witness copy, give $v$ color $q_0+1$, and reuse colors
    $q_0+2,\dots,2q_0+1$ in every later witness copy. The copies have disjoint
    interiors and no edges between them, so this coloring is proper. It
    preserves every comparison along an edge and hence every local count and
    decision in both phases. Thus the initial coloring uses at most
    $2q_0+1$ colors, independently of $D$.
    
    \begin{figure}[htbp]
        \centering
        \begin{tikzpicture}[
    vertex/.style={circle,draw,fill=white,minimum size=6mm,inner sep=0pt},
    central/.style={vertex,fill=black!15,thick,minimum size=8mm},
    witness/.style={rounded corners,draw,fill=black!3},
    annotation/.style={align=center,font=\small},
    sweepStep/.style={annotation,text width=4.5cm}
]
    \node[annotation,font=\small\bfseries] at (2.4,3.6)
        {Earlier witnesses};
    \node[annotation,font=\small\bfseries] at (12.6,3.6)
        {Later witnesses};

    \draw[witness] (0,0.9) rectangle (4.8,3.1);
    \node[annotation,text width=3.4cm] at (1.9,2)
        {Palette witness\\
        $n_A$ copies\\
        $A\in\mathcal R$, $x_v\notin E(A)$\\
        root announces $A$};
    \node[vertex] (paletteRoot) at (4.3,2) {$u$};

    \draw[witness] (0,-1.7) rectangle (4.8,0.5);
    \node[annotation,text width=3.4cm] at (1.9,-0.6)
        {Copying witness\\
        $n_A$ copies\\
        $A\in\mathcal R$, $x_v\in E(A)$\\
        root announces $A$\\
        copies $x_v$ after $v$};
    \node[vertex] (copyingRoot) at (4.3,-0.6) {$u$};

    \node[central] (v) at (7.5,0.7) {$v$};
    \node[annotation] at (7.5,1.8) {central vertex\\$\deg(v)=D$};
    \node[annotation] at (7.5,-0.5)
        {Phase I: $S$\\Phase II: $x_v$};

    \draw[witness] (10.2,-0.6) rectangle (15,2);
    \node[vertex] (colorRoot) at (10.7,0.7) {$u$};
    \node[annotation,text width=3.4cm] at (13.1,0.7)
        {Color witness\\
        $r_x$ copies, $x\in\mathcal X$\\
        root outputs $x$\\
        if $v$ announces $S$};

    \draw[thick] (paletteRoot) -- (v);
    \draw[thick] (copyingRoot) -- (v);
    \draw[thick] (v) -- (colorRoot);

    \draw[black!30] (0,-2.15) -- (15,-2.15);
    \node[sweepStep] at (2.4,-2.85)
        {\textbf{Phase I}\\earlier roots announce $A$};
    \node[sweepStep] at (7.5,-2.85)
        {$v$ sees $(\vec n,b)$\\and announces $S$};
    \node[sweepStep] at (12.6,-2.85)
        {later witnesses\\receive palette $S$};
    \draw[-{Latex[length=2mm]}] (4.75,-2.85) -- (5.15,-2.85);
    \draw[-{Latex[length=2mm]}] (9.85,-2.85) -- (10.25,-2.85);

    \node[sweepStep] at (2.4,-4.05)
        {\textbf{Phase II}\\earlier copying roots output $x_v$};
    \node[sweepStep] at (7.5,-4.05)
        {$v$ sees $(\vec n,\vec r)$\\and outputs $x_v$};
    \node[sweepStep] at (12.6,-4.05)
        {later roots output\\their prescribed colors $x$};
    \draw[-{Latex[length=2mm]}] (5.15,-4.05) -- (4.75,-4.05);
    \draw[-{Latex[length=2mm]}] (10.25,-4.05) -- (9.85,-4.05);
\end{tikzpicture}
        \caption{The tree construction for \Cref{thm:2sweeplower}.
        Each shaded box represents one witness tree, with its root
        $u$ shown and its other vertices suppressed; the indicated
        number of disjoint copies is attached to $v$. The two earlier-witness cases partition
        $\mathcal R$, so each palette contributes exactly $n_A$
        earlier neighbors. All vertices in the left witnesses precede
        $v$ in Phase~I, and all vertices in the right witnesses follow
        it.}
        \label{fig:lowerBoundTreeConstruction}
    \end{figure}

    In Phase~I, every earlier root announces its prescribed palette
    independently of $v$, so $v$ sees $(\vec{n},b)$ and announces
    $S$. The later roots consequently output their prescribed colors
    before $v$ acts in Phase~II. Thus $v$ sees
    $(\vec{n},\vec{r})$ and chooses the previously determined
    $x_v$. Finally, each earlier copying witness for $x_v$ outputs
    $x_v$. There is no circular dependency: the earlier roots'
    palettes are fixed before $v$ acts, whereas their final colors
    are chosen after $v$.

    Recall from \Cref{eq:NearEqualityOfDefectAndQuality} that $h_v(x_v)$
    and $r_v(x_v)$ count the earlier and later neighbors that finish
    in $x_v$. For every $A\in\mathcal R$ with $x_v\in E(A)$, the
    $n_A$ earlier copying roots output $x_v$. In addition, exactly
    $r_{x_v}$ later roots output $x_v$. Other earlier roots may also
    output $x_v$, so these contributions give a lower bound on the
    actual defect. Substituting $n_A=D\alpha_A$ and
    $r_{x_v}=D\delta_{x_v}^S$, and then using the definition of
    $\gamma_{x_v}$ in \Cref{eq:lowerBound-copying-mass}, gives
    \begin{align*}
        h_v(x_v)+r_v(x_v)
        &\geq\sum_{\substack{A\in\mathcal R\\x_v\in E(A)}}n_A+r_{x_v}\\
        &=\sum_{\substack{A\in\mathcal R\\x_v\in E(A)}}D\alpha_A
            +D\delta_{x_v}^S\\
        &=D\left(\sum_{\substack{A\in\mathcal R\\x_v\in E(A)}}\alpha_A
            +\delta_{x_v}^S\right)\\
        &=D\bigl(\gamma_{x_v}+\delta_{x_v}^S\bigr)\\
        &\geq(\rho+\eps)D,
    \end{align*}
    where the last inequality is \Cref{eq:lowerBound-defect-gap}.
    The internal degrees of the fixed witnesses, including the external
    edge, have a bound independent of $D$. For sufficiently large $D$, the
    central node therefore has maximum degree $\Delta=D$.

    Taking arbitrarily large multiples $D$ contradicts a guarantee of
    $\rho\deg(v)+o(\deg(v))$ as defined in
    \Cref{eq:defRelativeDefect}, since $\eps>0$ is fixed. This holds
    for every $\rho<1/\sqrt C$ and every fixed candidate family and
    pair of tie-breaking rules, proving the theorem.
\end{proof}

\section{Technical details of the coloring algorithms}
\label{sec:TechnicalDetails-Appendix}

\subsection{Proof of the directed defective coloring lemma}
\label{sec:directedDefectiveColoringProof}
\begin{proof}[Proof of \Cref{lemm:defColorBlackBox}]
    We give the directed version of the polynomial color reduction used in
    \cite{Kuhn2009} and Appendix~A of \cite{KawarabayashiS18}, including its
    iteration for nonconstant $\alpha$. Set
    $\eta:=\min\{\alpha,1/(2e)\}$.

    First consider any current $M$-coloring, which need not be proper, and
    an error allowance $0<\delta\leq\eta$. Put
    $h:=\max\{1,\log_{1/\delta}M\}$ and choose a power of two $s$ with
    $2h/\delta\leq s<4h/\delta$. Assign each old color $c$ a distinct
    polynomial $P_c$ over $\mathbb F_s$ of degree at most
    $k:=\lceil\log_s M\rceil-1$ (the case $M=1$ needs no reduction).
    There are at least $M$ such polynomials, and $k\leq h$.
    After learning its outneighbors' old colors, node $v$ chooses
    $t_v\in\mathbb F_s$ minimizing
    \[
        A_v(t):=\bigl|\{u\in N^+(v):c_u\ne c_v,
                    P_{c_u}(t)=P_{c_v}(t)\}\bigr|
    \]
    and takes the new color $(t_v,P_{c_v}(t_v))$. Fix all choices
    deterministically. Distinct polynomials agree at at most $k$ points,
    so, writing $b_v:=|\{u\in N^+(v):c_u\ne c_v\}|$, we have
    \[
        \sum_{t\in\mathbb F_s}A_v(t)\leq k b_v,
        \qquad A_v(t_v)\leq\frac{k}{s}b_v\leq\delta\beta_v.
    \]
    Every newly monochromatic outgoing arc is counted by $A_v(t_v)$:
    equal new colors require equal evaluation points as well as equal
    polynomial values. Previously monochromatic arcs contribute at most
    their previous number. Thus this step increases the outgoing defect
    at $v$ by at most $\delta\beta_v$ and uses at most
    \begin{equation}\label{eq:directedColorReduction}
        D\delta^{-2}\max\{1,\log_{1/\delta}M\}^2
    \end{equation}
    colors, for an absolute constant $D\geq16$. This argument is separate
    for each vertex and also applies when antiparallel arcs are present.

    We next allocate the error allowances. Let $a:=\eta/4$ and
    $K:=8\sqrt D$. If $q\leq(K/a)^2$, retain the initial proper coloring;
    it already has $O(\eta^{-2})$ colors and zero defect. Otherwise, let
    $x_i:=\ln^{(i)}q$, and let $T$ be the first index with $x_T\leq K/a$.
    For $i=1,\dots,T$, apply the reduction with
    $\delta_i:=a/2^{T-i}$. Here $T\geq1$ and $x_i>1$ for $i\leq T$.
    The minimality of $T$ and $2\ln x\leq x$ give
    \[
        \frac{K}{\delta_i}
        =\frac{K}{a}2^{T-i}
        <2^{T-i}x_{T-1}\leq x_{i-1}.
    \]
    If $M_i$ is the palette-size bound after step $i$, induction using
    \Cref{eq:directedColorReduction} yields
    \[
        M_i\leq16D\delta_i^{-2}x_i^2.
    \]
    Indeed, the first step satisfies this bound directly. For $i>1$,
    the induction hypothesis, $\delta_{i-1}=\delta_i/2$, and
    $x_{i-1}\geq K/\delta_i$ imply $M_{i-1}\leq x_{i-1}^4$.
    Since $\ln(1/\delta_i)\geq1$ and $x_i>1$, the maximum in
    \Cref{eq:directedColorReduction} is then at most $4x_i$.
    Consequently, $M_T=O(\eta^{-4})$. One final reduction with
    $\delta=\eta/2$ gives $O(\eta^{-2})$ colors, because
    $\log_{2/\eta}M_T=O(1)$. If a palette has only one color, the
    remaining steps can simply keep it.

    The initial defect is zero, and the total increase at each node is
    at most
    \[
        \left(\sum_{i=1}^T\delta_i+\frac{\eta}{2}\right)\beta_v
        \leq\eta\beta_v\leq\alpha\beta_v.
    \]
    Also, $\eta^{-2}=O(\alpha^{-2})$ for $0<\alpha\leq1$.
    There are $T+1=O(\log^*q)$ reductions. Each takes one round to
    exchange the current colors; the polynomial assignment and evaluation
    are local computations. The bounds above show that every color and
    evaluation pair has an encoding of
    $O(\log q+\log(1/\eta)+T)=O(\log q+\log(1/\alpha))$ bits.
    We start directly from the supplied proper coloring.
\end{proof}

\subsection{Proof of the strict-slack color-space reduction lemma}
\label{sec:strictSlackColorSpaceReductionProof}
\begin{proof}[Proof of \Cref{lemm:colorSpaceReductionBox}]
    Put $K:=\kappa(\lambda)$ and $k:=\lceil\log_\lambda C\rceil$.
    Fix a common depth-$k$ partition tree of the ambient color space,
    with at most $\lambda$ children per part and at most $\lambda^{h}$
    colors in a part with $h$ levels remaining. Empty ambient positions
    may be padded conceptually, but are never added to any input list.
    We prove by induction on the number of remaining levels that total
    budget strictly greater than $K^k\beta_v$ suffices.

    For $k=1$, the ambient space and all lists have size at most
    $\lambda$. The assumed algorithm applies, by monotonicity of
    $\kappa$, with round and message bounds $T(\lambda)$ and $M(\lambda)$.

    For $k>1$, let $L_{v,i}$ be the intersection of $L_v$ with child
    part $i$, and omit every empty intersection. For each nonempty
    intersection, define
    \[
        W_{v,i}:=\sum_{x\in L_{v,i}}(d_v(x)+1),
        \qquad t_{v,i}:=\frac{W_{v,i}}{K^{k-1}},
        \qquad b_{v,i}:=\lceil t_{v,i}\rceil-1.
    \]
    Since $W_{v,i}>0$ and $K>0$, each auxiliary defect $b_{v,i}$ is
    a nonnegative integer. The auxiliary list consists of the indices
    of the nonempty child parts, with total budget
    \[
        \sum_i(b_{v,i}+1)
        =\sum_i\lceil t_{v,i}\rceil
        \geq\sum_i t_{v,i}
        =\frac{\sum_{x\in L_v}(d_v(x)+1)}{K^{k-1}}
        >K\beta_v.
    \]
    Apply the assumed algorithm to these at most $\lambda$ indices.
    If node $v$ chooses index $i$, its outdegree $\beta'_v$ in the
    subgraph induced by that index satisfies
    \[
        \beta'_v\leq b_{v,i}<t_{v,i},
        \qquad\text{and hence}\qquad
        W_{v,i}>K^{k-1}\beta'_v.
    \]
    This is the strict induction hypothesis for the selected child
    part, whose list at $v$ remains nonempty. The index classes recurse
    independently and in parallel. The argument also covers
    $\beta_v=0$ without division by the outdegree.

    Each of the $k$ levels invokes the base algorithm once. Between
    levels, nodes exchange their chosen child index to identify the
    induced subgraphs; the common labels lie in $\{1,\ldots,\lambda\}$
    and require $O(\log\lambda)$ bits in one additional round. This
    proves the claimed round and message bounds. In our application
    $\lambda=4$, so this overhead is constant.
\end{proof}

\subsection{Relative defect of single buckets}
\label{sec:singleBucketAnalysis}
In this section, we consider the linear program in \Cref{eq:LPbukets} in which all $C$ colors are placed into one bucket instead of two. Note that for our final result \Cref{thm:2bucketasymptotic} we will combine these results with the two-bucket cases.

\begin{lemma}
    \label{lemma:SingleBucketAnalysis}
    If $C$ is a square number, the single-bucket relaxed LP has value
    $1/\sqrt C$, and the algorithm has relative defect
    \begin{align*}
        \rho \leq \frac{1}{\sqrt{C}}.
    \end{align*}
    Otherwise, the single bucket approach leads to a relative defect of 
    \begin{align*}
        \rho \leq \min \left\{\frac{1}{\lfloor \sqrt{C} \rfloor}, \frac{\lceil \sqrt{C} \rceil}{C} \right\}.
    \end{align*}
\end{lemma}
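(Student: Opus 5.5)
We bound the relaxed LP of \Cref{eq:generalQualityLP} for the single-bucket family $\mathcal S=\binom{\mathcal C}{s}$ with a fixed palette size $s\leq C$. Its optimum upper-bounds the relative defect of every execution by \Cref{eq:NearEqualityOfDefectAndQuality}. By \Cref{lemma:EqualAlphas}, which we take as already established, the LP has a symmetric optimum. So we set $\alpha_S=a/\binom Cs$ for every $S$, with total mass $a\in[0,1]$. The same computation as in \Cref{eq:qualityS1} then shows that every palette has the same quality:
\[
  Q(a)=\frac{1-a}{s}+\frac{s}{C}\,a .
\]
Each color lies in a $\binom{C-1}{s-1}/\binom Cs=s/C$ fraction of the palettes, so $\sum_{x\in S}\alpha_x=s\cdot(s/C)\,a$.

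The LP value is therefore $\max_{a\in[0,1]}Q(a)$. Since $Q$ is affine in $a$, the maximum is attained at an endpoint, giving $\max\{1/s,\,s/C\}$. We can also avoid relying on symmetry. For any feasible $\vec\alpha$, averaging $Q_{\vec\alpha}(S)$ uniformly over all $S\in\binom{\mathcal C}{s}$ gives exactly $Q(a)$ with $a=\sum_S\alpha_S$. The minimum over palettes is at most this average, which gives the same upper bound directly and makes the argument self-contained.

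Now specialize $s$. If $C=s^2$, then $1/s=s/C=1/\sqrt C$, so the LP value is $1/\sqrt C$. The upper bound holds by the averaging argument, and $a=0$ attains it. For nonsquare $C$ we take the better of the two designs. With $s=\lfloor\sqrt C\rfloor$ we have $s^2<C$, hence $s/C<1/s$ and the bound is $1/\lfloor\sqrt C\rfloor$. With $s=\lceil\sqrt C\rceil$ we have $s^2>C$, so the bound is $\lceil\sqrt C\rceil/C$. Choosing the smaller of the two yields the stated minimum.

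There is no real obstacle in this argument. The only points needing care are the incidence count $\binom{C-1}{s-1}$ and the requirement $s\leq C$, which holds in both cases because $\lceil\sqrt C\rceil\leq C$ for $C\geq1$.
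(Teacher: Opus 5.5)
Your proof is correct and follows the paper's argument: reduce to the bucket-symmetric single-bucket LP, note that the common quality $(1-a)/s+(s/C)a$ is affine in $a$, and take the endpoint $a=0$ or $a=1$ according to the sign of $s/C-1/s$. Your averaging remark only inlines the double-counting from \Cref{lemma:EqualAlphas}, which makes the upper bound self-contained without changing the route.
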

\begin{proof}
    Let us first look at the case where $C$ is a square number and hence the objective solution of the LP simplifies to
    \begin{align*}
        z = \frac{1 - a_1 }{\sqrt{C}} +  \frac{\sqrt{C}}{C} \cdot a_1 = \frac{1}{\sqrt{C}}.
    \end{align*}
    By the single-bucket analogue of \Cref{lemma:optimalityOfLP}, this LP
    value upper-bounds the relative defect of the algorithm. Together with
    \Cref{thm:2sweeplower}, this guarantee is optimal.
    If $C$ is not a square, we put all colors into one bucket. Let us first consider the case $C_1 = C$. 
    \begin{align*}
        z = \frac{1 - a_1}{s_1} + \frac{s_1}{C} \cdot a_1 = \frac{1}{s_1} + a_1 \left(\frac{s_1}{C} - \frac{1}{s_1} \right)
    \end{align*}
    Note that because $s_1^2 < C$ implies $s_1/C < 1/s_1$, the factor of
    $a_1$ is negative. Thus the relaxed LP is maximized at
    $a_1 = 0$ and has value $1/s_1$. In the case of the second bucket,
    we get
    \begin{align*}
        z = \frac{1 - a_2}{s_2} + \frac{s_2}{C} \cdot a_2 = \frac{1}{s_2} + a_2 \left(\frac{s_2}{C} - \frac{1}{s_2} \right).
    \end{align*}
    Note that because $s_2^2 > C$ implies $s_2/C > 1/s_2$, the factor of
    $a_2$ is positive. Thus the relaxed LP is maximized at
    $a_2 = 1$ and has value $s_2/C$. Taking the better of the two
    single-bucket constructions concludes the proof.
\end{proof}

\subsection{Auxiliary inequalities and bucket symmetry}
\label{sec:usefulLemmas}
\begin{lemma}
    \label{lemma:technicalLemmaZStar}
    For every nonsquare integer $C\geq2$, the value of $z_{2B}^*$ defined in
    \Cref{lemma:RealValuesC1AndC2} satisfies
    \begin{align*}
        z_{2B}^*
        \leq \frac{1}{\sqrt{C-1/4}}
        \leq \frac{1}{C^{1/2}}+\frac{1}{8C^{3/2}}
            +\frac{1}{32C^{5/2}}.
    \end{align*}
\end{lemma}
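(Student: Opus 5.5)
The plan is to write $s:=s_1=\lfloor\sqrt C\rfloor$, so that $s_2=s+1$ because $C$ is nonsquare, and $s^2+1\leq C\leq s^2+2s$. By \Cref{lemma:RealValuesC1AndC2},
\[
    z_{2B}^*=\frac{2s+1}{C+s^2+s}.
\]
The first inequality asks for $z_{2B}^*\leq 1/\sqrt{C-1/4}$. Both sides are positive, so it suffices to show $(2s+1)^2(C-\tfrac14)\leq (C+s^2+s)^2$. Since $(2s+1)^2=4(s^2+s)+1$, write $m:=s^2+s$, so that $(2s+1)^2=4m+1$. The claim then becomes
\[
    (4m+1)\Bigl(C-\frac14\Bigr)\leq (C+m)^2 .
\]
Expanding, the left side is $4mC+C-m-\tfrac14$ and the right side is $C^2+2mC+m^2$. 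The difference is
\[
    \text{RHS}-\text{LHS}=C^2-2mC+m^2-C+m+\tfrac14=(C-m)^2-(C-m)+\tfrac14=\Bigl(C-m-\tfrac12\Bigr)^2\geq 0 .
\]
This is a perfect square, so the first inequality holds for every $C$ in fact, not only nonsquare ones. The only thing to check carefully is this algebraic identity, and it is the step I expect to be the crux. It also explains the constant $1/4$, since the bound is tight when $C=m+\tfrac12$.

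For the second inequality, write $1/\sqrt{C-1/4}=C^{-1/2}(1-x)^{-1/2}$ with $x:=1/(4C)\in(0,1/8]$, using $C\geq2$. It then suffices to show
\[
    (1-x)^{-1/2}\leq 1+\frac{x}{2}+\frac{x^2}{2}.
\]
Substituting $x=1/(4C)$ turns the right side into $1+\frac{1}{8C}+\frac{1}{32C^2}$, and multiplying by $C^{-1/2}$ gives exactly the stated bound. The Taylor series is
\[
    (1-x)^{-1/2}=\sum_k \binom{2k}{k}\Bigl(\frac{x}{4}\Bigr)^k=1+\frac{x}{2}+\frac{3x^2}{8}+\frac{5x^3}{16}+\cdots .
\]
Its tail beyond the linear term can be bounded using $\binom{2k}{k}/4^k\leq 1/2$ for $k\geq1$. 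This gives
\[
    \sum_{k\geq2}\binom{2k}{k}\frac{x^k}{4^k}\leq \frac12\cdot\frac{x^2}{1-x}\leq \frac{x^2}{2}\cdot\frac87 .
\]
That is slightly too crude. Instead I would use $\binom{2k}{k}/4^k\leq 3/8$ for $k\geq2$, which gives a tail of at most $\tfrac38 x^2/(1-x)\leq \tfrac38\cdot\tfrac87 x^2=\tfrac37x^2\leq \tfrac12x^2$.

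An alternative for the second inequality avoids series altogether. Square both sides and verify the polynomial inequality
\[
    (1-x)\Bigl(1+\frac{x}{2}+\frac{x^2}{2}\Bigr)^2\geq 1 \qquad\text{on } [0,1/8].
\]
Expanding, the product is $1+\tfrac{x^2}{4}-\tfrac{x^3}{4}-\tfrac{x^4}{4}-\tfrac{x^5}{4}\cdot$(small). Its $x^2$ coefficient is $\tfrac14>0$, and for $x\leq 1/8$ this positive term dominates the higher-order negative terms. Either route completes the proof; the coefficient bookkeeping in this expansion is routine.
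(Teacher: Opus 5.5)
Your proof is correct. For the first inequality it agrees with the paper. The paper sets $u:=s_1+\tfrac12$, writes $z_{2B}^*=2u/(u^2+C-\tfrac14)$, and applies AM--GM. Since $u^2=m+\tfrac14$, your identity $(C+m)^2-(4m+1)(C-\tfrac14)=(C-m-\tfrac12)^2$ is exactly that AM--GM step written out.

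For the second inequality your main route differs from the paper's. You expand $(1-x)^{-1/2}=\sum_k\binom{2k}{k}(x/4)^k$ and bound the tail by $\tfrac38x^2/(1-x)\leq\tfrac37x^2$. To justify the bound $\tfrac38$ for all $k\geq2$, you should state that $\binom{2k}{k}/4^k$ is nonincreasing, since consecutive terms have ratio $(2k+1)/(2k+2)$. This route shows where the coefficients come from and that there is slack ($\tfrac37<\tfrac12$), at the cost of invoking the binomial series.

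The paper's proof is your alternative. It squares both sides, verifies the finite identity $(1-r)(1+\tfrac r2+\tfrac{r^2}{2})^2-1=\tfrac{r^2}{4}(1-3r-r^2-r^3)$, and checks $1-3r-r^2-r^3\geq 311/512$ for $r\leq1/8$. This is purely finite algebra with no series.

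Your sketch of that alternative misstates the expansion. The product is $1+\tfrac{x^2}{4}-\tfrac{3x^3}{4}-\tfrac{x^4}{4}-\tfrac{x^5}{4}$: the cubic coefficient is $-\tfrac34$, not $-\tfrac14$, and the quintic term is exactly $-\tfrac{x^5}{4}$. With the correct coefficients the positive $x^2$ term still dominates on $[0,1/8]$. So the slip does not affect your argument, which is already complete via the series route.
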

\begin{proof}
    Put $u:=s_1+1/2$. By \Cref{lemma:RealValuesC1AndC2},
    \begin{align*}
        z_{2B}^*=\frac{2u}{u^2+C-1/4}.
    \end{align*}
    The arithmetic--geometric mean inequality gives
    \begin{align*}
        \frac{u^2+C-\frac14}{2}
        \geq u\sqrt{C-\frac14},
    \end{align*}
    and hence $z_{2B}^*\leq1/\sqrt{C-1/4}$.

    For the second inequality, let $r:=1/(4C)\leq1/8$. A direct
    calculation shows that
    \begin{align*}
        (1-r)\left(1+\frac r2+\frac{r^2}{2}\right)^2-1
        =\frac{r^2}{4}(1-3r-r^2-r^3)\geq0,
    \end{align*}
    where the last inequality follows from
    $1-3r-r^2-r^3\geq311/512>0$. Since both sides are positive,
    \begin{align*}
        \frac{1}{\sqrt{1-r}}
        \leq1+\frac r2+\frac{r^2}{2}.
    \end{align*}
    Multiplying by $1/\sqrt C$ proves
    \begin{align*}
        \frac{1}{\sqrt{C-1/4}}
        \leq\frac{1}{C^{1/2}}+\frac{1}{8C^{3/2}}
            +\frac{1}{32C^{5/2}}.
    \end{align*}
\end{proof}

The following symmetry fact holds for any number of buckets, although we only
use its two-bucket case.
\begin{lemma}
    \label{lemma:EqualAlphas}
    Let $m\geq1$, let
    \[
        \mathcal C=\mathcal C_1\mathbin{\dot\cup}\cdots
            \mathbin{\dot\cup}\mathcal C_m,
        \qquad
        \mathcal B_i:=\binom{\mathcal C_i}{s_i},
        \qquad
        \mathcal S:=\bigcup_{i=1}^m\mathcal B_i,
    \]
    where $1\leq s_i\leq|\mathcal C_i|$.  The general relaxed LP in
    \Cref{eq:generalQualityLP} (where colors are partitioned into buckets) has an optimal solution $\vec\alpha$ such that
    \[
        \alpha_S=\alpha_T
        \qquad\text{whenever }S,T\in\mathcal B_i
        \text{ for some }i\in\set{1,\dots,m}.
    \]
    \textit{Remark: For our purpose, we only use this statement for the two bucket ($m=2$) case.}
\end{lemma}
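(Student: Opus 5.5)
The plan is a symmetrization argument exploiting concavity of the LP objective. The objective of \Cref{eq:generalQualityLP} is $g(\vec\alpha):=\min_{S\in\mathcal S}Q_{\vec\alpha}(S)$. By \Cref{eq:qualityOfSet}, each $Q_{\vec\alpha}(S)$ is affine in $\vec\alpha$, so $g$ is a minimum of affine functions and hence concave. The feasible region $\{\vec\alpha\geq0,\ \sum_S\alpha_S\leq1\}$ is a compact convex simplex, so an optimal $\vec\alpha^\circ$ exists.

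Next I introduce the symmetry group. Let $G:=\mathrm{Sym}(\mathcal C_1)\times\cdots\times\mathrm{Sym}(\mathcal C_m)$, acting on colors by permuting within each bucket. Each $\pi\in G$ maps $\mathcal B_i$ bijectively onto itself, so it permutes $\mathcal S$, and it acts on vectors by $(\pi\vec\alpha)_S:=\alpha_{\pi^{-1}(S)}$. This action preserves the simplex. It also preserves the quality: from the second form of \Cref{eq:qualityOfSet}, using $|\pi S\cap A|=|S\cap\pi^{-1}A|$ and $|\pi S|=|S|$, we get $Q_{\pi\vec\alpha}(\pi S)=Q_{\vec\alpha}(S)$. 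Since $\pi$ permutes $\mathcal S$, taking the minimum over $S$ gives $g(\pi\vec\alpha)=g(\vec\alpha)$. Consequently every $\pi\vec\alpha^\circ$ is also optimal.

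Now I average over the group, setting $\bar\alpha:=\frac1{|G|}\sum_{\pi\in G}\pi\vec\alpha^\circ$. This vector is feasible by convexity, and concavity of $g$ gives
\[
g(\bar\alpha)\geq\frac1{|G|}\sum_{\pi\in G}g(\pi\vec\alpha^\circ)=g(\vec\alpha^\circ),
\]
so $\bar\alpha$ is optimal. Moreover, $\bar\alpha$ is $G$-invariant. Since $\mathrm{Sym}(\mathcal C_i)$ acts transitively on the $s_i$-subsets of $\mathcal C_i$, for any $S,T\in\mathcal B_i$ there is a $\pi\in G$ with $\pi S=T$. Invariance then yields $\bar\alpha_S=\bar\alpha_T$, which is exactly the claimed property.

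The only step requiring care is the invariance identity $Q_{\pi\vec\alpha}(\pi S)=Q_{\vec\alpha}(S)$, together with the fact that $G$ maps each bucket family to itself. The latter holds because the buckets are disjoint and $G$ never moves colors between buckets. Everything else is standard.
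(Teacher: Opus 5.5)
Your proof is correct, but it certifies optimality by a different mechanism than the paper. The paper uses no symmetry group and no concavity. It takes an optimal $\vec\alpha^{\mathrm{opt}}$ and directly replaces it by the bucket-uniform vector $\bar\alpha_T=a_i/|\mathcal B_i|$ for $T\in\mathcal B_i$. It then uses the double-counting identity $\sum_{S\in\mathcal B_i}|S\cap T|=|\mathcal B_i|\,s_i^2/|\mathcal C_i|$ to show that, for every $R\in\mathcal B_i$, the new quality $Q_{\bar{\vec\alpha}}(R)$ equals the average of $Q_{\vec\alpha^{\mathrm{opt}}}(S)$ over $S\in\mathcal B_i$, and hence is at least $z^{\mathrm{opt}}$.

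Your group average $\frac1{|G|}\sum_{\pi}\pi\vec\alpha^\circ$ is exactly this bucket-uniform vector, since the $G$-orbit of any $S\in\mathcal B_i$ is all of $\mathcal B_i$. So the two proofs differ only in how they show the averaged vector is still optimal. You replace the explicit intersection count by the invariance $Q_{\pi\vec\alpha}(\pi S)=Q_{\vec\alpha}(S)$ together with Jensen's inequality for the concave function $\min_S Q_{\vec\alpha}(S)$.

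Your route is more conceptual and more general. It works for any candidate family that is invariant under a group acting transitively on each class of palettes, and it needs no combinatorial counting. The paper's computation is more hands-on, but it also produces the explicit symmetric qualities $\frac1{s_i}\bigl(1-\sum_j a_j+\frac{s_i^2}{|\mathcal C_i|}a_i\bigr)$. These are exactly the expressions that feed the reduced two-variable LP used afterwards.
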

\begin{proof}
    Let $(\vec\alpha^{\,\mathrm{opt}},z^{\mathrm{opt}})$ be an optimal LP
    solution.  For each bucket, let
    \[
        a_i:=\sum_{T\in\mathcal B_i}\alpha_T^{\mathrm{opt}},
    \]
    and distribute this total mass uniformly within the bucket by defining
    \[
        \bar\alpha_T:=\frac{a_i}{|\mathcal B_i|}
        \qquad(T\in\mathcal B_i).
    \]
    This preserves nonnegativity and the total $\alpha$-mass.

    Fix a bucket $i$.  For every $T\in\mathcal B_i$, double counting the pairs
    $(x,S)$ with $x\in T\cap S$ and $S\in\mathcal B_i$ gives
    \[
        \frac1{|\mathcal B_i|}\sum_{S\in\mathcal B_i}|S\cap T|
        =\frac{s_i\binom{|\mathcal C_i|-1}{s_i-1}}
            {\binom{|\mathcal C_i|}{s_i}}
        =\frac{s_i^2}{|\mathcal C_i|}.
    \]
    If $T\in\mathcal B_j$ for $j\neq i$, then $S\cap T=\varnothing$ for every
    $S\in\mathcal B_i$.  Therefore, averaging the original qualities and
    using \Cref{eq:qualityOfSet} gives
    \begin{align*}
        \frac1{|\mathcal B_i|}\sum_{S\in\mathcal B_i}
            Q_{\vec\alpha^{\,\mathrm{opt}}}(S)
        &=\frac1{s_i}\left(
            1+\sum_{T\in\mathcal S}
            \left(\frac1{|\mathcal B_i|}
                \sum_{S\in\mathcal B_i}|S\cap T|-1\right)
            \alpha_T^{\mathrm{opt}}\right)\\
        &=\frac1{s_i}\left(
            1+\left(\frac{s_i^2}{|\mathcal C_i|}-1\right)a_i
            -\sum_{j\neq i}a_j\right)\\
        &=\frac1{s_i}\left(1-\sum_{j=1}^m a_j
            +\frac{s_i^2}{|\mathcal C_i|}a_i\right).
    \end{align*}
    On the other hand, for each fixed $R\in\mathcal B_i$, the same counting
    identity gives
    \[
        \sum_{T\in\mathcal B_i}|R\cap T|
        =|\mathcal B_i|\frac{s_i^2}{|\mathcal C_i|}.
    \]
    Since $\bar\alpha_T=a_i/|\mathcal B_i|$ within bucket $i$ and palettes
    from all other buckets are disjoint from $R$, it follows that
    \begin{align}
        Q_{\bar{\vec\alpha}}(R)
        =\frac1{s_i}\left(1-\sum_{j=1}^m a_j
            +\frac{a_i}{|\mathcal B_i|}
                \sum_{T\in\mathcal B_i}|R\cap T|\right)
        =\frac1{|\mathcal B_i|}\sum_{S\in\mathcal B_i}
            Q_{\vec\alpha^{\,\mathrm{opt}}}(S)
            \label{eq:SetRWithNewAlpha}
    \end{align}

    By \Cref{eq:SetRWithNewAlpha}, every quality after uniformization
    is an average of original qualities, each at least $z^{\mathrm{opt}}$.
    Hence $(\bar{\vec\alpha},z^{\mathrm{opt}})$ remains feasible and is
    optimal. The equality of masses within each bucket holds by construction.
\end{proof}

\subsection{Optimal partitioning of the color space into two buckets}
\label{sec:full-table}
\Cref{tab:defect_table_complete} extends \Cref{tab:short_defect_data} and lists the optimal choices within the analyzed family for the displayed values of $C$, computed by \Cref{claim:optimalIntegerBuckets}. A single bucket is used when $C$ is a square or differs from a square by one.

\begingroup
\small
\renewcommand{\arraystretch}{1.4}
\begin{longtable}{|c|c|c|c|c|c|}
\caption{Best relaxed-LP bounds within the analyzed Single-Bucket and Two-Bucket Two-Sweep constructions, with bucket choices computed by \Cref{claim:optimalIntegerBuckets}.}\label{tab:defect_table_complete}\\
\hline
\textbf{C} & \textbf{LP bound} & $\approx$ & \textbf{$C_1$} & \textbf{$C_2$} & \textbf{single bucket} \\
\hline
\endfirsthead

\multicolumn{6}{c}{{\tablename\ \thetable{} -- Continued from previous page}} \\
\hline
\textbf{C} & \textbf{LP bound} & $\approx$ & \textbf{$C_1$} & \textbf{$C_2$} & \textbf{single bucket} \\
\hline
\endhead

\hline \multicolumn{6}{|r|}{{\textit{Continued on next page}}} \\ \hline
\endfoot

\endlastfoot

\textbf{3} & $2/3$ & 0.66667 & 0 & 3 & * \\ \hline
\textbf{4} & $1/2$ & 0.50000 & 4 & 0 & * \\ \hline
\textbf{5} & $1/2$ & 0.50000 & 5 & 0 & * \\ \hline
\textbf{6} & $3/7$ & 0.42857 & 2 & 4 &  \\ \hline
\textbf{7} & $9/23$ & 0.39130 & 2 & 5 &  \\ \hline
\textbf{8} & $3/8$ & 0.37500 & 0 & 8 & * \\ \hline
\textbf{9} & $1/3$ & 0.33333 & 9 & 0 & * \\ \hline
\textbf{10} & $1/3$ & 0.33333 & 10 & 0 & * \\ \hline
\textbf{11} & $4/13$ & 0.30769 & 6 & 5 &  \\ \hline
\textbf{12} & $12/41$ & 0.29268 & 5 & 7 &  \\ \hline
\textbf{13} & $16/57$ & 0.28070 & 4 & 9 &  \\ \hline
\textbf{14} & $16/59$ & 0.27119 & 3 & 11 &  \\ \hline
\textbf{15} & $4/15$ & 0.26667 & 0 & 15 & * \\ \hline
\textbf{16} & $1/4$ & 0.25000 & 16 & 0 & * \\ \hline
\textbf{17} & $1/4$ & 0.25000 & 17 & 0 & * \\ \hline
\textbf{18} & $5/21$ & 0.23810 & 12 & 6 &  \\ \hline
\textbf{19} & $25/108$ & 0.23148 & 11 & 8 &  \\ \hline
\textbf{20} & $25/111$ & 0.22523 & 9 & 11 &  \\ \hline
\textbf{21} & $20/91$ & 0.21978 & 7 & 14 &  \\ \hline
\textbf{22} & $20/93$ & 0.21505 & 5 & 17 &  \\ \hline
\textbf{23} & $25/119$ & 0.21008 & 4 & 19 &  \\ \hline
\textbf{24} & $5/24$ & 0.20833 & 0 & 24 & * \\ \hline
\textbf{25} & $1/5$ & 0.20000 & 25 & 0 & * \\ \hline
\textbf{26} & $1/5$ & 0.20000 & 26 & 0 & * \\ \hline
\textbf{27} & $6/31$ & 0.19355 & 20 & 7 &  \\ \hline
\textbf{28} & $15/79$ & 0.18987 & 18 & 10 &  \\ \hline
\textbf{29} & $36/193$ & 0.18653 & 16 & 13 &  \\ \hline
\textbf{30} & $9/49$ & 0.18367 & 14 & 16 &  \\ \hline
\textbf{31} & $15/83$ & 0.18072 & 11 & 20 &  \\ \hline
\textbf{32} & $30/169$ & 0.17751 & 9 & 23 &  \\ \hline
\textbf{33} & $18/103$ & 0.17476 & 7 & 26 &  \\ \hline
\textbf{34} & $36/209$ & 0.17225 & 5 & 29 &  \\ \hline
\textbf{35} & $6/35$ & 0.17143 & 0 & 35 & * \\ \hline
\textbf{36} & $1/6$ & 0.16667 & 36 & 0 & * \\ \hline
\textbf{37} & $1/6$ & 0.16667 & 37 & 0 & * \\ \hline
\textbf{38} & $7/43$ & 0.16279 & 30 & 8 &  \\ \hline
\textbf{39} & $49/305$ & 0.16066 & 28 & 11 &  \\ \hline
\textbf{40} & $49/309$ & 0.15858 & 25 & 15 &  \\ \hline
\textbf{41} & $21/134$ & 0.15672 & 22 & 19 &  \\ \hline
\textbf{42} & $42/271$ & 0.15498 & 19 & 23 &  \\ \hline
\textbf{43} & $49/320$ & 0.15313 & 17 & 26 &  \\ \hline
\textbf{44} & $49/324$ & 0.15123 & 14 & 30 &  \\ \hline
\textbf{45} & $42/281$ & 0.14947 & 11 & 34 &  \\ \hline
\textbf{46} & $21/142$ & 0.14789 & 8 & 38 &  \\ \hline
\textbf{47} & $49/335$ & 0.14627 & 6 & 41 &  \\ \hline
\textbf{48} & $7/48$ & 0.14583 & 0 & 48 & * \\ \hline
\textbf{49} & $1/7$ & 0.14286 & 49 & 0 & * \\ \hline
\textbf{50} & $1/7$ & 0.14286 & 50 & 0 & * \\ \hline
\textbf{51} & $8/57$ & 0.14035 & 42 & 9 &  \\ \hline
\textbf{52} & $56/403$ & 0.13896 & 39 & 13 &  \\ \hline
\textbf{53} & $64/465$ & 0.13763 & 36 & 17 &  \\ \hline
\textbf{54} & $64/469$ & 0.13646 & 33 & 21 &  \\ \hline
\textbf{55} & $28/207$ & 0.13527 & 29 & 26 &  \\ \hline
\textbf{56} & $28/209$ & 0.13397 & 26 & 30 &  \\ \hline
\textbf{57} & $32/241$ & 0.13278 & 23 & 34 &  \\ \hline
\textbf{58} & $32/243$ & 0.13169 & 20 & 38 &  \\ \hline
\textbf{59} & $56/429$ & 0.13054 & 16 & 43 &  \\ \hline
\textbf{60} & $56/433$ & 0.12933 & 13 & 47 &  \\ \hline
\textbf{61} & $64/499$ & 0.12826 & 10 & 51 &  \\ \hline
\textbf{62} & $64/503$ & 0.12724 & 7 & 55 &  \\ \hline
\textbf{63} & $8/63$ & 0.12698 & 0 & 63 & * \\ \hline
\textbf{64} & $1/8$ & 0.12500 & 64 & 0 & * \\ \hline
\end{longtable}
\endgroup

\end{document}